\documentclass[12pt]{article}

\usepackage{mystyle}
\usepackage[paper=letterpaper,margin=1in]{geometry}
\usepackage{lmodern}
\usepackage{enumerate}
\usepackage[inline,shortlabels]{enumitem}
\usepackage{amsthm}
\usepackage{times}
\usepackage[comma, numbers, sort&compress, sectionbib]{natbib}
\usepackage[displaymath,floats,textmath,graphics,sections]{preview}

\theoremstyle{definition} \newtheorem{lemma}{Lemma}
\theoremstyle{definition} 
\theoremstyle{definition} \newtheorem{theorem}{Theorem}

\bibliographystyle{plainnat}

\newcommand{\bbR}{{\mathbb{R}}}
\newcommand{\bbZ}{{\mathbb{Z}}}

\newcommand{\abs}[1]{|#1|}
\DeclareMathOperator{\poly}{poly}

\newcommand{\calM}{{\mathcal{M}}}
\newcommand{\N}{N}

\newcommand{\nil}{0}
\newcommand{\inil}{\nil}
\newcommand{\jnil}{\nil}

\newcommand{\lexp}[1]{\exp_\lambda(#1)}

\newcommand{\ConfOrFull}[2]{#2}
\newcommand{\QEDWrap}{}

\title{Group Strategyproof Pareto-Stable Marriage with Indifferences
via the Generalized Assignment Game%
\thanks{
  Department of Computer Science,
  University of Texas at Austin,
  2317 Speedway, Stop D9500,
  Austin, Texas 78712--1757.
  Email: \{onur,\,geocklam,\,plaxton\}@cs.utexas.edu.
  This research was supported by NSF Grant CCF--1217980.}
}

\author{Nevzat Onur Domani\c{c}
  \and Chi-Kit Lam
  \and C. Gregory Plaxton
}

\date{October 2017}

\begin{document}

\begin{titlepage}
\maketitle
\thispagestyle{empty}

\begin{abstract}
  We study the variant of the stable marriage problem in which the
  preferences of the agents are allowed to include indifferences.
  We present a mechanism for producing Pareto-stable matchings
  in stable marriage markets with indifferences
  that is group strategyproof for one side of the market.
  Our key technique involves modeling the stable marriage market
  as a generalized assignment game.
  We also show that our mechanism can be implemented efficiently.
  These results can be extended to the college admissions problem
  with indifferences.
\end{abstract}

\end{titlepage}

\section{Introduction}

The stable marriage problem was first introduced by Gale and Shapley~\cite{GS62}.
The stable marriage market involves a set of men and women,
where each agent has ordinal preferences over the agents of
the opposite sex. The goal is to find a disjoint set of
man-woman pairs, called a \emph{matching}, such that no other
man-woman pair prefers each other to their partners in the matching.
Such matchings are said to be \emph{stable}.
When preferences are strict, a unique man-optimal stable
matching exists and can be computed by the man-proposing
deferred acceptance algorithm of Gale and Shapley~\cite{GS62}.
A mechanism is said to be \emph{group strategyproof for the men}
if no coalition of men can be simultaneously matched to strictly
preferred partners by misrepresenting their preferences.
Dubins and Freedman~\cite{DF81} show that the
mechanism that produces man-optimal matchings is group strategyproof
for the men when preferences are strict.
In our work, we focus on group strategyproofness for the men, since no stable mechanism is strategyproof for both men and women~\cite{Rot82}.

We remark that the notion of group strategyproofness used here assumes no
side payments within the coalition of men.
It is known that group strategyproofness
for the men is impossible for the stable marriage problem with strict
preferences when side payments are allowed~\cite[Chap.~4]{roth+s:match}.
This notion of group strategyproofness is also different from
strong group strategyproofness, in which at least one
man in the coalition gets matched to a strictly preferred partner
while the other men in the coalition get matched to weakly preferred
partners. It is known that strong group strategyproofness for the men is
impossible for the stable marriage problem with strict
preferences~\cite[attributed to Gale]{DF81}.

Indifferences in the preferences of agents arise naturally in real-world applications such as school choice~\cite{APR09,EE08,EE15}.
For the marriage problem with indifferences,
Sotomayor~\cite{Sot11} argues that Pareto-stability is an appropriate solution
concept. A matching is said to be \emph{weakly stable} if
no man-woman pair strictly prefers each other to their partners
in the matching. A matching is said to be \emph{Pareto-optimal} if
there is no other matching that is strictly preferred by some agent and
weakly preferred by all agents.
If a matching is both weakly stable and
Pareto-optimal, it is said to be \emph{Pareto-stable}.

Weakly stable matchings, unlike strongly stable
or super-stable matchings~\cite{Irving1994},
always exist. However, not all weakly stable matchings
are Pareto-optimal~\cite{Sot11}.
Pareto-stable matchings can be obtained by applying successive
Pareto-improve\-ments to weakly stable matchings.
Erdil and Ergin~\cite{EE08,EE15} show that this procedure can be carried out
efficiently. Pareto-stable matchings also exist and can be
computed in strongly polynomial time for
many-to-many matchings \cite{Che12} and multi-unit
matchings \cite{CG10}. Instead of using the characterization
of Pareto-improvement chains and cycles, Kamiyama~\cite{Kam14} gives
another efficient algorithm for many-to-many matchings
based on rank-maximal matchings.
However, none of these mechanisms addresses strategyproofness.

We remark that the notion of Pareto-optimality here
is different from \emph{man-Pareto-optimality},
which only takes into account the preferences of the men.
It is known that man-Pareto-optimality is not compatible
with strategyproofness for the stable marriage problem
with indifferences \cite{EE08,Kes10}.
The notion of Pareto-optimality here is also different from
\emph{Pareto-optimality in expected utility}, which permits
Pareto-domination by non-pure outcomes. A result
of Zhou~\cite{Zho90} implies that Pareto-optimality in expected utility
is not compatible with strategyproofness for the stable
marriage problem with indifferences.

Until recently,
it was not known whether a strategyproof Pareto-stable
mechanism exists.
In our recent workshop paper~\cite{DLP17},
we present a generalization of the deferred acceptance
mechanism that is Pareto-stable and strategyproof for the men.
If the market has $n$ agents,
our implementation of this mechanism runs in $O(n^4)$ time,
matching the time bound of the algorithm of 
Erdil and Ergin~\cite{EE08,EE15}\footnote{
The algorithm of Erdil and Ergin
proceeds in two phases.
In the first phase, ties are broken arbitrarily
and the deferred acceptance algorithm is used to
obtain a weakly stable matching.
In the second phase, a sequence of Pareto-improvements
are applied until a Pareto-stable matching is reached.
In App.~A in the full version of~\cite{DLP17}, 
we show that this algorithm does not provide
a strategyproof mechanism.}.
The proof of strategyproofness relies on reasoning about a certain
threshold concept in the stable marriage market,
and this approach seems difficult to extend to
address group strategyproofness.

In this paper, we introduce a new technique useful for
investigating incentive compatibility for coalitions of men.
We present a Pareto-stable mechanism for
the stable marriage problem with indifferences that
is provably group strategyproof for the men,
by modeling the stable marriage market
as an appropriate form of the generalized assignment game.
In Sect.~\ref{sec:implementation} and App.~\ref{app:gener-da-alg}
and~\ref{app:equivalence},
we show that this mechanism coincides with the
generalization of the deferred acceptance mechanism
presented in~\cite{DLP17}. Thus we obtain an
$O(n^4)$-time group strategyproof Pareto-stable mechanism.

\paragraph{The generalized assignment game.}

The assignment game, introduced by Shapley and Shubik~\cite{SS71},
involves a two-sided matching market with monetary transfer
in which agents have unit-slope linear utility functions.
This model has been generalized to allow agents to have continuous, invertible,
and increasing utility functions~\cite{CK81,DG85,Qui84}.
Some models that generalize both the assignment game
and the stable marriage problems have also been developed,
but their models are not concerned with the strategic behavior of agents~\cite{EK00,Sotomayor00}.
The formulation of the generalized assignment game
in this paper follows the presentation of Demange and Gale~\cite{DG85}.

In their paper, Demange and Gale establish various
elegant properties of the generalized assignment game, such as the
lattice property and the existence of one-sided optimal outcomes.
(One-sided optimality or man-optimality is a stronger notion than
one-sided Pareto-optimality or man-Pareto-optimality.)  These
properties are known to hold for the stable marriage market in the
case of strict preferences~\cite[attributed to Conway]{Knuth76}, but fail in the case of weak preferences~\cite[Chap.~2]{roth+s:match}.
Given the similarities between stable marriage markets and
generalized assignment games, it is natural to ask whether stable
marriage markets can be modeled as generalized assignment games.
Demange and Gale discuss this question and state that ``the model
of [Gale and Shapley] is not a special case of our model''.  The basic
obstacle is that it is unclear how to model an agent's preferences
within the framework of a generalized assignment game:
on the one hand, even though ordinal preferences can be converted
into numeric utility values, such preferences are expressed
in a manner that is independent of any monetary transfer;
on the other hand, the framework demands
that there is an amount of money that makes
an agent indifferent between any two agents on the other side
of the market.

In Sect.~\ref{sec:market}, we review key concepts in the work of
Demange and Gale, and introduce the \emph{tiered-slope market} as a
special form of the generalized assignment game in which the slopes of
the utility functions are powers of a large fixed number.  Then, in
Sect.~\ref{sec:marriage}, we describe our approach for converting a
stable marriage market with indifferences into an
associated tiered-slope market.
While these are both two-sided markets that involve the same
set of agents, the utilities achieved under an outcome
in the associated tiered-slope market may not be equal
to the utilities under a corresponding solution in the
stable marriage market.
Nevertheless, we are able to
establish useful relationships between certain sets of solutions to
these two markets.

Our first such result, Theorem~\ref{thm:optimality}, shows that
Pareto-stability in the stable marriage market with indifferences
follows from stability in the associated tiered-slope market, even
though it does not follow from weak stability in the stable marriage
market with indifferences.  This can be seen as a partial analogue to
the case of strict preferences, in which stability in the stable
marriage market implies Pareto-stability~\cite{GS62}.  This also demonstrates
that, in addition to using the deferred acceptance procedure to solve
the generalized assignment game~\cite{CK81}, we can use the
generalized assignment game to solve the stable marriage problem with
indifferences.

In Lemma~\ref{lem:slackness}, we establish that the
utility achieved by any man in a man-optimal solution to the
associated tiered-slope market uniquely determines the tier of
preference to which that man is matched in the stable
marriage market with indifferences.
Another consequence of this lemma is that
any matched man in a man-optimal outcome of the associated
tiered-slope market receives at least one unit of money
from his partner.
We can then deduce that if
a man strictly prefers his partner to a woman,
then the woman has to offer a large amount
of money in order for the man to be indifferent between
her offer and that of his partner.
Since individual rationality prevents
any woman from offering such a large amount of money,
this explains how we overcome the obstacle
of any man being matched with a less preferred woman
in exchange for a sufficiently large payment.

A key result established by Demange and Gale is that the
man-optimal mechanism is group strategyproof for the men.
Using this result and
Lemma~\ref{lem:slackness}, we are able to show in
Theorem~\ref{thm:strategyproofness} that group strategyproofness
for the men in
the stable marriage market with indifferences is achieved by
man-optimality in the associated tiered-slope market, even though it
is incompatible with man-Pareto-optimality in the stable marriage
market with indifferences \cite{EE08,Kes10}.  This can be seen as a
partial analogue to the case of strict preferences, in which
man-optimality implies group strategyproofness \cite{DF81}.

\paragraph{Extending to the college admissions problem.}

We also consider the settings of incomplete preference lists
and one-to-many matchings, in which efficient Pareto-stable
mechanisms are known to exist
\cite{Che12,CG10,EE08,EE15,Kam14}.
Preference lists are incomplete when an agent
declares another agent of the opposite sex to be unacceptable.
Our mechanism is able to support such incomplete preference lists
through an appropriate choice of the reserve utilities of the agents
in the associated tiered-slope market.
In fact, our mechanism also supports indifference between being
unmatched and being matched to some partner.

The one-to-many variant of the stable marriage problem with indifferences
is the college admissions problem with indifferences.
In this model, students and colleges play the roles of men and
women, respectively, and colleges are allowed to be matched with
multiple students, up to their capacities.
We provide the formal definition of the model in App.~\ref{app:college}.
By a simple reduction from college admissions markets to stable marriage markets,
our mechanism is group strategyproof for the students\footnote{A stable
mechanism can be strategyproof only for the side having unit demand,
namely the students~\cite{Rot85}.} and produces
a Pareto-stable matching in polynomial time.

\paragraph{Organization of this paper.}

In Sect.~\ref{sec:market}, we review the generalized assignment game
and define the tiered-slope market.
In Sect.~\ref{sec:marriage}, we introduce the tiered-slope
markets associated with the stable marriage markets with indifferences,
and use them to obtain a group strategyproof, Pareto-stable mechanism.
In Sect.~\ref{sec:implementation} and App.~\ref{app:gener-da-alg}
and~\ref{app:equivalence}, we discuss efficient implementations of the mechanism
and its relationship with the generalization of
the deferred acceptance algorithm presented in~\cite{DLP17}.

\section{Tiered-Slope Market} \label{sec:market}

The generalized assignment game studied by Demange and Gale~\cite{DG85}
involves two disjoint sets $I$ and $J$ of agents, which we call
\emph{men} and \emph{women} respectively. We assume that the sets
$I$ and $J$ do not contain the element $\nil$, which we use to denote
being unmatched.
For each man $i \in I$ and woman $j \in J$, the compensation function
$f_{i, j}(u_i)$ represents the compensation that man $i$ needs to receive
in order to attain utility $u_i$ when he is matched to woman $j$.
Similarly, for each man $i \in I$ and woman $j \in J$, the compensation
function $g_{i, j}(v_j)$ represents the compensation that woman $j$ needs
to receive in order to attain utility $v_j$ when she is matched to man $i$.
Moreover, each man $i \in I$ has a reserve utility $r_i$
and each woman $j \in J$ has a reserve utility $s_j$.

In this paper, we assume that the compensation functions are of the form
\begin{equation*}
f_{i, j}(u_i) = u_i \lambda^{-a_{i, j}}
\qquad \text{and} \qquad
g_{i, j}(v_j) = v_j - (b_{i, j} \N + \pi_i)
\end{equation*}
and the reserve utilities are of the form
\begin{equation*}
r_i = \pi_i \lambda^{a_{i, \jnil}}
\qquad \text{and} \qquad
s_j = b_{\inil, j} \N,
\end{equation*}
where
\begin{equation*}
\pi \in \bbZ^I; \quad
N \in \bbZ; \quad
\lambda \in \bbZ; \quad
a \in \bbZ^{I \times (J \cup \{\jnil\})}; \quad
b \in \bbZ^{(I \cup \{\inil\}) \times J}
\end{equation*}
such that
\begin{equation*}
N > \max_{i \in I} \pi_i \geq \min_{i \in I} \pi_i \geq 1
\end{equation*}
and
\begin{equation*}
\lambda
\geq \max_{(i, j) \in (I \cup \{\inil\}) \times J} (b_{i, j} + 1) \N
\geq \min_{(i, j) \in (I \cup \{\inil\}) \times J} (b_{i, j} + 1) \N
\geq \N.
\end{equation*}
We denote this \emph{tiered-slope market}
as $\calM = (I, J, \pi, N, \lambda, a, b)$.
When $a_{i, j} = 0$ for every man $i \in I$ and
woman $j \in J \cup \{\jnil\}$,
this becomes a \emph{unit-slope market}
$(I, J, \pi, N, \lambda, 0, b)$.
Notice that the compensation functions in a unit-slope market
coincide with those in the assignment game \cite{SS71}
where buyer $j \in J$ has a valuation
of $b_{i, j} \N + \pi_i$ on house $i \in I$.
For better readability, we write $\lexp{\xi}$
to denote $\lambda^\xi$.

A \emph{matching} is a function $\mu \colon I \to J \cup \{0\}$ such that for
any woman $j \in J$, we have $\mu(i) = j$ for at most one man $i \in I$.
Given a matching $\mu$ and a woman $j \in J$, we denote
\begin{equation*}
\mu(j) =
\begin{cases}
i & \text{if $\mu(i) = j$}  \\
0 & \text{if  there is no man $i \in I$ such that $\mu(i) = j$}
\end{cases}
\end{equation*}
An \emph{outcome} is a triple $(\mu, u, v)$, where $\mu$ is a matching,
$u \in \bbR^I$ is the utility vector of the men,
and $v \in \bbR^J$ is the utility vector of the women.
An outcome $(\mu, u, v)$ is \emph{feasible} if the following
conditions hold for every man $i \in I$ and woman $j \in J$.
\begin{enumerate}
\item If $\mu(i) = j$, then $f_{i, j}(u_i) + g_{i, j}(v_j) \leq 0$.
\item If $\mu(i) = \jnil$, then $u_i = r_i$.
\item If $\mu(j) = \inil$, then $v_j = s_j$.
\end{enumerate}
A feasible outcome $(\mu, u, v)$ is \emph{individually rational} if
$u_i \geq r_i$ and $v_j \geq s_j$
for every man $i \in I$ and woman $j \in J$.
An individually rational outcome $(\mu, u, v)$ is \emph{stable} if
$f_{i, j}(u_i) + g_{i, j}(v_j) \geq 0$
for every man $i \in I$ and woman $j \in J$.

A stable outcome $(\mu, u, v)$ is \emph{man-optimal} if for any stable
outcome $(\mu', u', v')$ we have $u_i \geq u'_i$ for every man $i \in I$.
It has been shown that man-optimal outcomes always
exist \cite[Property 2]{DG85}.
Theorem~\ref{thm:manipulation}
below provides a useful group strategyproofness result
for man-optimal outcomes.

\begin{theorem} \label{thm:manipulation}
Let $(\mu, u, v)$ and $(\mu', u', v')$ be man-optimal outcomes of
tiered-slope markets $(I, J, \pi, \allowbreak N, \lambda, a, b)$ and
$(I, J, \pi, N, \lambda, a', b)$, respectively.
If $a \neq a'$, then there exists a man $i_0 \in I$ and
a woman $j_0 \in J \cup \{0\}$ with $a_{i_0, j_0} \neq a'_{i_0, j_0}$
such that 
$u_{i_0} \geq u'_{i_0} \lexp{a_{i_0, \mu'(i_0)} - a'_{i_0, \mu'(i_0)}}$.
\end{theorem}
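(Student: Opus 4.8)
The plan is to recognize the statement as the Demange--Gale group strategyproofness theorem for the man-optimal mechanism, specialized to the tiered-slope family and re-expressed through the change $a \mapsto a'$. Passing from the market $(I,J,\pi,\N,\lambda,a,b)$ to $(I,J,\pi,\N,\lambda,a',b)$ alters only the men's compensation functions $f_{i,j}$ and, through $r_i = \pi_i \lexp{a_{i,\jnil}}$, the men's reserve utilities, while the women's functions $g_{i,j}$ and reserves $s_j$, together with $\pi$, $\N$, $\lambda$, and $b$, are untouched. First I would set $D = \{\, i \in I : a_{i,j} \neq a'_{i,j} \text{ for some } j \in J \cup \{\jnil\} \,\}$, which is nonempty since $a \neq a'$; this is precisely the coalition of men whose reported compensation functions (and possibly reserves) differ, with every man outside $D$ reporting truthfully. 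Since both parameter tuples satisfy the defining inequalities of a tiered-slope market, each $f'_{i,j}(u_i) = u_i \lexp{-a'_{i,j}}$ is continuous, strictly increasing, and invertible, so the manipulation is admissible and, by \cite[Property~2]{DG85}, both markets possess the man-optimal outcomes $(\mu,u,v)$ and $(\mu',u',v')$.

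The crux is to compute the \emph{true} utility that a man $i \in D$ derives from the manipulated outcome and to check that it equals $u'_i \lexp{a_{i,\mu'(i)} - a'_{i,\mu'(i)}}$. If $\mu'(i) = j \in J$, then feasibility together with stability in the manipulated market forces $f'_{i,j}(u'_i) + g_{i,j}(v'_j) = 0$, so the compensation man $i$ actually receives is $c_i = -g_{i,j}(v'_j) = f'_{i,j}(u'_i) = u'_i \lexp{-a'_{i,j}}$; evaluating his genuine compensation function gives true utility $f_{i,j}^{-1}(c_i) = c_i \lexp{a_{i,j}} = u'_i \lexp{a_{i,j} - a'_{i,j}}$. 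The unmatched case $\mu'(i) = \jnil$ is handled by the reserve utilities: his true utility is $r_i = \pi_i \lexp{a_{i,\jnil}}$ while feasibility gives $u'_i = \pi_i \lexp{a'_{i,\jnil}}$, so again the true utility is $u'_i \lexp{a_{i,\jnil} - a'_{i,\jnil}}$. In every case the true utility of man $i$ under the manipulation is $u'_i \lexp{a_{i,\mu'(i)} - a'_{i,\mu'(i)}}$, whereas under truthful reporting his utility is simply $u_i$.

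With this conversion in hand, the conclusion follows from the Demange--Gale group strategyproofness of the man-optimal mechanism~\cite{DG85}: no coalition of men can jointly misreport so that every member strictly increases his true utility. Applying this to the coalition $D$ yields a man $i_0 \in D$ whose true utility does not strictly increase, that is, $u_{i_0} \geq u'_{i_0} \lexp{a_{i_0,\mu'(i_0)} - a'_{i_0,\mu'(i_0)}}$; and $i_0 \in D$ supplies, by definition, a woman $j_0 \in J \cup \{\jnil\}$ with $a_{i_0,j_0} \neq a'_{i_0,j_0}$, as required. I expect the main obstacle to be the faithful reduction rather than any single computation: one must verify that reporting $a'$ in place of $a$ really is a legitimate coalition manipulation in the Demange--Gale model (in particular that perturbing the reserve utilities through $a_{i,\jnil}$ is permitted and that the manipulated market remains well-posed), and one must treat the unmatched column $\jnil$ uniformly with the matched columns so that the single inequality covers both cases. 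A secondary point worth checking is that a man $i_0 \in D$ whose altered entry is not the column $\mu'(i_0)$ still satisfies the claimed bound, which it does because the exponent $a_{i_0,\mu'(i_0)} - a'_{i_0,\mu'(i_0)}$ then vanishes and the inequality reduces to $u_{i_0} \geq u'_{i_0}$.
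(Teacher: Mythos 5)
Your proposal is correct and matches the paper's proof, which likewise deduces the theorem directly from Theorem~2 of Demange and Gale and observes that $u'_{i_0} \lexp{a_{i_0, \mu'(i_0)} - a'_{i_0, \mu'(i_0)}}$ is the true utility of man $i_0$ under $\mu'$, in both the matched case (with compensation $u'_{i_0} \lexp{-a'_{i_0, \mu'(i_0)}}$) and the unmatched case. You merely spell out the tightness and reserve-utility computations that the paper compresses into a single remark.
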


\begin{proof}
This follows directly from \cite[Theorem 2]{DG85},
which establishes group strategyproofness for the men
in the generalized assignment game with no side payments.
Notice that the value \allowbreak
$u'_{i_0} \lexp{a_{i_0, \mu'(i_0)} - a'_{i_0, \mu'(i_0)}}$
is the true utility of man $i_0$ under matching $\mu'$ as defined
in their paper, both in the case of being matched to
$\mu'(i_0) \neq \jnil$ with compensation
$u'_{i_0} \lexp{-a'_{i_0, \mu'(i_0)}}$
and in the case of being unmatched.
\QEDWrap \end{proof}

\section{Stable Marriage with Indifferences} \label{sec:marriage}

The stable marriage market involves a set $I$ of men
and a set $J$ of women. We assume that the sets
$I$ and $J$ are disjoint and do not contain the element $\nil$,
which we use to denote being unmatched.
The preference relation of each man $i \in I$
is specified by a binary relation $\succeq_i$ over $J \cup \{\jnil\}$
that satisfies transitivity and totality. To allow indifferences,
the preference relation is not required to satisfy anti-symmetry.
Similarly, the preference relation of each woman $j \in J$
is specified by a binary relation $\succeq_j$ over $I \cup \{\inil\}$
that satisfies transitivity and totality.
We denote this \emph{stable marriage market} as $(I, J,
(\succeq_i)_{i \in I}, (\succeq_j)_{j \in J})$.

A \emph{matching} is a function $\mu \colon I \to J \cup \{\jnil\}$ such that
for any woman $j \in J$, we have $\mu(i) = j$ for at most one man $i \in I$.
Given a matching $\mu$ and a woman $j \in J$, we denote
\begin{equation*}
\mu(j) =
\begin{cases}
i & \text{if $\mu(i) = j$}  \\
\inil & \text{if  there is no man $i \in I$ such that $\mu(i) = j$}
\end{cases}
\end{equation*}
A matching $\mu$ is \emph{individually rational} if
$j \succeq_i \jnil$ and $i \succeq_j \inil$
for every man $i \in I$ and woman $j \in J$ such that $\mu(i) = j$.
An individually rational matching $\mu$ is \emph{weakly stable} if
for any man $i \in I$ and woman $j \in J$, either
$\mu(i) \succeq_i j$ or $\mu(j) \succeq_j i$.
(Otherwise, such a man $i$ and woman $j$ form a
\emph{strongly blocking pair}.)

For any matchings $\mu$ and $\mu'$, we say that the binary relation
$\mu \succeq \mu'$ holds if $\mu(i) \succeq_i \mu'(i)$
and $\mu(j) \succeq_j \mu'(j)$ for every man $i \in I$
and woman $j \in J$. A weakly stable matching $\mu$ is \emph{Pareto-stable}
if for any matching $\mu'$ such that $\mu' \succeq \mu$, we have
$\mu \succeq \mu'$. (Otherwise, the matching $\mu$ is
not \emph{Pareto-optimal} because it is \emph{Pareto-dominated}
by the matching $\mu'$.)

A \emph{mechanism} is an algorithm that, given a stable marriage market
$(I, J, (\succeq_i)_{i \in I}, (\succeq_j)_{j \in J})$,
produces a matching $\mu$. A mechanism is said to be
\emph{group strategyproof (for the men)}
if for any two different preference profiles
$(\succeq_i)_{i \in I}$ and $(\succeq'_i)_{i \in I}$,
there exists a man $i_0 \in I$ with preference relation
$\succeq_{i_0}$ different from $\succeq_{i_0}'$ such that
$\mu(i_0) \succeq_{i_0} \mu'(i_0)$, where $\mu$ and $\mu'$
are the matchings produced by the mechanism given $(I, J,
(\succeq_i)_{i \in I}, (\succeq_j)_{j \in J})$ and 
$(I, J, (\succeq'_i)_{i \in I}, \allowbreak (\succeq_j)_{j \in J})$
respectively. (Such a man $i_0$ belongs to the coalition but
is not matched to a strictly preferred woman
by expressing preference relation $\succeq_{i_0}'$
instead of his true preference relation $\succeq_{i_0}$.)

\subsection{The Associated Tiered-Slope Market}

We construct the \emph{tiered-slope market
$\calM = (I, J, \pi, N, \lambda, a, b)$
associated with stable marriage market} $(I, J,
(\succeq_i)_{i \in I}, (\succeq_j)_{j \in J})$ as follows.
We take $N \geq \abs{I} + 1$ and associate with each man
$i \in I$ a fixed and distinct priority
$\pi_i \in \{ 1, 2, \ldots, \abs{I} \}$.
We convert the preference relations
$(\succeq_i)_{i \in I}$ of the men to integer-valued
(non-transferable) utilities
$a \in \bbZ^{I \times (J \cup \{\jnil\})}$ such that
for every man $i \in I$ and
women $j_1, j_2 \in J \cup \{\jnil\}$,
we have
$j_1 \succeq_i j_2$ if and only if
$a_{i, j_1} \geq a_{i, j_2}$.
Similarly, we convert the preference relations
$(\succeq_j)_{j \in J}$ of the women to integer-valued
(non-transferable) utilities
$b \in \bbZ^{(I \cup \{\inil\}) \times J}$ such that
for every woman $j \in J$ and
men $i_1, i_2 \in I \cup \{\inil\}$, we have
$i_1 \succeq_j i_2$ if and only if
$b_{i_1, j} \geq b_{i_2, j} \geq 0$.
Finally, we take
\begin{equation*}
\lambda =
\max_{\substack{i \in I \cup \{\inil\} \\ j \in J}}
(b_{i, j} + 1) \N.
\end{equation*}

In order to achieve group strategyproofness, we require
that $N$ and $\pi$ should not depend on the preferences
$(\succeq_i)_{i \in I}$ of the men.
We further require that $b$ does not depend on
the preferences $(\succeq_i)_{i \in I}$ of the men, and
that $a_{i_0, j_0}$ does not depend on the other preferences
$(\succeq_i)_{i \in I \setminus \{i_0\}}$
for any man $i_0 \in I$ and woman $j_0 \in J \cup \{\jnil\}$.
In other words, a man $i_0 \in I$ is only able to manipulate
his own utilities $(a_{i_0, j})_{j \in J \cup \{\jnil\}}$.
One way to satisfy these conditions is by taking
$a_{i_0, j_0}$ to be the number of women
$j \in J \cup \{\jnil\}$ such that $j_0 \succeq_{i_0} j$
for every man $i_0 \in I$ and woman $j_0 \in J \cup \{\jnil\}$,
and taking $b_{i_0, j_0}$ to be the number of men
$i \in I \cup \{\inil\}$ such that $i_0 \succeq_{j_0} i$
for every man $i_0 \in I \cup \{\inil\}$
and woman $j_0 \in J$.
(These conditions are not used until
Sect.~\ref{sec:strategyproofness}, where we prove group
strategyproofness.)

Intuitively, each woman has a compensation function with the same
form as a buyer in the assignment game \cite{SS71}. The valuation
$b_{i, j} \N + \pi_i$ that woman $j$ assigns to man $i$ has
a first-order dependence on the preferences over the men
and a second-order dependence on the priorities of the men,
which are used to break any ties in her preferences.
From the perspective of man $i$, if he highly prefers a woman $j$,
he assigns a large exponent $a_{i, j}$ in the slope
associated with woman $j$, and thus
expects only a small amount of compensation.

\subsection{Pareto-Stability}

In this subsection, we study the Pareto-stability of matchings
in the stable marriage market that correspond to stable outcomes
in the associated tiered-slope market.
We first show that individual rationality in the associated
tiered-slope market implies individual rationality in the stable
marriage market (Lemmas~\ref{lem:utility} and~\ref{lem:rationality}).
Then, we show that stability in the associated tiered-slope market
implies weak stability in the stable marriage market
(Lemma~\ref{lem:stability}).
Finally, we show that stability in the associated tiered-slope
market is sufficient for Pareto-stability in the stable marriage
market (Lemma~\ref{lem:duality} and Theorem~\ref{thm:optimality}).
The proof of Lemma~\ref{lem:duality} is given in App.~\ref{app:market}.

\begin{lemma} \label{lem:utility}
Let $(\mu, u, v)$ be an individually rational outcome in
tiered-slope market $\calM = (I, J, \pi, N, \allowbreak \lambda, a, b)$.
Let $i \in I$ be a man and $j \in J$ be a woman. Then
\begin{equation*}
0 < \lexp{a_{i, \jnil}} \leq u_i < \lexp{a_{i, \mu(i)} + 1}
\quad \text{ and } \quad
0 \leq b_{\inil, j} \N \leq v_j < (b_{\mu(j), j} + 1) \N.
\end{equation*}
\end{lemma}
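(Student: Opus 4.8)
The plan is to derive all four inequalities directly from feasibility, individual rationality, and the parameter constraints of the market, namely $1 \leq \pi_i$, $\max_{i} \pi_i < N$, $(b_{i,j}+1)\N \leq \lambda$ for all $(i,j)$, and the consequences $\lambda \geq N \geq 2 > 0$ and $b_{i,j} \geq 0$ (the latter follows from $\min_{(i,j)}(b_{i,j}+1)\N \geq \N$ together with $\N > 0$). I would first dispose of the two lower bounds, since these require only individual rationality. For the man, $u_i \geq r_i = \pi_i \lexp{a_{i,\jnil}} \geq \lexp{a_{i,\jnil}}$ using $\pi_i \geq 1$, and $\lexp{a_{i,\jnil}} > 0$ since $\lambda > 0$. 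For the woman, $v_j \geq s_j = b_{\inil,j}\N \geq 0$ using $b_{\inil,j} \geq 0$ and $\N > 0$. These give $0 < \lexp{a_{i,\jnil}} \leq u_i$ and $0 \leq b_{\inil,j}\N \leq v_j$.

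For the upper bounds I would split on whether the relevant agent is matched. For the man's bound $u_i < \lexp{a_{i,\mu(i)}+1}$: if $\mu(i) = \jnil$, feasibility condition~2 gives $u_i = r_i = \pi_i \lexp{a_{i,\jnil}}$, and $\pi_i < N \leq \lambda$ yields $u_i < \lexp{a_{i,\jnil}+1} = \lexp{a_{i,\mu(i)}+1}$; if $\mu(i) = j \in J$, feasibility condition~1 gives $u_i \lexp{-a_{i,j}} \leq b_{i,j}\N + \pi_i - v_j$, and substituting the already-established $v_j \geq 0$ together with $\pi_i < \N$ gives $u_i \lexp{-a_{i,j}} < (b_{i,j}+1)\N \leq \lambda$, so multiplying by $\lexp{a_{i,j}} > 0$ yields $u_i < \lexp{a_{i,j}+1}$. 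Symmetrically, for the woman's bound $v_j < (b_{\mu(j),j}+1)\N$: if $\mu(j) = \inil$, condition~3 gives $v_j = b_{\inil,j}\N < (b_{\inil,j}+1)\N$; if $\mu(j) = i \in I$, condition~1 gives $v_j \leq b_{i,j}\N + \pi_i - u_i \lexp{-a_{i,j}}$, and since the man's lower bound forces $u_i \lexp{-a_{i,j}} > 0$ while $\pi_i < \N$, we obtain $v_j < b_{i,j}\N + \N = (b_{i,j}+1)\N$.

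The only structural subtlety, rather than a genuine obstacle, is the dependency between the two sides: the upper bound on each agent's utility in the matched case uses the strict positivity (i.e.\ the lower bound) of the partner's utility on the other side. Establishing both lower bounds before either upper bound, as above, cleanly resolves this. Beyond that ordering, every step is a direct application of one feasibility condition combined with a single parameter inequality, so no further difficulty is anticipated.
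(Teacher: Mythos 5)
Your proof is correct and takes essentially the same approach as the paper's: lower bounds first from individual rationality, then a case split on matched versus unmatched, with the matched case splitting the single feasibility inequality $u_i \lexp{-a_{i,j}} - (b_{i,j}\N + \pi_i - v_j) \leq 0$ using the partner's nonnegative utility together with $\pi_i < \N$ and $(b_{i,j}+1)\N \leq \lambda$. The only differences are presentational: you handle the matched man and matched woman symmetrically where the paper merges them with a ``without loss of generality'' step, and you spell out the small facts the paper leaves implicit (e.g.\ $b_{i,j} \geq 0$ and the need to establish both lower bounds before either upper bound).
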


\begin{proof}
The lower bounds 
\begin{equation*}
u_i \geq \pi_i \lexp{a_{i, \jnil}}
\geq \lexp{a_{i, \jnil}} > 0
\qquad \text{and} \qquad
v_j \geq b_{\inil, j} \N \geq 0
\end{equation*}
follow directly from individual rationality.
If $\mu(i) = \jnil$, then feasibility implies
\begin{equation*}
u_i = \pi_i \lexp{a_{i, \jnil}} < \lexp{a_{i, \jnil} + 1}.
\end{equation*}
If $\mu(j) = \inil$, then feasibility implies
\begin{equation*}
v_j = b_{\inil, j} \N < (b_{\inil, j} + 1) \N.
\end{equation*}
It remains to show that the upper bounds
hold when $\mu(i) \neq \jnil$ and $\mu(j) \neq \inil$.
Without loss of generality, we may assume that $\mu(i) = j$,
so feasibility implies
\begin{equation*}
u_i \lexp{-a_{i, j}} - (b_{i, j} \N + \pi_i - v_j) \leq 0.
\end{equation*}
Since $u_i \geq 0$ and $v_j \geq 0$, we have
\begin{equation*}
u_i \lexp{-a_{i, j}} - (b_{i, j} \N + \pi_i) \leq 0
\qquad \text{and} \qquad
- (b_{i, j} \N + \pi_i - v_j) \leq 0.
\end{equation*}
Since $\pi_i < \N$ and $b_{i, j} \N + \pi_i < \lambda$, we have
\begin{equation*}
u_i \lexp{-a_{i, j}} - \lambda < 0
\qquad \text{and} \qquad
- (b_{i, j} \N + \N - v_j) < 0.
\end{equation*}
Thus $u_i < \lexp{a_{i, \mu(i)} + 1}$
and $v_j < (b_{\mu(j), j} + 1) \N$.
\QEDWrap \end{proof}

\begin{lemma}[Individual Rationality] \label{lem:rationality}
Let $(\mu, u, v)$ be an individually rational outcome in the
tiered-slope market $\calM = (I, J, \pi, N, \lambda, a, b)$
associated with stable marriage market
$(I, J, \allowbreak {(\succeq_i)_{i \in I}},
\allowbreak {(\succeq_j)_{j \in J}})$.
Then $\mu$ is an individually rational matching in the stable
marriage market.
\end{lemma}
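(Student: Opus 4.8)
The plan is to verify directly the two defining conditions of individual rationality in the stable marriage market—namely $j \succeq_i \jnil$ and $i \succeq_j \inil$ for every matched pair—by translating the numeric bounds supplied by Lemma~\ref{lem:utility} back into ordinal statements through the preference-to-utility conversion. So I would fix an arbitrary man $i \in I$ and woman $j \in J$ with $\mu(i) = j$ (equivalently $\mu(j) = i$), and handle the two sides of the market separately.

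For the man's condition, I would apply Lemma~\ref{lem:utility} to obtain $\lexp{a_{i,\jnil}} \leq u_i < \lexp{a_{i,\mu(i)}+1} = \lexp{a_{i,j}+1}$. Since the market constraints force $\lambda \geq \N \geq 2$, the map $\lexp{\cdot}$ is strictly increasing, so the resulting chain $\lexp{a_{i,\jnil}} < \lexp{a_{i,j}+1}$ gives $a_{i,\jnil} < a_{i,j}+1$. The one point of care here is that this is a \emph{strict} inequality between exponents, so I would use the integrality of $a$ to sharpen it to $a_{i,\jnil} \leq a_{i,j}$; the conversion for the men then yields $j \succeq_i \jnil$.

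For the woman's condition, I would analogously invoke Lemma~\ref{lem:utility} to get $b_{\inil,j}\N \leq v_j < (b_{\mu(j),j}+1)\N = (b_{i,j}+1)\N$, divide through by $\N > 0$, and again use integrality of $b$ to reach $b_{\inil,j} \leq b_{i,j}$. To match the conversion for the women, which reads $i \succeq_j \inil$ if and only if $b_{i,j} \geq b_{\inil,j} \geq 0$, I still need $b_{\inil,j} \geq 0$. This requires a short argument about the construction itself: the conversion demands $b_{i_1,j} \geq b_{i_2,j} \geq 0$ whenever $i_1 \succeq_j i_2$, and applying this with $i_1 = i_2 = \inil$ (using the reflexivity of $\succeq_j$, which follows from totality) forces $b_{\inil,j} \geq 0$. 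Combining, $b_{i,j} \geq b_{\inil,j} \geq 0$, so $i \succeq_j \inil$.

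Since $i$ and $j$ were an arbitrary matched pair, both conditions hold throughout, which is exactly individual rationality of $\mu$ in the stable marriage market. I do not expect a genuine obstacle: Lemma~\ref{lem:utility} carries the analytic weight. The only steps that need attention are the two bookkeeping conversions—passing from a strict inequality between powers of $\lambda$ (respectively multiples of $\N$) to a weak inequality between the underlying integer exponents (respectively coefficients), and confirming the nonnegativity $b_{\inil,j} \geq 0$ that the women's conversion presupposes.
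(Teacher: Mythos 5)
Your proof is correct and takes essentially the same route as the paper's: apply Lemma~\ref{lem:utility}, pass from the strict inequalities between powers of $\lambda$ (resp.\ multiples of $\N$) to the weak integer inequalities $a_{i, \mu(i)} \geq a_{i, \jnil}$ and $b_{\mu(j), j} \geq b_{\inil, j}$, and translate back to $\mu(i) \succeq_i \jnil$ and $\mu(j) \succeq_j \inil$ via the preference-to-utility conversion. Your added verification that $b_{\inil, j} \geq 0$ is a detail the paper leaves implicit (it is also forced directly by the market constraint $(b_{i, j} + 1) \N \geq \N$), so nothing of substance differs.
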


\begin{proof}
Let $i \in I$ be a man and $j \in J$ be a woman.
Then, by Lemma~\ref{lem:utility}, we have
\begin{equation*}
\lexp{a_{i, \jnil}} < \lexp{a_{i, \mu(i)} + 1}
\qquad \text{and} \qquad
b_{\inil, j} \N < (b_{\mu(j), j} + 1) \N.
\end{equation*}
Thus $a_{i, \mu(i)} + 1 > a_{i, \jnil}$ and
$b_{\mu(j), j} + 1 > b_{\inil, j}$,
and hence $a_{i, \mu(i)} \geq a_{i, \jnil}$ and
$b_{\mu(j), j} \geq b_{\inil, j}$.
We conclude that $\mu(i) \succeq_i \jnil$ and $\mu(j) \succeq_j \inil$.
\QEDWrap \end{proof}

\begin{lemma}[Stability] \label{lem:stability}
Let $(\mu, u, v)$ be a stable outcome in the
tiered-slope market $\calM = (I, J, \pi, N, \allowbreak \lambda, a, b)$
associated with stable marriage market
$(I, J, \allowbreak {(\succeq_i)_{i \in I}},
\allowbreak {(\succeq_j)_{j \in J}})$.
Then $\mu$ is a weakly stable matching in the stable marriage market.
\end{lemma}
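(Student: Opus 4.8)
The plan is to lean on Lemma~\ref{lem:rationality}, which already guarantees that $\mu$ is individually rational in the stable marriage market, so that the only thing left to establish is the absence of strongly blocking pairs. I would therefore suppose, for contradiction, that some man $i \in I$ and woman $j \in J$ form a strongly blocking pair, i.e. $j \succ_i \mu(i)$ and $i \succ_j \mu(j)$ with both preferences strict. Translating through the utility encodings $j_1 \succeq_i j_2 \iff a_{i, j_1} \geq a_{i, j_2}$ and $i_1 \succeq_j i_2 \iff b_{i_1, j} \geq b_{i_2, j}$, these strict preferences become the strict integer inequalities $a_{i, j} > a_{i, \mu(i)}$ and $b_{i, j} > b_{\mu(j), j}$, that is, $a_{i, j} \geq a_{i, \mu(i)} + 1$ and $b_{i, j} \geq b_{\mu(j), j} + 1$.

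Next I would confront these inequalities with the stability condition for the pair $(i, j)$, namely $f_{i, j}(u_i) + g_{i, j}(v_j) \geq 0$, which expands to $u_i \lexp{-a_{i, j}} + v_j \geq b_{i, j} \N + \pi_i$. To contradict this, I would bound the left-hand side from above using the upper bounds supplied by Lemma~\ref{lem:utility}, namely $u_i < \lexp{a_{i, \mu(i)} + 1}$ and $v_j < (b_{\mu(j), j} + 1) \N$. For the first term, multiplying by $\lexp{-a_{i, j}}$ yields $u_i \lexp{-a_{i, j}} < \lexp{a_{i, \mu(i)} + 1 - a_{i, j}}$, and since the blocking condition forces the exponent $a_{i, \mu(i)} + 1 - a_{i, j}$ to be non-positive while $\lambda \geq \N \geq 1$, this term is strictly less than $1$. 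For the second term, the blocking condition $b_{i, j} \geq b_{\mu(j), j} + 1$ gives $v_j < (b_{\mu(j), j} + 1) \N \leq b_{i, j} \N$.

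Adding the two bounds and using $\pi_i \geq 1$, I would obtain $u_i \lexp{-a_{i, j}} + v_j < 1 + b_{i, j} \N \leq \pi_i + b_{i, j} \N$, which directly contradicts the stability inequality above, so no strongly blocking pair can exist and $\mu$ is weakly stable. The conceptual crux — and the only place where the tiered-slope structure genuinely does the work — is the first bound: because a man's slopes are powers of the large base $\lambda$, a strict gain in his preference ($a_{i, j} \geq a_{i, \mu(i)} + 1$) collapses his compensated utility $u_i \lexp{-a_{i, j}}$ to below a single unit, so it cannot possibly help the pair overcome the integer gap of at least $\N$ created on the woman's side. Everything else is routine arithmetic; the main thing to get right is consistently converting strict marriage-market preferences into the ``$\geq\ {+}1$'' integer inequalities and tracking the regime assumptions $\lambda \geq \N$ and $\pi_i \geq 1$.
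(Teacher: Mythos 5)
Your proposal is correct and follows essentially the same argument as the paper's proof: both reduce to ruling out a strongly blocking pair via Lemma~\ref{lem:rationality}, convert the strict preferences into the integer inequalities $a_{i,j} \geq a_{i,\mu(i)}+1$ and $b_{i,j} \geq b_{\mu(j),j}+1$, and contradict the stability inequality using the upper bounds of Lemma~\ref{lem:utility} together with $\pi_i \geq 1$. The only difference is cosmetic — you bound the two terms separately where the paper chains the inequalities into $0 \leq \cdots < 1 - \pi_i$ — so the two arguments are the same.
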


\begin{proof}
Since the outcome $(\mu, u, v)$ is individually rational
in market $\calM$, Lemma~\ref{lem:rationality} implies that
the matching $\mu$ is individually rational
in the stable marriage market.
It remains to show that there is no strongly blocking pair.

For the sake of contradiction, suppose there exists a man $i \in I$ and
a woman $j \in J$ such that
neither $\mu(i) \succeq_i j$ nor $\mu(j) \succeq_j i$. Then
$a_{i, j} > a_{i, \mu(i)}$ and $b_{i, j} > b_{\mu(j), j}$.
Hence $a_{i, j} \geq a_{i, \mu(i)} + 1$
and $b_{i, j} \geq b_{\mu(j), j} + 1$.
Since $(\mu, u, v)$ is a stable outcome in $\calM$, we have
\begin{align*}
0 \leq{} & u_i \lexp{-a_{i, j}} - (b_{i, j} \N + \pi_i - v_j) \\
<{} & \lexp{a_{i, \mu(i)} + 1} \lexp{-a_{i, j}} -
  (b_{i, j} \N + \pi_i - (b_{\mu(j), j} + 1) \N) \\
\leq{} & 1 - \pi_i,
\end{align*}
where the second inequality follows from Lemma~\ref{lem:utility}.
Thus, $\pi_i < 1$, a contradiction.
\QEDWrap \end{proof}

\begin{lemma} \label{lem:duality}
Let $(\mu, u, v)$ be a stable outcome in the
tiered-slope market $\calM = (I, J, \pi, N, \lambda, \allowbreak
a, b)$. Let $\mu'$ be an arbitrary matching. Then
\begin{equation*}
\sum_{i \in I} \Big( u_i \lexp{-a_{i, \mu'(i)}} - \pi_i \Big)
\geq \sum_{j \in J} \Big( b_{\mu'(j), j} \N - v_j \Big).
\end{equation*}
Furthermore, the inequality is tight if and only if the outcome
$(\mu', u, v)$ is stable.
\end{lemma}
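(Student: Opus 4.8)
The plan is to prove this by a weak-duality argument, comparing the two sides term by term along the edges of $\mu'$; the ``tight iff stable'' clause is then a complementary-slackness statement. For each man $i \in I$ abbreviate $L_i = u_i \lexp{-a_{i, \mu'(i)}} - \pi_i$ and for each woman $j \in J$ abbreviate $R_j = b_{\mu'(j), j} \N - v_j$, so the claim reads $\sum_{i \in I} L_i \geq \sum_{j \in J} R_j$. The only facts I would extract from the stable outcome $(\mu, u, v)$ are the inequalities packaged in its stability: individual rationality ($u_i \geq r_i = \pi_i \lexp{a_{i, \jnil}}$ and $v_j \geq s_j = b_{\inil, j} \N$) and the stability inequality $u_i \lexp{-a_{i, j}} \geq b_{i, j} \N + \pi_i - v_j$, valid for \emph{every} pair $(i, j)$.

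First I would split the index sets according to $\mu'$ into matched man--woman pairs, unmatched men, and unmatched women. For a matched pair with $\mu'(i) = j$ (equivalently $\mu'(j) = i$, so $b_{i,j} = b_{\mu'(j),j}$), rearranging the stability inequality gives $L_i \geq R_j$. For an unmatched man ($\mu'(i) = \jnil$), individual rationality gives $u_i \lexp{-a_{i, \jnil}} \geq \pi_i$, hence $L_i \geq 0$. For an unmatched woman ($\mu'(j) = \inil$), individual rationality gives $v_j \geq b_{\inil, j} \N$, hence $R_j \leq 0$. Since $\mu'$ puts matched men and matched women in bijection, summing these termwise bounds yields
\[
\sum_{i \in I} L_i \;\geq\; \sum_{\text{matched pairs}} R_j \;\geq\; \sum_{j \in J} R_j ,
\]
where the first step discards the nonnegative unmatched-men terms and the second adds back the nonpositive unmatched-women terms. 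This establishes the inequality.

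The part I expect to require the most care is the tightness characterization, which I would obtain by tracking exactly when each termwise bound above is an equality. Because every inequality in the displayed chain points the same way, the overall bound is tight precisely when $L_i = R_j$ for each matched pair, $L_i = 0$ for each unmatched man, and $R_j = 0$ for each unmatched woman. Unwinding the definitions, these three conditions are $f_{i, j}(u_i) + g_{i, j}(v_j) = 0$ on matched pairs, $u_i = r_i$ for unmatched men, and $v_j = s_j$ for unmatched women, i.e.\ exactly the feasibility conditions for the outcome $(\mu', u, v)$. The one subtlety is that feasibility only demands $f_{i, j}(u_i) + g_{i, j}(v_j) \leq 0$ on matched pairs whereas tightness forces equality there; but since $u$ and $v$ are unchanged from $(\mu, u, v)$, the inherited stability inequality supplies $\geq 0$ on every pair, so on a feasible matched pair the two bounds combine to $= 0$, matching the tightness condition. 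The same invariance of $u$ and $v$ means $(\mu', u, v)$ automatically satisfies individual rationality and the stability inequality, so feasibility of $(\mu', u, v)$ is equivalent to its stability. Hence the inequality is tight if and only if $(\mu', u, v)$ is stable, completing the proof.
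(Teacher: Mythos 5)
Your proposal is correct and follows essentially the same route as the paper's proof: the same decomposition into $\mu'$-matched pairs, unmatched men, and unmatched women, with stability bounding the matched terms and individual rationality handling the unmatched ones, followed by the same complementary-slackness reading of tightness. Your explicit treatment of the subtlety that feasibility demands $f_{i,j}(u_i) + g_{i,j}(v_j) \leq 0$ on matched pairs while tightness forces equality (resolved by the inherited stability inequality, since $u$ and $v$ are unchanged) is exactly the reasoning the paper uses in its converse direction.
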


\begin{theorem}[Pareto-stability] \label{thm:optimality}
Let $(\mu, u, v)$ be a stable outcome in the
tiered-slope market $\calM = (I, J, \pi, N, \lambda, a, b)$
associated with stable marriage market
$(I, J, (\succeq_i)_{i \in I}, \allowbreak (\succeq_j)_{j \in J})$.
Then $\mu$ is a Pareto-stable matching in the stable marriage market.
\end{theorem}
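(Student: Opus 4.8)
The plan is to verify the two ingredients of Pareto-stability separately. Weak stability is handed to us directly by Lemma~\ref{lem:stability}, so the substance of the argument is Pareto-optimality: I must show that every matching $\mu'$ with $\mu' \succeq \mu$ also satisfies $\mu \succeq \mu'$. The first step is to translate the hypothesis $\mu' \succeq \mu$ into the associated market via the conversion property. Since $j_1 \succeq_i j_2$ iff $a_{i,j_1} \geq a_{i,j_2}$, the relation $\mu'(i) \succeq_i \mu(i)$ becomes $a_{i,\mu'(i)} \geq a_{i,\mu(i)}$ for every man, and likewise $\mu'(j) \succeq_j \mu(j)$ becomes $b_{\mu'(j),j} \geq b_{\mu(j),j}$ for every woman. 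Because $\lambda > 1$ and each $u_i > 0$ by Lemma~\ref{lem:utility}, the man inequalities give the termwise bound $u_i \lexp{-a_{i,\mu'(i)}} \leq u_i \lexp{-a_{i,\mu(i)}}$, so the left-hand side of the Lemma~\ref{lem:duality} bound can only decrease when we replace $\mu$ by $\mu'$; the woman inequalities show the right-hand side can only increase.

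The key step is a sandwiching argument built on Lemma~\ref{lem:duality}. Applying that lemma to $\mu$ itself yields equality of its two sides, since $(\mu, u, v)$ is stable by assumption. Combining this equality with the two monotonicity observations and with the Lemma~\ref{lem:duality} inequality applied to $\mu'$, I would write the chain
\[
\sum_{i \in I}\Big(u_i \lexp{-a_{i,\mu'(i)}} - \pi_i\Big)
\leq \sum_{i \in I}\Big(u_i \lexp{-a_{i,\mu(i)}} - \pi_i\Big)
= \sum_{j \in J}\Big(b_{\mu(j),j}\N - v_j\Big)
\leq \sum_{j \in J}\Big(b_{\mu'(j),j}\N - v_j\Big),
\]
while Lemma~\ref{lem:duality} applied to $\mu'$ forces the leftmost quantity to be at least the rightmost. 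Hence every inequality in the chain must in fact be an equality.

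I expect the main subtlety to lie in passing from equality of these sums back to equality of the individual terms. Since each man's contribution moved weakly in one direction and each woman's contribution weakly in the other, equality of the totals forces $u_i \lexp{-a_{i,\mu'(i)}} = u_i \lexp{-a_{i,\mu(i)}}$ for every man and $b_{\mu'(j),j}\N = b_{\mu(j),j}\N$ for every woman. Invoking $u_i > 0$ and the strict monotonicity of $\lexp{\cdot}$ (as $\lambda > 1$), these collapse to $a_{i,\mu'(i)} = a_{i,\mu(i)}$ and $b_{\mu'(j),j} = b_{\mu(j),j}$ for all agents, which by the conversion property translate back to $\mu(i) \succeq_i \mu'(i)$ and $\mu(j) \succeq_j \mu'(j)$. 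This is exactly $\mu \succeq \mu'$, establishing Pareto-optimality; together with the weak stability supplied by Lemma~\ref{lem:stability}, it follows that $\mu$ is Pareto-stable.
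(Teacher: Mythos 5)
Your proof is correct and follows essentially the same route as the paper: weak stability via Lemma~\ref{lem:stability}, then a sandwich built from the tightness direction of Lemma~\ref{lem:duality} applied to $\mu$ together with its inequality applied to $\mu'$, with the termwise monotonicity in $a$ and $b$ forcing all inequalities (and hence all individual terms) to be tight. Your explicit appeal to $u_i > 0$ from Lemma~\ref{lem:utility} to justify strict monotonicity of the per-man terms is a detail the paper leaves implicit, but the argument is otherwise the same.
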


\begin{proof}
Since the outcome $(\mu, u, v)$ is stable
in market $\calM$, Lemma~\ref{lem:stability} implies that
the matching $\mu$ is weakly stable in the stable marriage market.
It remains to show that the matching $\mu$ is not Pareto-dominated.

Let $\mu'$ be a matching of the stable marriage market such that
$\mu' \succeq \mu$.
Then $\mu'(i) \succeq_i \mu(i)$ and $\mu'(j) \geq_j \mu(j)$
for every man $i \in I$ and woman $j \in J$.
Hence $a_{i, \mu'(i)} \geq a_{i, \mu(i)}$
and $b_{\mu'(j), j} \geq b_{\mu(j), j}$ for every man $i \in I$
and woman $j \in J$. Since $a_{i, \mu'(i)} \geq a_{i, \mu(i)}$
for every man $i \in I$, we have
\begin{equation*}
\sum_{i \in I} \Big( u_i \lexp{-a_{i, \mu'(i)}} - \pi_i \Big)
\leq \sum_{i \in I} \Big( u_i \lexp{-a_{i, \mu(i)}} - \pi_i \Big).
\end{equation*}
Applying Lemma~\ref{lem:duality} to both sides, we get
\begin{equation*}
\sum_{j \in J} \Big( b_{\mu'(j), j} \N - v_j \Big)
\leq \sum_{j \in J} \Big( b_{\mu(j), j} \N - v_j \Big).
\end{equation*}
Since $b_{\mu'(j), j} \geq b_{\mu(j), j}$ for every woman $j \in J$,
the inequalities are tight.
Hence $a_{i, \mu'(i)} = a_{i, \mu(i)}$
and $b_{\mu'(j), j} = b_{\mu(j), j}$
for every man $i \in I$ and woman $j \in J$.
Thus $\mu(i) \succeq_i \mu'(i)$
and $\mu(j) \succeq_j \mu'(j)$
for every man $i \in I$ and woman $j \in J$.
We conclude that $\mu \succeq \mu'$.
\QEDWrap \end{proof}

\subsection{Group Strategyproofness} \label{sec:strategyproofness}

In this subsection, we study the group strategyproofness of matchings
in the stable marriage market that correspond to man-optimal outcomes
in the associated tiered-slope market.
We first show that the utilities of the men in man-optimal outcomes
in the associated tiered-slope market reflect the utilities of the
men in the stable marriage market (Lemma~\ref{lem:slackness}).
Then we prove group strategyproofness in the stable marriage market
using group strategyproofness in the associated tiered-slope market
(Theorem~\ref{thm:strategyproofness}).

\begin{lemma} \label{lem:slackness}
Let $(\mu, u, v)$ be a man-optimal outcome in the tiered-slope market
$\calM = (I, J, \pi, \allowbreak N, \lambda, a, b)$
associated with stable marriage market
$(I, J, \allowbreak {(\succeq_i)_{i \in I}},
\allowbreak {(\succeq_j)_{j \in J}})$.
Then $\lexp{a_{i, \mu(i)}} \leq u_i < \lexp{a_{i, \mu(i)} + 1}$
for every man $i \in I$.
\end{lemma}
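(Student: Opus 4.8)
The plan is to separate the two bounds. The upper bound $u_i < \lexp{a_{i, \mu(i)} + 1}$ is immediate: a man-optimal outcome is stable and hence individually rational, so Lemma~\ref{lem:utility} applies verbatim. All the content therefore lies in the lower bound $u_i \geq \lexp{a_{i, \mu(i)}}$, and this is where man-optimality must enter. For an unmatched man ($\mu(i) = \jnil$) the lower bound is free: feasibility forces $u_i = \pi_i \lexp{a_{i, \jnil}}$, and $\pi_i \geq 1$ gives $u_i \geq \lexp{a_{i, \jnil}} = \lexp{a_{i, \mu(i)}}$. So I would reduce to a matched man $i$ with $\mu(i) = j \in J$.

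For such a man, feasibility together with stability forces the matched constraint to hold with equality, so that the ``money'' $p_i := u_i \lexp{-a_{i, j}} = b_{i, j} \N + \pi_i - v_j$ is well defined and positive. The lower bound is exactly the claim that $p_i \geq 1$, i.e.\ that every matched man collects at least one unit of compensation. I would prove this by contradiction: assuming $0 < p_i < 1$, I would exhibit a stable outcome in which man $i$ strictly improves, contradicting man-optimality. The natural candidate is to raise $u_i$ to $\lexp{a_{i, j}}$ (i.e.\ raise $p_i$ to $1$) while lowering $v_j$ by $1 - p_i$ and leaving every other coordinate fixed. Feasibility is preserved because the matched pair $(i, j)$ is kept at equality, and woman $j$ remains individually rational since the new value $b_{i, j} \N + \pi_i - 1 \geq b_{\inil, j} \N$, using $b_{\mu(j), j} \geq b_{\inil, j}$ from Lemma~\ref{lem:rationality} and $\pi_i \geq 1$. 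Raising $u_i$ only relaxes the stability inequalities involving $i$, so the entire burden is to check that lowering $v_j$ creates no blocking pair $(i', j)$ with $i' \neq i$.

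The heart of the argument is therefore to show that each such stability slack $S_{i'} := u_{i'} \lexp{-a_{i', j}} + v_j - (b_{i', j} \N + \pi_{i'})$ is at least $1 - p_i$. Here I would exploit the tiered structure. By weak stability (Lemma~\ref{lem:stability}) the pair $(i', j)$ does not strongly block, so either woman $j$ weakly prefers $i$ to $i'$ ($b_{i, j} \geq b_{i', j}$) or man $i'$ weakly prefers his partner to $j$ ($a_{i', \mu(i')} \geq a_{i', j}$). When woman $j$ strictly prefers $i$, the gap $(b_{i, j} - b_{i', j}) \N \geq \N$ dominates the bounded terms $\pi_i - \pi_{i'}$ and $p_i$ (using $\N > \abs{I} \geq \pi_i$ and $p_i < 1$) and yields $S_{i'} \geq 1$; when man $i'$ strictly prefers $j$, the factor $\lexp{a_{i', \mu(i')} - a_{i', j}} \leq \lexp{-1}$ together with $u_{i'} < \lexp{a_{i', \mu(i')} + 1}$ (Lemma~\ref{lem:utility}) makes his contribution strictly less than $1$, and combining this with the stability of the original outcome pins down the priorities enough to force $S_{i'} \geq 1$ as well.

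I expect the main obstacle to be the remaining case of simultaneous indifference: woman $j$ is indifferent between $i$ and $i'$ (equal $b$) and man $i'$ is indifferent between $j$ and his own partner (equal $a$), with $i'$ having higher priority than $i$. There the slack reduces to $S_{i'} = p_{i'} + (\pi_i - \pi_{i'}) - p_i$, and controlling it would require $p_{i'} \geq 1$ for $i'$ as well, so a single-pair perturbation no longer suffices. To handle this I would follow the chain of mutual indifferences emanating from $i$: rather than keeping the matching fixed, I would augment along an alternating path through such tied men and women, boosting the compensation of each man on the path by one unit and rerouting the matching, until the path terminates at an agent whose individual-rationality slack or strict preference absorbs the adjustment. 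Proving that this augmentation terminates and leaves a stable outcome --- and thereby produces the strict improvement needed to contradict man-optimality --- is the delicate step; it is essentially the place where the lattice structure of man-optimal outcomes in the generalized assignment game does the work.
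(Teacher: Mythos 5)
Your decomposition is right as far as it goes, and the pieces you complete are sound: the upper bound is indeed just Lemma~\ref{lem:utility} applied to an individually rational outcome (this is exactly how the paper gets it), the unmatched case is immediate from feasibility and $\pi_i \geq 1$, and your two strict-preference subcases for the slack $S_{i'}$ check out. But the case you flag at the end is a genuine gap, not a residual detail, and you have correctly computed why the single-pair perturbation dies there: with $a_{i', \mu(i')} = a_{i', j}$, $b_{i, j} = b_{i', j}$, and $\pi_{i'} > \pi_i$, stability of the original outcome gives only $p_{i'} \geq p_i + \pi_{i'} - \pi_i$, which falls short of the needed $p_{i'} \geq 1 + \pi_{i'} - \pi_i$ by exactly the amount $1 - p_i$ you are trying to inject. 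Your proposed repair --- rerouting along an alternating path of tied pairs and boosting each man on it by one unit --- is stated but not proved: you give no argument that the chain terminates rather than cycles, that the rerouted matching remains feasible and stable off the path, or that the terminal agent's slack actually absorbs the adjustment. Appealing to ``the lattice structure'' names the missing theorem rather than supplying it.

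The paper closes precisely this hole differently. Instead of constructing a perturbed outcome, it imports the needed path-following fact as a black box --- Demange and Gale's compatible-pair lemma, cited as Lemma~\ref{lem:compatibility}: if every woman in a nonempty set $J_1$ has utility strictly above her reserve, then some man $i_0$ with $\mu(i_0) \notin J_1$ has a tight constraint with some $j_1 \in J_1$ --- and then finishes by a fractional-part argument rather than an exchange argument. Concretely, in Lemma~\ref{lem:slackness-general} it sets $I_0 = \{ i \in I \colon u_i < \lexp{a_{i, \mu(i)}} \}$, lets $J_1$ be the set of women with $\gamma \N + \pi_i - v_j \in (0, 1)$ for some $i \in I_0$ and integer $\gamma \geq 0$, shows the partners of $I_0$ lie in $J_1$ and that $v_j$ exceeds reserve on $J_1$ by non-integrality, extracts a compatible pair, and in each of the resulting cases uses integrality of $b$ and $\N$ together with the distinctness of the priorities $\pi$ to force $\pi_{i_1} = \pi_{i_0}$, hence $i_0 = i_1 \in I_0$, contradicting $\mu(i_0) \notin J_1$. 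This is exactly your mutual-indifference-chain difficulty, resolved by arithmetic modulo $1$ plus a citation, with man-optimality entering only through Lemma~\ref{lem:compatibility}. There is also a structural reason the paper argues at that level of generality: Lemma~\ref{lem:slackness-general} allows perturbed reserve utilities and is reused later (Lemma~\ref{lem:man-lower-bound} in App.~\ref{app:tsm-reveal}), which your matched-pair perturbation, even if completed, would not yield. So to make your proof whole you would either have to prove the augmentation step from scratch --- essentially reproving Demange--Gale's lemma --- or invoke Lemma~\ref{lem:compatibility} and rejoin the paper's argument.
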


The proof of Lemma~\ref{lem:slackness} is given in App.~\ref{app:market}.
Since the compensation received by a man $i \in I$ matched with
a woman $\mu(i) \neq \jnil$ is given by $u_i \lexp{-a_{i, \mu(i)}}$,
Lemma~\ref{lem:slackness} implies that the amount of compensation
in man-optimal outcomes is at least $1$ and less than $\lambda$.
In fact, no woman is willing to pay $\lambda$ or more
under any individual rational outcome.

\begin{theorem}[Group strategyproofness] \label{thm:strategyproofness}
If a mechanism produces matchings that correspond to man-optimal outcomes
of the tiered-slope markets associated with the stable marriage
markets, then it is group strategyproof and Pareto-stable.
\end{theorem}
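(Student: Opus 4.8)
The plan is to dispatch the two assertions separately. Pareto-stability comes essentially for free: a man-optimal outcome is, by definition, a stable outcome of the associated tiered-slope market, and Theorem~\ref{thm:optimality} already guarantees that the matching $\mu$ arising from any stable outcome of the associated market is Pareto-stable in the stable marriage market. So the entire burden is on group strategyproofness.

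For that, I would fix two distinct men's profiles $(\succeq_i)_{i\in I}$ and $(\succeq'_i)_{i\in I}$ with the women's profile held fixed, and let $(\mu,u,v)$ and $(\mu',u',v')$ be the man-optimal outcomes of the two associated markets, with produced matchings $\mu$ and $\mu'$. The first step is to observe that the construction was designed so that $N$, $\pi$, $b$, and hence $\lambda$, do not depend on the men's preferences; thus the two associated markets are $(I,J,\pi,N,\lambda,a,b)$ and $(I,J,\pi,N,\lambda,a',b)$, differing only in their utility matrices $a$ and $a'$. Since $a$ faithfully encodes the men's preferences (via $j_1\succeq_i j_2$ iff $a_{i,j_1}\geq a_{i,j_2}$) and the two profiles differ, we have $a\neq a'$, so Theorem~\ref{thm:manipulation} applies and produces a man $i_0$ together with a woman $j_0\in J\cup\{0\}$ satisfying $a_{i_0,j_0}\neq a'_{i_0,j_0}$ and $u_{i_0}\geq u'_{i_0}\lexp{a_{i_0,\mu'(i_0)}-a'_{i_0,\mu'(i_0)}}$. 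The inequality $a_{i_0,j_0}\neq a'_{i_0,j_0}$ certifies that $\succeq_{i_0}\neq\succeq'_{i_0}$, so $i_0$ is a genuine member of the deviating coalition, and it remains only to establish $\mu(i_0)\succeq_{i_0}\mu'(i_0)$.

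This last implication is the heart of the argument, and it is where Lemma~\ref{lem:slackness} is essential. Applying the lemma to $(\mu',u',v')$ gives $u'_{i_0}\geq\lexp{a'_{i_0,\mu'(i_0)}}$; multiplying this lower bound by the positive factor $\lexp{a_{i_0,\mu'(i_0)}-a'_{i_0,\mu'(i_0)}}$ inside the inequality from Theorem~\ref{thm:manipulation} cancels the exponent $a'_{i_0,\mu'(i_0)}$ and leaves $u_{i_0}\geq\lexp{a_{i_0,\mu'(i_0)}}$. Applying the lemma to $(\mu,u,v)$ supplies the complementary upper bound $u_{i_0}<\lexp{a_{i_0,\mu(i_0)}+1}$. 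Chaining the two yields $\lexp{a_{i_0,\mu'(i_0)}}<\lexp{a_{i_0,\mu(i_0)}+1}$; because $\lambda\geq N\geq\abs{I}+1>1$ makes $\lexp{\cdot}$ strictly increasing on the integers, this forces $a_{i_0,\mu'(i_0)}\leq a_{i_0,\mu(i_0)}$, i.e.\ $\mu(i_0)\succeq_{i_0}\mu'(i_0)$, as required. I expect this cancellation-and-chaining step to be the only subtle part; everything else is bookkeeping, and the sole technical caveat is the strict monotonicity of $\lexp{\cdot}$, which holds since $\lambda>1$ whenever $I$ is nonempty (the empty case being vacuous).
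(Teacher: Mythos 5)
Your proposal is correct and follows essentially the same route as the paper's own proof: Pareto-stability via Theorem~\ref{thm:optimality}, then Theorem~\ref{thm:manipulation} applied to the two associated markets $(I,J,\pi,N,\lambda,a,b)$ and $(I,J,\pi,N,\lambda,a',b)$, with the lower bound of Lemma~\ref{lem:slackness} on $u'_{i_0}$ cancelling the exponent $a'_{i_0,\mu'(i_0)}$ and the upper bound on $u_{i_0}$ closing the chain to give $a_{i_0,\mu(i_0)} \geq a_{i_0,\mu'(i_0)}$. Your explicit remarks on why $a \neq a'$, why $a_{i_0,j_0} \neq a'_{i_0,j_0}$ certifies $\succeq_{i_0} \neq \succeq'_{i_0}$, and the strict monotonicity of $\lexp{\cdot}$ only make explicit steps the paper leaves implicit.
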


\begin{proof}
We have shown Pareto-stability in the stable marriage market
in Theorem~\ref{thm:optimality}.
It remains only to show group strategyproofness.

Let $(I, J, (\succeq_i)_{i \in I}, (\succeq_j)_{j \in J})$
and $(I, J, (\succeq_i')_{i \in I}, (\succeq_j)_{j \in J})$
be stable marriage markets
where ${(\succeq_i)_{i \in I}}$ and $(\succeq_i')_{i \in I}$
are different preference profiles.
Let $(I, J, \allowbreak \pi, N, \lambda, \allowbreak a, b)$ and
$(I, J, \pi, N, \lambda, \allowbreak a', b)$
be the tiered-slope markets associated
with stable marriage markets
$(I, J, \allowbreak
{(\succeq_i)_{i \in I}}, \allowbreak {(\succeq_j)_{j \in J}})$
and $(I, J, \allowbreak
{(\succeq_i')_{i \in I}}, \allowbreak {(\succeq_j)_{j \in J}})$,
respectively. Let $(\mu, u, v)$ and $(\mu', u', v')$
be man-optimal outcomes of the tiered-slope markets
$(I, J, \pi, N, \lambda, a, b)$ and $(I, J, \pi, N, \lambda, a', b)$,
respectively.

Since the preference profiles
$(\succeq_i)_{i \in I}$ and $(\succeq_i')_{i \in I}$
are different, we have $a \neq a'$.
So, by Theorem~\ref{thm:manipulation},
there exists a man $i_0 \in I$ and a woman $j_0 \in J \cup \{\jnil\}$
with $a_{i_0, j_0} \neq a'_{i_0, j_0}$ such that
$u_{i_0} \geq u'_{i_0}
\lexp{a_{i_0, \mu'(i_0)} - a'_{i_0, \mu'(i_0)}}$. Hence
\begin{align*}
\lexp{a_{i_0, \mu(i_0)} + 1}
>{} & u_{i_0} \\
\geq{} & \frac{u'_{i_0}}{\lexp{a'_{i_0, \mu'(i_0)}}} \lexp{a_{i_0, \mu'(i_0)}}  \\
\geq{} & \lexp{a_{i_0, \mu'(i_0)}},
\end{align*}
where the first and third inequalities follow from Lemma~\ref{lem:slackness}.
This shows that $a_{i_0, \mu(i_0)} + 1 > a_{i_0, \mu'(i_0)}$.
Hence $a_{i_0, \mu(i_0)} \geq a_{i_0, \mu'(i_0)}$,
and we conclude that $\mu(i_0) \succeq_{i_0} \mu'(i_0)$.
Also, since $a_{i_0, j_0} \neq a'_{i_0, j_0}$, the preference relations
$\succeq_{i_0}$ and $\succeq_{i_0}'$ are different.
Therefore, the mechanism is group strategyproof.
\QEDWrap \end{proof}

\section{Efficient Implementation} \label{sec:implementation}

The implementation of our group strategyproof Pareto-stable
mechanism for stable marriage with indifferences
amounts to computing a man-optimal outcome
for the associated tiered-slope market.
Since all utility functions in the tiered-slope market
are linear functions, we can perform this computation using the
algorithm of D\"{u}tting et al.~\cite{DHW15}, which was developed for
multi-item auctions.
If we model each woman $j$ as a non-dummy item in the multi-item
auction with price given by utility $v_j$, then the utility
function of each man on each non-dummy item is a linear function
of the price with a negative slope.
Using the algorithm of D\"{u}tting et al.,
we can compute a man-optimal (envy-free) outcome
using $O(n^5)$ arithmetic operations, where $n$ is the total
number of agents. Since $\poly(n)$ precision
is sufficient, our mechanism admits a polynomial-time
implementation.

For the purpose of solving the stable marriage problem,
it is actually sufficient for a mechanism to produce the matching
without the utility vectors $u$ and $v$ of the associated
tiered-slope market.
In App.~\ref{app:gener-da-alg} and~\ref{app:equivalence}, we show that
the generalization of the deferred acceptance algorithm
presented in~\cite{DLP17}
can be used to compute a matching that corresponds to
a man-optimal outcome for the associated tiered-slope markets.
The proof of Theorem~\ref{thm:implementation} is given in
App.~\ref{app:tsm-reveal}.

\begin{theorem} \label{thm:implementation}
There exists an $O(n^4)$-time algorithm that corresponds
to a group strategyproof Pareto-stable mechanism
for the stable marriage market with indifferences,
where $n$ is the total number of men and women.
\end{theorem}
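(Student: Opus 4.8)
The plan is to treat Theorem~\ref{thm:implementation} as a packaging result: by Theorem~\ref{thm:strategyproofness}, any mechanism that outputs a matching equal to the matching component of a man-optimal outcome of the associated tiered-slope market is already group strategyproof and Pareto-stable. Thus the entire burden of the theorem reduces to exhibiting a single algorithm that, given a stable marriage market with indifferences on $n = \abs{I} + \abs{J}$ agents, produces such a matching $\mu$ in $O(n^4)$ time, \emph{without} ever materializing the real-valued utility vectors $u$ and $v$, whose bit-complexity is governed by the geometrically large slopes $\lexp{a_{i, j}}$.

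First I would fix the candidate algorithm to be the generalization of the deferred acceptance procedure from~\cite{DLP17}, run directly on the ordinal data $(\succeq_i)_{i \in I}$ and $(\succeq_j)_{j \in J}$, equivalently on the integer tables $a$ and $b$ of the associated market $\calM$. The substantial step is the correctness claim, deferred to App.~\ref{app:equivalence}: the matching $\mu$ returned by this combinatorial algorithm coincides with the matching component of some man-optimal outcome of $\calM$. The cleanest route I see is to read the generalized deferred acceptance procedure as an instance of the deferred-acceptance dynamic that solves the generalized assignment game~\cite{CK81}, specialized to the tiered-slope market; man-optimality would then be inherited from the corresponding property of Demange and Gale~\cite{DG85}. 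To verify this specialization I would lean on Lemma~\ref{lem:slackness}: since $u_i$ pins down the tier $a_{i, \mu(i)}$ in any man-optimal outcome, it suffices to check that the combinatorial algorithm assigns each man to the best preference tier attainable in any stable outcome, breaking ties consistently via the priorities $\pi_i$, which is exactly the invariant that a proposal-and-rejection dynamic maintains.

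The runtime analysis is the second step. The route through the multi-item auction algorithm of D\"{u}tting et al.~\cite{DHW15} already gives a polynomial bound, but only $O(n^5)$ arithmetic operations on $\poly(n)$-bit numbers; to reach $O(n^4)$ I would instead bound the generalized deferred acceptance algorithm combinatorially on the integer tables. The argument is that the algorithm advances a monotone potential (the accumulated rejections, or equivalently the multiset of tiers the men occupy) through $O(n^2)$ steps, while each step, a round of proposals together with the associated Pareto-improvement search, costs $O(n^2)$; multiplying yields $O(n^4)$ ordinary word operations, and working on $a$ and $b$ sidesteps the high-precision arithmetic demanded by the auction formulation.

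I expect the main obstacle to be the equivalence in App.~\ref{app:equivalence}: reconciling a purely ordinal, money-free combinatorial procedure with a cardinal market whose defining feature is monetary transfer against the geometrically separated slopes $\lexp{a_{i, j}}$. Concretely, one must verify that the compensations implicitly realized by the algorithm always lie in the window $[1, \lambda)$ forced by Lemma~\ref{lem:slackness} and individual rationality, so that no man is ever induced to trade down a tier in exchange for money, the very phenomenon Demange and Gale~\cite{DG85} flag as the obstacle to embedding stable marriage into the generalized assignment game. Showing that the tier gaps are wide enough to make the algorithm's ordinal decisions agree with the market's optimal monetary decisions is where the technical work concentrates.
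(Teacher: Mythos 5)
Your packaging is exactly the paper's: its proof of Theorem~\ref{thm:implementation} literally constructs the associated tiered-slope market, runs the generalized deferred acceptance algorithm of~\cite{DLP17} (Alg.~\ref{alg:smiw}) with suitably chosen edge weights, and concludes from Theorem~\ref{thm:strategyproofness} together with an equivalence result (Theorem~\ref{thm:equivalence}) stating that the greedy maximum-weight matchings the algorithm computes are in bijection with the man-optimal matchings of the market; the $O(n^4)$ bound is likewise inherited from the incremental Hungarian implementation cited from~\cite{DLP17}, so your runtime accounting ($O(n^2)$ reveal steps at $O(n^2)$ each) is consistent with the paper. The gap is in your proposed route to the equivalence. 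Reading Alg.~\ref{alg:smiw} as an instance of the deferred-acceptance dynamic of~\cite{CK81} does not fit its structure: the algorithm never adjusts compensations incrementally, but instead reveals one preference tier (bidder) at a time and recomputes a greedy maximum-weight matching of a growing unit-demand auction. The paper's actual argument is an induction over these reveal steps in which the men's reserve utilities are artificially raised so that only the revealed tiers are acceptable (Lemma~\ref{lem:reveal-reserve}), a bijection between greedy MWMs and man-optimal matchings of the reserve-adjusted market is maintained as an invariant (Lemmas~\ref{lem:man-optimal-greedy-mwm} and~\ref{lem:popt}), and monotonicity of man-optimal payoffs in reserve utilities (\cite[Property~3]{DG85}, via Lemma~\ref{lem:man-optimal-monotone-reserve}) transfers the conclusion back to the original market --- the Roth--Sotomayor ``lower the reserve / extend the preferences'' device, not the Crawford--Knoer dynamic.

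Second, your claimed sufficient condition --- that the algorithm ``assigns each man to the best preference tier attainable in any stable outcome, breaking ties via the priorities'' --- is genuinely too weak. Lemma~\ref{lem:slackness} pins down only the exponent $a_{i, \mu(i)}$ from $u_i$; the man-optimal \emph{payoff} is unique, but man-optimal \emph{matchings} are not, and whether a given matching is compatible with the man-optimal payoff also depends on the identity of the partner within a tier, through the women's side: stability forces $u_i \lexp{-a_{i, j}} = b_{i, j} \N + \pi_i - v_j$ for each matched pair and forces unmatched women to sit at reserve. Characterizing exactly which tier-respecting matchings are compatible is the hard content of the appendix (the weight function $W$ and Lemma~\ref{lem:max-weight-stable}, whose alternating-path proof occupies App.~\ref{sec:equivalence-proof}), and it is this characterization, not a tier-by-tier invariant, that connects the greedy MWM computation to man-optimality. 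Relatedly, your plan leans on Lemma~\ref{lem:slackness} itself, but that lemma applies only under the original reserve utilities; at intermediate configurations the reserves take the form $(1 - \lambda^{-1}) \lexp{a_{i, j}}$, and the induction requires the generalization Lemma~\ref{lem:slackness-general}, with its integrality condition on reserves (invoked in Lemma~\ref{lem:man-lower-bound}). So the complexity half of your proposal is fine, but the equivalence step --- the bulk of the theorem's substance --- remains unproven as sketched.
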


\bibliography{market}

\begin{thebibliography}{24}
\providecommand{\natexlab}[1]{#1}
\providecommand{\url}[1]{\texttt{#1}}
\expandafter\ifx\csname urlstyle\endcsname\relax
  \providecommand{\doi}[1]{doi: #1}\else
  \providecommand{\doi}{doi: \begingroup \urlstyle{rm}\Url}\fi

\bibitem[Abdulkadiro\v{g}lu et~al.(2009)Abdulkadiro\v{g}lu, Pathak, and
  Roth]{APR09}
A.~Abdulkadiro\v{g}lu, P.~A. Pathak, and A.~E. Roth.
\newblock Strategy-proofness versus efficiency in matching with indifferences:
  Redesigning the {NYC} high school match.
\newblock \emph{The American Economic Review}, 99:\penalty0 1954--1978, 2009.

\bibitem[Chen(2012)]{Che12}
N.~Chen.
\newblock On computing {P}areto stable assignments.
\newblock In \emph{Proceedings of the 29th International Symposium on
  Theoretical Aspects of Computer Science}, pages 384--395, 2012.

\bibitem[Chen and Ghosh(2010)]{CG10}
N.~Chen and A.~Ghosh.
\newblock Algorithms for {P}areto stable assignment.
\newblock In \emph{Proceedings of the Third International Workshop on
  Computational Social Choice}, pages 343--354, 2010.

\bibitem[Crawford and Knoer(1981)]{CK81}
V.~P. Crawford and E.~M. Knoer.
\newblock Job matching with heterogeneous firms and workers.
\newblock \emph{Econometrica}, 49:\penalty0 437--450, 1981.

\bibitem[Demange and Gale(1985)]{DG85}
G.~Demange and D.~Gale.
\newblock The strategy structure of two-sided matching markets.
\newblock \emph{Econometrica}, 53:\penalty0 873--888, 1985.

\bibitem[Domani\c{c} et~al.(2017)Domani\c{c}, Lam, and Plaxton]{DLP17}
N.~O. Domani\c{c}, C.-K. Lam, and C.~G. Plaxton.
\newblock Strategyproof {P}areto-stable mechanisms for two-sided matching with
  indifferences.
\newblock In \emph{Fourth International Workshop on Matching under
  Preferences}, April 2017.
\newblock Full version available at https://arxiv.org/abs/1703.10598.

\bibitem[Dubins and Freedman(1981)]{DF81}
L.~E. Dubins and D.~A. Freedman.
\newblock Machiavelli and the {G}ale-{S}hapley algorithm.
\newblock \emph{American Mathematical Monthly}, 88:\penalty0 485--494, 1981.

\bibitem[D\"{u}tting et~al.(2015)D\"{u}tting, Henzinger, and Weber]{DHW15}
P.~D\"{u}tting, M.~Henzinger, and I.~Weber.
\newblock An expressive mechanism for auctions on the web.
\newblock \emph{ACM Transactions on Economics and Computation}, 4:\penalty0
  1:1--1:34, 2015.

\bibitem[Erdil and Ergin(2008)]{EE08}
A.~Erdil and H.~Ergin.
\newblock What's the matter with tie-breaking? {I}mproving efficiency in school
  choice.
\newblock \emph{American Economic Review}, 98:\penalty0 669--689, 2008.

\bibitem[Erdil and Ergin(2015)]{EE15}
A.~Erdil and H.~Ergin.
\newblock Two-sided matching with indifferences.
\newblock Working paper, 2015.

\bibitem[Eriksson and Karlander(2000)]{EK00}
K.~Eriksson and J.~Karlander.
\newblock Stable matching in a common generalization of the marriage and
  assignment models.
\newblock \emph{Discrete Mathematics}, 217:\penalty0 135--156, 2000.

\bibitem[Gale and Shapley(1962)]{GS62}
D.~Gale and L.~S. Shapley.
\newblock College admissions and the stability of marriage.
\newblock \emph{American Mathematical Monthly}, 69:\penalty0 9--15, 1962.

\bibitem[Irving(1994)]{Irving1994}
R.~W. Irving.
\newblock Stable marriage and indifference.
\newblock \emph{Discrete Applied Mathematics}, 48:\penalty0 261--272, 1994.

\bibitem[Kamiyama(2014)]{Kam14}
N.~Kamiyama.
\newblock A new approach to the {P}areto stable matching problem.
\newblock \emph{Mathematics of Operations Research}, 39:\penalty0 851--862,
  2014.

\bibitem[Kesten(2010)]{Kes10}
O.~Kesten.
\newblock School choice with consent.
\newblock \emph{The Quarterly Journal of Economics}, 125:\penalty0 1297--1348,
  2010.

\bibitem[Knuth(1976)]{Knuth76}
D.~Knuth.
\newblock \emph{Mariages Stables}.
\newblock Montreal University Press, Montreal, 1976.

\bibitem[Quinzii(1984)]{Qui84}
M.~Quinzii.
\newblock Core and competitive equilibria with indivisibilities.
\newblock \emph{International Journal of Game Theory}, 13:\penalty0 41--60,
  1984.

\bibitem[Roth(1982)]{Rot82}
A.~E. Roth.
\newblock The economics of matching: Stability and incentives.
\newblock \emph{Mathematics of Operations Research}, 7:\penalty0 617--628,
  1982.

\bibitem[Roth(1985)]{Rot85}
A.~E. Roth.
\newblock The college admissions problem is not equivalent to the marriage
  problem.
\newblock \emph{Journal of Economic Theory}, 36:\penalty0 277--288, 1985.

\bibitem[Roth and Sotomayor(1990)]{roth+s:match}
A.~E. Roth and M.~Sotomayor.
\newblock \emph{Two-sided matching: A study in game-theoretic modeling and
  analysis}.
\newblock Cambridge University Press, New York, 1990.

\bibitem[Shapley and Shubik(1971)]{SS71}
L.~S. Shapley and M.~Shubik.
\newblock The assignment game {I}: The core.
\newblock \emph{International Journal of Game Theory}, 1:\penalty0 111--130,
  1971.

\bibitem[Sotomayor(2000)]{Sotomayor00}
M.~Sotomayor.
\newblock Existence of stable outcomes and the lattice property for a unified
  matching market.
\newblock \emph{Mathematical Social Sciences}, 39:\penalty0 119--132, 2000.

\bibitem[Sotomayor(2011)]{Sot11}
M.~Sotomayor.
\newblock The {P}areto-stability concept is a natural solution concept for
  discrete matching markets with indifferences.
\newblock \emph{International Journal of Game Theory}, 40:\penalty0 631--644,
  2011.

\bibitem[Zhou(1990)]{Zho90}
L.~Zhou.
\newblock On a conjecture by {G}ale about one-sided matching problems.
\newblock \emph{Journal of Economic Theory}, 52:\penalty0 123--135, 1990.

\end{thebibliography}

\appendix

\section{Details of the Tiered-Slope Market} \label{app:market}

\begin{proof}[\ConfOrFull{}{Proof of }Lemma~\ref{lem:duality}]
Since $(\mu, u, v)$ is a stable outcome in market $\calM$,
the following conditions hold.
\begin{enumerate}
\item \label{enum:stability}
  $u_i \lexp{-a_{i, \mu'(i)}} \geq
  b_{i, \mu'(i)} \N + \pi_i - v_{\mu'(i)}$
  for every man $i \in I$ such that $\mu'(i) \neq \jnil$.
\item \label{enum:man}
  $u_i \geq \pi_i \lexp{a_{i, \jnil}}$
  for every man $i \in I$ such that $\mu'(i) = \jnil$.
\item \label{enum:woman}
  $v_j \geq b_{\inil, j} \N$ for every woman $j \in J$
  such that $\mu'(j) = \inil$.
\end{enumerate}
Hence, we have
\allowdisplaybreaks
\begin{align*}
& \sum_{i \in I} \Big( u_i \lexp{-a_{i, \mu'(i)}} - \pi_i \Big) \\
={} & \sum_{\substack{i \in I \\ \mu'(i) \neq \jnil}}
  \Big( u_i \lexp{-a_{i, \mu'(i)}} - \pi_i \Big)
  + \sum_{\substack{i \in I \\ \mu'(i) = \jnil}}
  \Big( u_i \lexp{-a_{i, \jnil}} - \pi_i \Big) \\
\geq{} & \sum_{\substack{i \in I \\ \mu'(i) \neq \jnil}}
  \Big( u_i \lexp{-a_{i, \mu'(i)}} - \pi_i \Big) \\
\geq{} & \sum_{\substack{i \in I \\ \mu'(i) \neq \jnil}}
  \Big( b_{i, \mu'(i)} \N - v_{\mu'(i)} \Big) \\
={} & \sum_{\substack{j \in J \\ \mu'(j) \neq \inil}}
  \Big( b_{\mu'(j), j} \N - v_j \Big) \\
={} & \sum_{j \in J} \Big( b_{\mu'(j), j} \N - v_j \Big)
  - \sum_{\substack{j \in J \\ \mu'(j) = \inil}}
  \Big( b_{\inil, j} \N - v_j \Big) \\
\geq{} & \sum_{j \in J} \Big( b_{\mu'(j), j} \N - v_j \Big),
\end{align*}
where the three inequalities follow from conditions
\ref{enum:man}, \ref{enum:stability}, and~\ref{enum:woman}, respectively.

Furthermore, if the outcome $(\mu', u, v)$ is stable, then
conditions \ref{enum:stability}, \ref{enum:man},
and~\ref{enum:woman} are all tight.
Hence, the inequality in the lemma statement is also tight.

Conversely, if the inequality in the lemma statement is tight, then
conditions \ref{enum:stability}, \ref{enum:man}, and~\ref{enum:woman}
are all tight. Hence, the outcome $(\mu', u, v)$ is feasible.
So, the stability of outcome $(\mu', u, v)$ follows
from the stability of outcome $(\mu, u, v)$.
\QEDWrap \end{proof}

We now introduce two lemmas that are useful for proving Lemma~\ref{lem:slackness}:
Lemma~\ref{lem:compatibility} is used to prove Lemma~\ref{lem:slackness-general};
Lemma~\ref{lem:slackness-general} is used to prove both Lemma~\ref{lem:slackness} and Lemma~\ref{lem:man-lower-bound} of App.~\ref{app:tsm-reveal}.

\begin{lemma} \label{lem:compatibility}
  Let $\calM$ be a tiered-slope market $(I, J, \allowbreak \pi, N, \lambda, \allowbreak a, b)$ and let $\calM'$ be a market that is equal to $\calM$ except that the reserve utilities of the men may differ.
  Let $(\mu, u, v)$ be a man-optimal outcome of $\calM'$.
  Let $J_1 \subseteq J$ be a nonempty subset such that $v_{j} \neq s_{j} = b_{0, j} \N$ for every woman $j \in J_1$.
  Then, there exists a man $i_0 \in I$ and a woman $j_1 \in J_1$ such that $\mu(i_0) \notin J_1$ and $f_{i_0, j_1}(u_{i_0}) + g_{i_0, j_1}(v_{j_1}) = u_{i_0} \lexp{-a_{i_0, j_1}} - (b_{i_0, j_1} \N + \pi_{i_0} - v_{j_1}) = 0$.
\end{lemma}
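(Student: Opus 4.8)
The plan is to argue by contradiction: assume the conclusion fails, and exhibit a stable outcome of $\calM'$ in which some man is strictly better off, contradicting the man-optimality of $(\mu, u, v)$. First a few preliminary observations. Since $(\mu, u, v)$ is man-optimal it is stable, hence individually rational, so $v_j \geq s_j = b_{\inil, j} \N$ for every woman $j$; for $j \in J_1$ the hypothesis $v_j \neq s_j$ therefore gives $v_j > s_j$, and feasibility condition~3 forces $\mu(j) \neq \inil$, so every woman in $J_1$ is matched. Writing $I_1 = \{\, i \in I : \mu(i) \in J_1 \,\}$, the matching $\mu$ restricts to a bijection between $I_1$ and $J_1$, and $I_1 \neq \emptyset$. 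The negation of the claim is precisely the assumption that for every man $i$ with $\mu(i) \notin J_1$ and every woman $j \in J_1$ the stability constraint is strict, i.e. $f_{i, j}(u_i) + g_{i, j}(v_j) > 0$.

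I would then lower the women of $J_1$ while raising their partners. For $(\delta_j)_{j \in J_1}$ with $\delta_j \geq 0$, set $\tilde v_j = v_j - \delta_j$ for $j \in J_1$ (and $\tilde v_j = v_j$ otherwise) and $\tilde u_i = u_i + \delta_{\mu(i)} \lexp{a_{i, \mu(i)}}$ for $i \in I_1$ (and $\tilde u_i = u_i$ otherwise). A direct check shows each matched pair stays tight, so $(\mu, \tilde u, \tilde v)$ is feasible and meets the reserve conditions for small $\delta$ (women in $J_1$ stay above $s_j$, and men's utilities only increase). Writing $[P]$ for $1$ if $P$ holds and $0$ otherwise, the change in $f_{i, j}(u_i) + g_{i, j}(v_j)$ is
\[
[i \in I_1]\,\delta_{\mu(i)} \lexp{a_{i, \mu(i)} - a_{i, j}} - [j \in J_1]\,\delta_j,
\]
which is nonnegative unless $j \in J_1$. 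For a tight pair $(i, j)$ with $j \in J_1$, the assumption excludes $\mu(i) \notin J_1$, so $i \in I_1$, and preserving stability reduces to the constraints $\delta_j \leq \delta_{\mu(i)} \lexp{a_{i, \mu(i)} - a_{i, j}}$ over the internal tight pairs. (Equivalently, the matched-pair couplings show that maximizing the men's utilities over stable outcomes with matching $\mu$ is minimizing the matched women's utilities, in particular $\sum_{j \in J_1} v_j$.) Thus it suffices to produce a nonnegative, not identically zero $(\delta_j)_{j \in J_1}$ satisfying these constraints, as this gives a stable outcome with strictly larger $u_i$ for some $i \in I_1$.

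The main obstacle is showing such a $\delta$ exists. Taking logarithms base $\lambda$, the constraints $\delta_j \leq \delta_{\mu(i)} \lexp{a_{i, \mu(i)} - a_{i, j}}$ form a system of difference constraints, feasible with a strictly positive solution exactly when no directed cycle of internal tight pairs has positive total exponent $\sum_t (a_{i_t, j^{(t+1)}} - a_{i_t, j^{(t)}})$, where $i_t = \mu(j^{(t)})$ and $(i_t, j^{(t+1)})$ is tight. I expect ruling out such positive cycles to be the crux, and to be exactly where the tiered-slope structure enters: chaining the tightness relations around a cycle yields $\lexp{\sum_t(a_{i_t, j^{(t+1)}} - a_{i_t, j^{(t)}})} = \prod_t c_t / \prod_t d_t$, where $c_t = u_{i_t} \lexp{-a_{i_t, j^{(t)}}}$ and $d_t = u_{i_t} \lexp{-a_{i_t, j^{(t+1)}}}$ are the matched and tight compensations. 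Each lies in the single-tier window $(0, \lambda)$ forced by the scale-separation bounds of Lemma~\ref{lem:utility} (whose matched upper bound does not use the men's reserve utilities and so survives in $\calM'$); a positive-exponent cycle would drive one of the coupled compensations or utilities out of this window, violating feasibility or individual rationality. Hence no positive cycle exists and a valid $\delta$ can be built.

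With such a $\delta$, the outcome $(\mu, \tilde u, \tilde v)$ is stable in $\calM'$ with $\tilde u_i > u_i$ whenever $\delta_{\mu(i)} > 0$, contradicting man-optimality of $(\mu, u, v)$. Therefore the assumed slackness fails, so there exist $i_0 \in I$ with $\mu(i_0) \notin J_1$ and $j_1 \in J_1$ with $f_{i_0, j_1}(u_{i_0}) + g_{i_0, j_1}(v_{j_1}) = 0$, as required. A cleaner alternative I would also consider is to invoke LP duality for the program that minimizes $\sum_{j \in J_1} v_j$ over the fixed-$\mu$ stable polytope: complementary slackness at the man-optimal vertex should directly exhibit an active stability constraint bridging $J_1$ to a man matched outside $J_1$, although extracting this cleanly still requires controlling the $\lambda$-scaled constraint normals, which is essentially the same scale-separation difficulty.
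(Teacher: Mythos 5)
Your perturbation strategy is the right one in outline, and your reduction is correct as far as it goes: every woman in $J_1$ is matched (feasibility condition 3 plus $v_j \neq s_j$), the negation of the conclusion says exactly that every tight pair $(i,j)$ with $j \in J_1$ has $i \in I_1$, the coupled perturbation $\tilde v_j = v_j - \delta_j$, $\tilde u_i = u_i + \delta_{\mu(i)}\lexp{a_{i,\mu(i)}}$ keeps matched pairs tight and leaves only the internal tight pairs as binding constraints, and the existence of an admissible $\delta$ is indeed equivalent to the absence of a directed cycle of internal tight pairs with $\sum_t \bigl(a_{i_t, j^{(t+1)}} - a_{i_t, j^{(t)}}\bigr) \geq 1$. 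The genuine gap is at precisely the step you flag as the crux: your stated mechanism for ruling out such cycles does not work. Chaining tightness around a cycle gives $\prod_t c_t = \lexp{K} \prod_t d_t$ with $K \geq 1$, and this is perfectly consistent with every $c_t, d_t$ lying in the window $(0,\lambda)$: for a cycle of length $m = 2$ with per-edge exponents $k_1 = 1$ and $k_2 = 0$, take $d_1 = 1/2$, $c_1 = \lambda/2$, and $c_2 = d_2 = 1$; nothing is ``driven out of the window.'' A real contradiction must exploit the finer arithmetic of the tiered-slope market --- the identity $c_{t+1} - d_t = (b_{i_{t+1}, j^{(t+1)}} - b_{i_t, j^{(t+1)}})\N + (\pi_{i_{t+1}} - \pi_{i_t})$, an integer governed by the distinct priorities, together with integrality and the $\N$-versus-$\lambda$ scale separation --- which is essentially the delicate case analysis the paper carries out in Lemma~\ref{lem:slackness-general} (Cases 1 and 2.1--2.3). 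You have expressed an expectation that this works but not carried it out, and the counterexample above shows the window bounds alone cannot suffice. Moreover, even the window $(0,\lambda)$ is unjustified under the lemma's hypotheses: $\calM'$ is allowed \emph{arbitrary} men's reserve utilities, so the lower bound $u_i > 0$ of Lemma~\ref{lem:utility} (which rests on $u_i \geq r_i = \pi_i \lexp{a_{i,\jnil}} > 0$ in $\calM$) does not survive; you correctly observed that only the matched \emph{upper} bound survives, yet your cycle argument silently retains the lower one, without which $c_t, d_t > 0$ is unavailable.

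For comparison, the paper's entire proof of this lemma is a one-line citation of \cite[Lemma~4]{DG85}, which establishes the existence of a compatible pair $(i_0, j_1)$ for \emph{general} generalized assignment games (continuous, increasing compensation functions, arbitrary reserve utilities); it needs no positivity of $u$, no integrality, and no priority structure. So your route, even if completed via the Lemma~\ref{lem:slackness-general}-style case analysis, would yield a narrower statement than the general fact being invoked, and at considerably greater cost; your alternative LP-duality sketch runs into the same unaddressed difficulty, as you note yourself. If you want a self-contained proof, the cleanest fix is to adapt the Demange--Gale argument at the level of general compensation functions rather than trying to certify feasibility of the multiplicative constraint system through $\lambda$-scale separation.
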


\begin{proof}
This follows directly from \cite[Lemma 4]{DG85},
which shows the existence of a compatible pair $(i_0, j_1)$.
\QEDWrap \end{proof}

\newcommand{\UtilOrResOf}[2]{#1_{#2}}

\newcommand{\MenRes}{r}
\newcommand{\MenResp}{\MenRes'}
\newcommand{\MenRespp}{\MenRes''}

\newcommand{\ManRes}[1]{\UtilOrResOf{\MenRes}{#1}}
\newcommand{\ManResp}[1]{\UtilOrResOf{\MenResp}{#1}}

\newcommand{\CondMenResMinimum}{(i)}
\newcommand{\CondMenResIntegrality}{(ii)}

\begin{lemma} \label{lem:slackness-general}
  Let $\calM = (I, J, \pi, \allowbreak N, \lambda, a, b)$ be the tiered-slope market associated with stable marriage market $(I, J, (\succeq_i)_{i \in I}, (\succeq_j)_{j \in J})$, and let $\MenRes$ denote the reserve utility vector of the men in $\calM$.
  Let $\MenResp$ be a reserve utility vector of the men such that the following conditions hold:
  \CondMenResMinimum\ $\MenResp \geq \MenRes$;
  \CondMenResIntegrality\ for any man $i$ and any integer $k$, either $\ManResp{i} \lexp{k}$ is an integer or $0 < \ManResp{i} \lexp{k} < 1$.
  Let $\calM'$ be the market that is equal to $\calM$ except that the reserve utilities of the men are given by $\MenResp$, and let $(\mu, u, v)$ be a man-optimal outcome in $\calM'$.
  Then $\lexp{a_{i, \mu(i)}} \leq u_i$ for every man $i \in I$.
\end{lemma}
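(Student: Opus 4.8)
The plan is to argue by contradiction, after first disposing of the unmatched men. If $\mu(i) = \jnil$, feasibility in $\calM'$ gives $u_i = r'_i$, and condition (i) together with the form of the original reserve yields $u_i = r'_i \ge r_i = \pi_i \lexp{a_{i, \jnil}} \ge \lexp{a_{i, \jnil}}$, which is exactly the claimed bound since $\mu(i) = \jnil$. So it suffices to treat matched men, and I would suppose for contradiction that some man $i$ with $\mu(i) \neq \jnil$ has $u_i < \lexp{a_{i, \mu(i)}}$; equivalently, his compensation $c_i := u_i \lexp{-a_{i, \mu(i)}}$ lies in $(0, 1)$, positivity coming from $u_i \ge r'_i \ge r_i > 0$. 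Call such a man \emph{deficient}.

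The next step is to translate deficiency into a statement about the partner. Since $(\mu, u, v)$ is stable, the matched pair $(i, \mu(i))$ is tight, so $v_{\mu(i)} = b_{i, \mu(i)} \N + \pi_i - c_i$. Using $1 \le \pi_i \le \abs{I} < \N$ and $c_i \in (0, 1)$, this places $v_{\mu(i)}$ strictly between $b_{i, \mu(i)} \N$ and $(b_{i, \mu(i)} + 1) \N$; in particular $v_{\mu(i)} > s_{\mu(i)}$ and, crucially, $v_{\mu(i)} \notin \bbZ$, its fractional part being $1 - c_i$. Thus every deficient man's partner has non-integral utility.

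The heart of the argument is to exploit man-optimality through Lemma~\ref{lem:compatibility}, applied to the set $J_1 = \{\, j \in J : v_j \notin \bbZ \,\}$, which is nonempty (it contains the partners of deficient men) and satisfies $v_j \neq s_j = b_{\inil, j} \N$ for each member because every $s_j$ is an integer. The lemma produces a man $i_0$ with $\mu(i_0) \notin J_1$ and a woman $j_1 \in J_1$ forming a tight pair, so that the virtual compensation $d := u_{i_0} \lexp{-a_{i_0, j_1}} = b_{i_0, j_1} \N + \pi_{i_0} - v_{j_1}$ is non-integral. Because $i_0$ lies outside $J_1$, I can pin $d$ down into $(0, 1)$: if $i_0$ is unmatched then $d = r'_{i_0} \lexp{-a_{i_0, j_1}}$, so condition (ii) forces $d \in (0, 1)$; if $\mu(i_0) = j_0 \neq \jnil$ with $v_{j_0} \in \bbZ$, then tightness of $(i_0, j_0)$ makes the actual compensation $u_{i_0} \lexp{-a_{i_0, j_0}}$ a positive integer, and writing $d = \bigl(u_{i_0} \lexp{-a_{i_0, j_0}}\bigr) \lexp{a_{i_0, j_0} - a_{i_0, j_1}}$ with the bound $u_{i_0} \lexp{-a_{i_0, j_0}} < \lambda$ from Lemma~\ref{lem:utility} again forces $d \in (0, 1)$. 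Comparing the two tight expressions for $v_{j_1}$, one through $i_0$ and one through the actual partner $\mu(j_1)$, the difference of compensations equals the integer $(b_{\mu(j_1), j_1} - b_{i_0, j_1}) \N + (\pi_{\mu(j_1)} - \pi_{i_0})$; since $i_0 \neq \mu(j_1)$ carry distinct priorities and those priorities differ by less than $\N$, this integer cannot vanish, which together with $d \in (0, 1)$ and nonnegativity of compensations yields a clean $+1$ gap forcing $\mu(j_1)$ to be \emph{non}-deficient.

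The main obstacle is turning this local fact — that the particular woman returned by Lemma~\ref{lem:compatibility} has a non-deficient partner — into the global conclusion that no deficient man exists. I expect to need an induction that peels resolved women off $J_1$ and re-applies Lemma~\ref{lem:compatibility} to the remainder; the delicate point is that a freshly extracted tight edge may re-enter an already-peeled woman, whose utility is non-integral, which is precisely the case in which the integrality step pinning $d \in (0, 1)$ breaks down. Managing this correctly, by ordering the peeling so that the arithmetic always stays on a single $\lambda$-tier and using that $\lambda$ exceeds every valuation $b_{i, j} \N + \pi_i$, is where the real work lies. Equivalently, one can recast the whole step as showing that the woman-pessimal (man-optimal) utilities admit no stability-preserving downward perturbation once some compensation falls below $1$, with Lemma~\ref{lem:compatibility} pinpointing the unique obstruction to such a perturbation.
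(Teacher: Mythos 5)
Your local analysis is sound and closely parallels the paper's machinery: the treatment of unmatched men, the use of Lemma~\ref{lem:compatibility}, the pinning of the virtual compensation $d = u_{i_0} \lexp{-a_{i_0, j_1}}$ into $(0,1)$ via condition (ii) in the unmatched case and via an integrality-plus-$\lambda$-bound argument in the matched case, and the priority-arithmetic step showing that two distinct men tight at the same woman have compensations differing by a nonzero integer. All of that checks out. But you have correctly diagnosed that it does not close: the woman $j_1$ returned by Lemma~\ref{lem:compatibility} need not be the partner of a deficient man, so concluding that $\mu(j_1)$ is non-deficient yields no contradiction, and your proposed peeling induction founders exactly where you say it does --- a re-application to $J_1 \setminus \{j_1\}$ can return a man matched to an already-peeled woman with non-integral utility, at which point the step pinning $d \in (0,1)$ has no traction. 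This is a genuine gap, and the ``real work'' you defer is not a routine bookkeeping matter.

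The paper avoids any induction by choosing a different, larger set $J_1$ with a built-in closure property: it takes $J_1 = \{ j \in J \colon 0 < \gamma \N + \pi_i - v_j < 1 \text{ for some } i \in I_0 \text{ and integer } \gamma \geq 0 \}$, where $I_0$ is the set of deficient men --- that is, membership is not mere non-integrality of $v_j$, but proximity (within a unit, from below) to a value $\gamma \N + \pi_i$ keyed to the priority of some \emph{deficient} man. This set still contains the partners of all deficient men (take $\gamma = b_{i,j}$) and still has $v_j \neq s_j$ for its members, so Lemma~\ref{lem:compatibility} applies; the payoff is that the membership condition transports along the tight edge at $j_1$. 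A single application of the lemma then terminates in every case of a three-way comparison between $a_{i_0, j_0}$ and $a_{i_0, j_1}$ (where $j_0 = \mu(i_0)$): when $a_{i_0, j_0} \leq a_{i_0, j_1} - 1$ (or $i_0$ unmatched), the small-$\Delta$ argument you already have forces $\gamma_1 = b_{i_0, j_1}$ and $\pi_{i_1} = \pi_{i_0}$, hence $i_0 = i_1 \in I_0$ and $\mu(i_0) \in J_1$, a contradiction; when $a_{i_0, j_0} = a_{i_0, j_1}$, the combined inequality either exhibits a witness $\gamma = \gamma_1 - b_{i_0, j_1} + b_{i_0, j_0} \geq 0$ placing $j_0$ itself in $J_1$ (contradicting $\mu(i_0) \notin J_1$) or forces $v_{j_0} < 0$; and when $a_{i_0, j_0} \geq a_{i_0, j_1} + 1$, either $\gamma = b_{i_0, j_0}$ places $j_0$ in $J_1$ or the compensation of $i_0$ is nonpositive. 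In short, the missing idea in your proposal is precisely this shifted definition of $J_1$: it is engineered so that ``the tight edge re-enters a problematic woman'' is itself one of the contradictions, rather than the failure mode of an induction.
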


\begin{proof}
  First observe that the individual rationality of $(u, v)$ and condition~\CondMenResMinimum\ imply $u_i \geq \ManResp{i} \geq \ManRes{i} > 0$ for each man $i$.
  The individual rationality of $(u, v)$ also implies $v_j \geq b_{\inil, j} \N \geq 0$ for each woman $j$.
Let $I_0 = \{ i \in I \colon u_i < \lexp{a_{i, \mu(i)}} \}$.
For the sake of contradiction, suppose $I_0$ is nonempty.
Let $J_0 = \{ j \in J \colon j = \mu(i) \text{ for some man } i \in I_0 \}$.
Notice that for every man $i \in I_0$, we have $\mu(i) \neq \jnil$,
for otherwise $u_i = \ManResp{i} \geq \ManRes{i} = \pi_i \lexp{a_{i, \jnil}} \geq \lexp{a_{i, \jnil}}$
by feasibility. Thus $J_0$ is nonempty.

Let $J_1 = \{ j \in J \colon 0 < \gamma \N + \pi_i - v_j < 1
\text{ for some man } i \in I_0 \text{ and } \gamma \in \bbZ
\text{ such that } \gamma \geq 0 \}$.
Notice that for every man $i \in I_0$ and woman $j \in J_0$
such that $j = \mu(i)$, we have
\begin{equation*}
0 < b_{i, j} \N + \pi_i - v_j < 1
\end{equation*}
because individual rationality, condition~\CondMenResMinimum, and the definition of $I_0$ imply that
\begin{align*}
0 < u_i \lexp{-a_{i, j}} < 1 ,
\end{align*}
and feasibility and stability imply that
\begin{align*}
u_i \lexp{-a_{i, j}} = b_{i, j} \N + \pi_i - v_j.
\end{align*}
Thus $J_0 \subseteq J_1$.
Also, for every woman $j \in J_1$,
we have $v_j \neq b_{\inil, j} \N$ by a simple
non-integrality argument.

Therefore, by Lemma~\ref{lem:compatibility}, there exists
a man $i_0 \in I$ and a woman $j_1 \in J_1$ such that
$\mu(i_0) \notin J_1$ and
\begin{equation}
u_{i_0} \lexp{-a_{i_0, j_1}} = b_{i_0, j_1} \N + \pi_{i_0} - v_{j_1}.
\label{eq:tightness}
\end{equation}
Since $j_1 \in J_1$,
there exists $i_1 \in I_0$ and $\gamma_1 \in \bbZ$ such that
$\gamma_1 \geq 0$ and
\begin{equation}
0 < \gamma_1 \N + \pi_{i_1} - v_{j_1} < 1.
\label{eq:modulus}
\end{equation}
Let $j_0 = \mu(i_0)$. We have
\begin{equation}
u_{i_0} =
  \begin{cases}
    (b_{i_0, j_0} \N + \pi_{i_0} - v_{j_0}) \lexp{a_{i_0, j_0}}
    & \text{if } j_0 \neq \jnil \\
    \ManResp{i_0}
    & \text{if } j_0 = \jnil
  \end{cases}
\label{eq:utility}
\end{equation}
since the outcome $(\mu, u, v)$ is stable.
We consider two cases.

Case 1: $j_0 = \jnil$.
Combining (\ref{eq:tightness}), (\ref{eq:modulus}),
and (\ref{eq:utility}), we get
\begin{equation} \label{eq:single}
0 < (\gamma_1 - b_{i_0, j_1}) \N + (\pi_{i_1} - \pi_{i_0})
  + \ManResp{i_0} \lexp{-a_{i_0, j_1}} < 1.
\end{equation}
Let $\Delta$ denote $\ManResp{i_0} \lexp{-a_{i_0, j_1}}$.
By a simple non-integrality argument, we deduce that $\Delta$ is not an integer.
Then, by condition~\CondMenResIntegrality\ of the lemma,
we have $0 < \Delta < 1$. 
Since (\ref{eq:single}) implies that
\begin{equation*}
0 < (\gamma_1 - b_{i_0, j_1}) \N + (\pi_{i_1} - \pi_{i_0})
  + \Delta < 1
\end{equation*}
and since $0 < \Delta < 1$, we have
$\gamma_1 = b_{i_0, j_1}$ and $\pi_{i_1} = \pi_{i_0}$.
Thus $i_1 = i_0$ by the distinctness of $\pi$.
Since $i_0 = i_1 \in I_0$ and $\mu(i) \neq 0$ for every man $i \in I_0$ by our previous remark, we deduce that $\mu(i_0) \neq 0$.
This contradicts $\mu(i_0) = j_0 = 0$.

Case 2: $\mu(i_0) \neq \jnil$. Combining
(\ref{eq:tightness}), (\ref{eq:modulus}), and (\ref{eq:utility}),
we get
\begin{equation}
0 < (\gamma_1 - b_{i_0, j_1}) \N + (\pi_{i_1} - \pi_{i_0})
  + (b_{i_0, j_0} \N + \pi_{i_0} - v_{j_0})
  \lexp{a_{i_0, j_0} - a_{i_0, j_1}} < 1.
\label{eq:match}
\end{equation}
We consider three subcases.

Case 2.1: $a_{i_0, j_0} \leq a_{i_0, j_1} - 1$.
Let $\Delta = (b_{i_0, j_0} \N + \pi_{i_0} - v_{j_0})
\lexp{a_{i_0, j_0} - a_{i_0, j_1}}$.
Notice that
\begin{equation*}
\Delta
\leq (b_{i_0, j_0} \N + \pi_{i_0} - v_{j_0}) \lambda^{-1}
\leq (b_{i_0, j_0} \N + \pi_{i_0} - 0) \lambda^{-1}
< 1,
\end{equation*}
where the second inequality follows from individual rationality and
\begin{equation*}
\Delta = u_{i_0} \lexp{- a_{i_0, j_1}} > 0,
\end{equation*}
where the inequality follows from individual rationality and condition~\CondMenResMinimum.
Since (\ref{eq:match}) implies
\begin{align*}
0 <{} & (\gamma_1 - b_{i_0, j_1}) \N + (\pi_{i_1} - \pi_{i_0}) + \Delta < 1
\end{align*}
and $0 < \Delta < 1$, we have
$\gamma_1 = b_{i_0, j_1}$ and $\pi_{i_1} = \pi_{i_0}$.
Thus $i_1 = i_0$ by the distinctness of $\pi$.
Since $i_0 = i_1 \in I_0$, we have $\mu(i_0) \in J_0
\subseteq J_1$, which is a contradiction.

Case 2.2: $a_{i_0, j_0} = a_{i_0, j_1}$.
Substituting into (\ref{eq:match}), we get
\begin{equation}
0 < (\gamma_1 - b_{i_0, j_1} + b_{i_0, j_0}) \N + \pi_{i_1} - v_{j_0} < 1.
\label{eq:continuation}
\end{equation}
This shows that 
\begin{equation}
\gamma_1 - b_{i_0, j_1} + b_{i_0, j_0} \leq -1,
\label{eq:woman}
\end{equation}
for otherwise $\mu(i_0) = j_0 \in J_1$.
Combining (\ref{eq:woman}) with
the lower bound in (\ref{eq:continuation}), we get
\begin{equation*}
0 < -\N + \pi_{i_1} - v_{j_0} < - v_{j_0},
\end{equation*}
which contradicts the lower bound $v_{j_0} \geq 0$ implied by individual rationality.

Case 2.3: $a_{i_0, j_0} \geq a_{i_0, j_1} + 1$.
The upper bound in (\ref{eq:match}) gives
\begin{align*}
b_{i_0, j_0} \N + \pi_{i_0} - v_{j_0}
<{} & ((b_{i_0, j_1} - \gamma_1) \N + \pi_{i_0} - \pi_{i_1} + 1)
  \lexp{a_{i_0, j_1} - a_{i_0, j_0}} \\
\leq{} & ((b_{i_0, j_1} - 0) \N + \pi_{i_0} - 1 + 1) \lexp{-1} \\
={} & (b_{i_0, j_1}  \N + \pi_{i_0}) \lambda^{-1} \\
<{} & 1.
\end{align*}
This shows that 
\begin{equation}
b_{i_0, j_0} \N + \pi_{i_0} - v_{j_0} \leq 0,
\label{eq:man}
\end{equation}
for otherwise $\mu(i_0) = j_0 \in J_1$.
Combining (\ref{eq:man}) with (\ref{eq:utility}), we get
\begin{equation*}
u_{i_0} \lexp{-a_{i_0, j_0}} \leq 0,
\end{equation*}
which contradicts the lower bound $u_{i_0} > 0$ implied by individual rationality and condition~\CondMenResMinimum.
\QEDWrap \end{proof}

\begin{proof}[\ConfOrFull{}{Proof of }Lemma~\ref{lem:slackness}]
Since the outcome $(\mu, u, v)$ is individually rational in market $\calM$,
Lemma~\ref{lem:utility} implies that $u_i < \lexp{a_{i, \mu(i)} + 1}$
for every man $i \in I$. Lemma~\ref{lem:slackness-general}, when invoked with $\MenResp = \MenRes$, implies the lower bound
$\lexp{a_{i, \mu(i)}} \leq u_i$ for every man $i \in I$.
\QEDWrap \end{proof}

\newcommand{\Id}[1]{\mathit{#1}}

\newcommand{\pr}[1]{\left( #1 \right)}
\newcommand{\cbr}[1]{\left\{ #1 \right\}}
\newcommand{\abr}[1]{\langle #1 \rangle}
\newcommand{\sbr}[1]{\left[ #1 \right]}
\newcommand{\floor}[1]{\left\lfloor #1 \right\rfloor}
\newcommand{\ceil}[1]{\left\lceil #1 \right\rceil}
\newcommand{\mult}{\cdot}

\newcommand{\List}[1]{\abr{#1}}
\newcommand{\Set}[1]{\cbr{#1}}
\newcommand{\SetBuild}[2]{\Set{#1 \mid #2}}
\newcommand{\Seq}[1]{\abr{#1}}
\newcommand{\Tuple}[1]{\pr{#1}}
\newcommand{\Car}[1]{\left\vert #1 \right\vert}
\newcommand{\Abs}[1]{\left\vert #1 \right\vert}

\section{A Generalization of the Deferred Acceptance Algorithm}
\label{app:gener-da-alg}

This appendix presents a generalization of the deferred acceptance algorithm that provides a strategyproof and Pareto-stable mechanism for the stable marriage problem with indifferences.
The algorithm admits an $O(n^4)$-time implementation, where $n$ denotes the number of agents in the market.
We only give the definitions and lemmas that are required by App.~\ref{app:equivalence}, where we show that the mechanism presented in this appendix coincides with the group strategyproof mechanism of Sect.~\ref{sec:marriage};
full version of~\cite{DLP17} includes the details and omitted proofs.

The assignment game of Shapley and Shubik~\cite{SS71} can be viewed as an auction with multiple distinct items where each bidder is seeking to acquire at most one item.
This class of \emph{unit-demand auctions} has been heavily studied in the literature (see, e.g., Roth and Sotomayor~\cite[Chapter~8]{roth+s:match}).
In App.~\ref{app:uap}, we define the notion of a ``unit-demand auction with priorities'' (UAP), which extends the notion of a unit-demand auction, and we establish a number of useful properties of UAPs.
Appendix~\ref{app:iuap} builds on the UAP notion to define the notion of an ``iterated UAP'' (IUAP), and defines a mapping from an IUAP to a UAP by describing an algorithm that generalizes the deferred algorithm.
App~\ref{app:alg-smiw} presents our polynomial-time algorithm that provides a strategyproof
Pareto-stable mechanism.

\subsection{Unit-Demand Auctions with Priorities}
\label{app:uap}

In this appendix, we first formally define the notion of a unit-demand auction with priorities (UAP).
Then, we describe an associated matroid for a given UAP and we use this matroid to define the notion of a ``greedy MWM''.
We start with some useful definitions.

\newcommand{\Reals}{\mathbb{R}}

\newcommand{\Item}{v}
\newcommand{\Itemp}{\Item'}
\newcommand{\Itempp}{\Item''}
\newcommand{\ItemSub}[1]{\Item_{#1}}

\newcommand{\Items}{V}
\newcommand{\Itemsp}{\Items'}

\newcommand{\UBid}{\beta}

\newcommand{\Offer}{x}
\newcommand{\Offerp}{\Offer'}
\newcommand{\OfferSub}[1]{\Offer_{#1}}
\newcommand{\OfferOpt}{\Opt{\Offer}}

A \emph{(unit-demand) bid $\UBid$ for a set of items $\Items$} is a subset of $\Items \times \Reals$ such that no two pairs in $\UBid$ share the same first component.
(So $\UBid$ may be viewed as a partial function from $\Items$ to $\Reals$.)

\newcommand{\Opt}[1]{{#1}^*}

\newcommand{\Bidder}{u}
\newcommand{\Bidderp}{\Bidder'}
\newcommand{\Bidderpp}{\Bidder''}
\newcommand{\BidderSub}[1]{\Bidder_{#1}}
\newcommand{\BidderOpt}{\Opt{\Bidder}}

\newcommand{\BidId}{\alpha} 
\newcommand{\Pri}{z} 
\newcommand{\Prip}{\Pri'}
\newcommand{\Pripp}{\Pri''}
\newcommand{\BidIdSub}[1]{\BidId_{#1}}
\newcommand{\PriSub}[1]{\Pri_{#1}}
\newcommand{\PriOpt}{\Opt{\Pri}}

\newcommand{\BId}[1]{\Id{id}(#1)}
\newcommand{\BUBid}[1]{\Id{bid}(#1)}
\newcommand{\BPriId}{\Id{priority}}
\newcommand{\BPri}[1]{\BPriId(#1)}

\newcommand{\BItemsId}{\Id{items}}
\newcommand{\BItems}[1]{\BItemsId(#1)}

A \emph{bidder $\Bidder$ for a set of items $\Items$} is a triple $(\BidId, \UBid, \Pri)$ where $\BidId$ is an integer ID, $\UBid$ is a bid for $\Items$, and $\Pri$ is a real priority.
For any bidder $\Bidder = (\BidId, \UBid, \Pri)$, we define $\BId{\Bidder}$ as $\BidId$, $\BUBid{\Bidder}$ as $\UBid$, $\BPri{\Bidder}$ as $\Pri$, and $\BItems{\Bidder}$ as the union, over all $(\Item, \Offer)$ in $\UBid$, of $\Set{\Item}$.

\newcommand{\Bidders}{U}
\newcommand{\Biddersp}{\Bidders'}
\newcommand{\Bidderspp}{\Bidders''}
\newcommand{\BiddersSub}[1]{\Bidders_{#1}}

\newcommand{\Auc}{A}
\newcommand{\Aucp}{\Auc'}
\newcommand{\Aucpp}{\Auc''}
\newcommand{\AucSub}[1]{\Auc_{#1}}

A \emph{unit-demand auction with priorities (UAP)} is a pair $\Auc = (\Bidders, \Items)$ satisfying the following conditions:
$\Items$ is a set of items;
$\Bidders$ is a set of bidders for $\Items$;
each bidder in $\Bidders$ has a distinct ID\@.

\newcommand{\Edge}{e}
\newcommand{\Edges}{E}

\newcommand{\WId}{\Id{w}}
\newcommand{\WEdge}[1]{\WId(#1)}
\newcommand{\WEdges}[1]{\WId(#1)}
\newcommand{\WPair}[2]{\WId(#1, #2)}
\newcommand{\WMatch}[1]{\WEdges{#1}}

A UAP $\Auc = (\Bidders, \Items)$ may be viewed as an edge-weighted bipartite graph, where the set of edges incident on bidder $\Bidder$ correspond to $\BUBid{\Bidder}$:
for each pair $(\Item, \Offer)$ in $\BUBid{\Bidder}$, there is an edge $(\Bidder, \Item)$ of weight $\Offer$.
We refer to a matching (resp., maximum-weight matching (MWM), maximum-cardinality MWM (MCMWM)) in the associated edge-weighted bipartite graph as a matching (resp., MWM, MCMWM) of $\Auc$.
For any edge $\Edge = (\Bidder, \Item)$ in a given UAP, the associated weight is denoted $\WEdge{\Edge}$ or $\WPair{\Bidder}{\Item}$.
For any set of edges $\Edges$, we define $\WEdges{\Edges}$ as $\sum_{\Edge \in \Edges} \WEdge{\Edge}$.

\newcommand{\Match}{M}
\newcommand{\Matchp}{\Match'}
\newcommand{\Matchpp}{\Match''}
\newcommand{\Matchppp}{\Match'''}
\newcommand{\MatchSub}[1]{\Match_{#1}}
\newcommand{\MatchSubp}[1]{\Match_{#1}'}
\newcommand{\MatchOpt}{\Opt{\Match}}

\newcommand{\Indeps}{\mathcal{I}}

\begin{lemma}
  \label{lem:matroid}
  Let $\Auc = (\Bidders, \Items)$ be a UAP, and let $\Indeps$ denote the set of all subsets $\Biddersp$ of $\Bidders$ such that there exists an MWM of $\Auc$ that matches every bidder in $\Biddersp$.
  Then $(\Bidders, \Indeps)$ is a matroid.
\end{lemma}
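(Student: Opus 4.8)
The plan is to verify the three matroid axioms for $(\Bidders, \Indeps)$. Since the underlying edge-weighted bipartite graph is finite, at least one MWM of $\Auc$ exists, so the empty set lies in $\Indeps$; and if some MWM saturates every bidder of $\Biddersp$, it saturates every bidder of each subset of $\Biddersp$, which gives the hereditary property. All the work is in the exchange axiom: given $\Biddersp, \Bidderspp \in \Indeps$ with $\Car{\Bidderspp} < \Car{\Biddersp}$, I must produce a bidder $\Bidder \in \Biddersp \setminus \Bidderspp$ with $\Bidderspp \cup \Set{\Bidder} \in \Indeps$.

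The key tool is a \emph{component-swap lemma}: if $\Match$ and $\Matchp$ are both MWMs of $\Auc$ and $C$ is any connected component (an alternating path or even cycle) of the symmetric difference $\Match \triangle \Matchp$, then $\Match \triangle C$ is again an MWM. I would prove this by observing that $\Match \triangle C$ is a matching, so $\WEdges{\Match \triangle C} = \WEdges{\Match} - \WEdges{\Match \cap C} + \WEdges{\Matchp \cap C} \leq \WEdges{\Match}$ by maximality, whence $\WEdges{\Matchp \cap C} \leq \WEdges{\Match \cap C}$; the symmetric computation starting from $\Matchp$ gives the reverse inequality, so $\WEdges{\Match \cap C} = \WEdges{\Matchp \cap C}$ and $\Match \triangle C$ has weight $\WEdges{\Match}$. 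The structural consequence I need is that every vertex interior to a component of $\Match \triangle \Matchp$ remains saturated after the swap, while the two endpoints of a path component are exactly the vertices saturated by precisely one of $\Match, \Matchp$.

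For the exchange axiom, fix MWMs $\Match$ saturating $\Bidderspp$ and $\Matchp$ saturating $\Biddersp$, and examine $\Match \triangle \Matchp$. If some $\Bidder \in \Biddersp \setminus \Bidderspp$ is already saturated by $\Match$, then $\Match$ itself witnesses $\Bidderspp \cup \Set{\Bidder} \in \Indeps$ and we are done; so assume every such $\Bidder$ is exposed by $\Match$, hence is an endpoint of a path $P$ whose incident edge lies in $\Matchp$. Swapping $\Match$ along $P$ yields, by the component-swap lemma, an MWM that now saturates $\Bidder$; the only vertex that can lose saturation is the other endpoint of $P$, and this is harmless unless that endpoint is a bidder of $\Bidderspp$. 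The main obstacle is precisely to rule out that every candidate $\Bidder \in \Biddersp \setminus \Bidderspp$ is blocked in this way. I would resolve it by a counting argument: a blocked candidate pairs off, along its path, with a distinct bidder lying in $\Bidderspp \setminus \Biddersp$ (a vertex saturated only by $\Match$), so if all candidates were blocked we would obtain an injection from $\Biddersp \setminus \Bidderspp$ into $\Bidderspp \setminus \Biddersp$, contradicting $\Car{\Bidderspp} < \Car{\Biddersp}$. Hence some candidate is unblocked, and swapping along its path produces an MWM saturating $\Bidderspp \cup \Set{\Bidder}$, establishing the exchange axiom.
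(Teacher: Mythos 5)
Your proof is correct, but note that there is nothing in this paper to compare it against: the lemma is stated without proof, with App.~\ref{app:gener-da-alg} explicitly deferring the omitted proofs, including this one, to the full version of~\cite{DLP17}. Judged on its own, your argument is complete and is the natural self-contained route. The two delicate points both check out. First, your component-swap lemma is proved correctly by the two-sided inequality (flipping a component $C$ of $M \triangle M'$ inside $M$ and inside $M'$ yields $w(M' \cap C) \leq w(M \cap C)$ and $w(M \cap C) \leq w(M' \cap C)$), and it needs no sign assumption on the edge weights, which matters here since bids are arbitrary reals. Second, the counting step is sound for reasons worth making explicit: a candidate $u \in U' \setminus U''$ that is exposed by $M$ has degree one in $M \triangle M'$ with its incident edge in $M'$, so it is an endpoint of a path; since a bidder-to-bidder alternating path has an even number of edges, its two end edges lie in different matchings, so two $M$-exposed candidates can never be the two endpoints of the same path, and the map from blocked candidates to opposite endpoints is injective; moreover a blocking opposite endpoint is $M'$-exposed, hence (as $M'$ saturates $U'$) lies in $U'' \setminus U'$, so blocking all of $U' \setminus U''$ would inject it into the strictly smaller set $U'' \setminus U'$, the contradiction you want. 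One could instead derive the lemma less directly from LP duality (MWMs are exactly the matchings in the tight subgraph saturating all positive-dual vertices, reducing the claim to a transversal-matroid statement), but your alternating-path exchange argument is elementary and fully rigorous as written.
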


\newcommand{\Path}{P}
\newcommand{\PathOther}{Q}
\newcommand{\Cycle}{C}
\newcommand{\PathCycle}{Q}
\newcommand{\PathCycleSub}[1]{\PathCycle_{#1}}
\newcommand{\PathSet}{\mathcal{\Path}}
\newcommand{\PathCycleColl}{\mathcal{S}}

\newcommand{\Matroid}[1]{\Id{matroid(#1)}}

For any UAP $\Auc$, we define $\Matroid{\Auc}$ as the matroid of Lemma~\ref{lem:matroid}.

For any UAP $\Auc = (\Bidders, \Items)$ and any independent set $\Biddersp$ of $\Matroid{\Auc}$, we define the \emph{priority of $\Biddersp$} as the sum, over all bidders $\Bidder$ in $\Biddersp$, of $\BPri{\Bidder}$.
For any UAP $\Auc$, the matroid greedy algorithm can be used to compute a maximum-priority maximal independent set of $\Matroid{\Auc}$.

\newcommand{\MatchedBidders}[1]{\Id{matched}(#1)}

For any matching $\Match$ of a UAP $\Auc = (\Bidders, \Items)$, we define $\MatchedBidders{\Match}$ as the set of all bidders in $\Bidders$ that are matched in $\Match$.
We say that an MWM $\Match$ of a UAP $\Auc$ is \emph{greedy} if $\MatchedBidders{\Match}$ is a maximum-priority maximal independent set of $\Matroid{\Auc}$.

\newcommand{\MPriority}[1]{\Id{priority}(#1)}

For any matching $\Match$ of a UAP, we define the \emph{priority of $\Match$}, denoted $\MPriority{\Match}$, as the sum, over all bidders $\Bidder$ in $\MatchedBidders{\Match}$, of $\BPri{\Bidder}$.
Thus an MWM is greedy if and only if it is a maximum-priority MCMWM\@.

\begin{lemma}
  \label{lem:uap-distribution}
  All greedy MWMs of a given UAP have the same distribution of priorities.
\end{lemma}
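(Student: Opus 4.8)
The plan is to reduce the statement to a purely matroid-theoretic fact and then prove that fact via a fundamental-circuit exchange argument. By definition, a greedy MWM $\Match$ of $\Auc = (\Bidders, \Items)$ is one whose matched set $\MatchedBidders{\Match}$ is a maximum-priority maximal independent set of $\Matroid{\Auc}$. Since all maximal independent sets of a matroid are equicardinal bases, $\MatchedBidders{\Match}$ is in fact a maximum-priority basis of $\Matroid{\Auc}$, and the ``distribution of priorities'' of $\Match$ is the multiset $\SetBuild{\BPri{\Bidder}}{\Bidder \in \MatchedBidders{\Match}}$. So it suffices to establish the following claim about the weighted matroid $\Matroid{\Auc}$, where each ground element $\Bidder$ carries weight $\BPri{\Bidder}$: all maximum-weight bases induce the same multiset of weights.

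To prove this claim, let $\rho$ denote the rank function of $\Matroid{\Auc}$, and for a real threshold $t$ let $\Bidders_{\geq t}$ be the set of bidders with priority at least $t$. The key step is the threshold identity: for every maximum-priority basis $B$ and every $t$, one has $\abs{B \cap \Bidders_{\geq t}} = \rho(\Bidders_{\geq t})$. The inequality $\abs{B \cap \Bidders_{\geq t}} \leq \rho(\Bidders_{\geq t})$ is immediate, since $B \cap \Bidders_{\geq t}$ is an independent subset of $\Bidders_{\geq t}$. For the reverse inequality I would argue by contradiction: if $\abs{B \cap \Bidders_{\geq t}} < \rho(\Bidders_{\geq t})$, then $B \cap \Bidders_{\geq t}$ is not a maximal independent subset of $\Bidders_{\geq t}$, so by the matroid augmentation property there is an element $e \in \Bidders_{\geq t} \setminus B$ with $(B \cap \Bidders_{\geq t}) \cup \Set{e}$ independent.

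The \emph{main obstacle}, and the heart of the argument, is showing that this $e$ can be exchanged into $B$ at a strict weight gain. Adding $e$ to the basis $B$ creates a unique fundamental circuit $C$ with $e \in C$ and $C \setminus \Set{e} \subseteq B$. If every element of $C \setminus \Set{e}$ had priority at least $t$, then $C$ would be contained in the independent set $(B \cap \Bidders_{\geq t}) \cup \Set{e}$, contradicting that $C$ is a circuit. Hence some $f \in C \setminus \Set{e}$ satisfies $\BPri{f} < t \leq \BPri{e}$, so $(B \setminus \Set{f}) \cup \Set{e}$ is a basis of strictly larger total priority, contradicting the maximality of $B$. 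This completes the threshold identity, and it is the delicate point precisely because it is where the matroid exchange structure forces the count of high-priority elements to be independent of the chosen maximum-priority basis.

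Finally I would assemble the multiset from the threshold counts. Enumerate the distinct priority values occurring in $\Bidders$ as $w_1 > w_2 > \cdots > w_m$. By the threshold identity, the number of elements of priority exactly $w_k$ in any maximum-priority basis $B$ equals $\rho(\Bidders_{\geq w_k}) - \rho(\Bidders_{\geq w_{k-1}})$ (with the second term taken to be $0$ when $k = 1$), a quantity that depends only on $\Matroid{\Auc}$ and not on $B$. Therefore all maximum-priority bases of $\Matroid{\Auc}$ share the same multiset of priorities. Applying this to the bases $\MatchedBidders{\Match}$ arising from the greedy MWMs of $\Auc$ yields that all greedy MWMs have the same distribution of priorities, as desired.
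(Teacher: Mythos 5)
Your proof is correct, but there is nothing in this document to compare it against: the paper states Lemma~\ref{lem:uap-distribution} without proof, since App.~\ref{app:gener-da-alg} explicitly defers all omitted proofs to the full version of~\cite{DLP17}. Judged on its own merits, your reduction is legitimate: all maximal independent sets of the matroid $\Matroid{\Auc}$ of Lemma~\ref{lem:matroid} are equicardinal bases, so the matched set of a greedy MWM is precisely a maximum-priority basis, and the lemma becomes the classical fact that all maximum-weight bases of a matroid carry the same multiset of weights. Your threshold identity $\abs{B \cap \Bidders_{\geq t}} = \rho(\Bidders_{\geq t})$ is established correctly: the augmentation step does yield $e \in \Bidders_{\geq t} \setminus B$ (an $e$ in $B$ would already lie in $B \cap \Bidders_{\geq t}$); the fundamental circuit $C$ of $e$ with respect to $B$ exists because $B + e$ is dependent, which is exactly where you use that a greedy MWM matches a \emph{maximal} independent set of $\Matroid{\Auc}$; and the element $f \in C \setminus \Set{e}$ with $\BPri{f} < t \leq \BPri{e}$ gives a strictly priority-improving basis exchange, contradicting maximality. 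The telescoping over the distinct priority values $w_1 > \cdots > w_m$ then recovers the count of each priority from the rank function alone, which depends only on $\Auc$ and not on the chosen greedy MWM. Two remarks: your argument nowhere uses positivity or integrality of the priorities, so it proves the lemma at the full generality of the paper's definition (real priorities); and your threshold identity in fact characterizes maximum-priority bases (the converse implication also holds), though only the direction you proved is needed here.
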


\newcommand{\Greedy}[2]{\Id{greedy}(#1, #2)}

For any UAP $\Auc$ and any real priority $\Pri$, we define $\Greedy{\Auc}{\Pri}$ as the (uniquely defined, by Lemma~\ref{lem:uap-distribution}) number of matched bidders with priority $\Pri$ in any greedy MWM of $\Auc$.

\subsection{Iterated Unit-Demand Auctions with Priorities}
\label{app:iuap}

In this appendix, we first formally define the notion of an iterated unit-demand auction with priorities (IUAP).
An IUAP allows the bidders, called ``multibidders'' in this context, to have a sequence of unit-demand bids instead of a single unit-demand bid.
Then we define a mapping from an IUAP to a UAP by describing an algorithm that generalizes the deferred algorithm.
We start with some useful definitions.

\newcommand{\MBidder}{t}
\newcommand{\MBidderp}{\MBidder'}
\newcommand{\MBidderSub}[1]{\MBidder_{#1}}

\newcommand{\MBidders}{T}
\newcommand{\MBiddersp}{\MBidders'}

\newcommand{\BidderSeq}{\sigma}
\newcommand{\BidderSeqp}{\BidderSeq'}
\newcommand{\BidderSeqSub}[1]{\BidderSeq_{#1}}

\newcommand{\MBPri}[1]{\Id{priority}(#1)}
\newcommand{\MBBidder}[2]{\Id{bidder}(#1, #2)}
\newcommand{\MBBidders}[2]{\Id{bidders}(#1, #2)}
\newcommand{\MBBiddersAll}[1]{\Id{bidders}(#1)}

A \emph{multibidder $\MBidder$ for a set of items $\Items$} is a pair $(\BidderSeq, \Pri)$ where $\Pri$ is a real priority and $\BidderSeq$ is a sequence of bidders for $\Items$ such that all the bidders in $\BidderSeq$ have distinct IDs and a common priority $\Pri$.
We define $\MBPri{\MBidder}$ as $\Pri$.
For any integer $i$ such that $1 \leq i \leq \Car{\BidderSeq}$, we define $\MBBidder{\MBidder}{i}$ as the bidder $\BidderSeq(i)$.
For any integer $i$ such that $0 \leq i \leq \Car{\BidderSeq}$, we define $\MBBidders{\MBidder}{i}$ as $\SetBuild{\MBBidder{\MBidder}{j}}{1 \leq
j \leq i}$.
We define $\MBBiddersAll{\MBidder}$ as $\MBBidders{\MBidder}{\Car{\BidderSeq}}$.

\newcommand{\IAuc}{B}
\newcommand{\IAucp}{\IAuc'}
\newcommand{\IAucpp}{\IAuc''}
\newcommand{\IAucppp}{\IAuc'''}
\newcommand{\IAucSub}[1]{\IAuc_{#1}}

\newcommand{\IABidders}[1]{\Id{bidders}(#1)}

An \emph{iterated UAP (IUAP)} is a pair $\IAuc = (\MBidders, \Items)$ where $\Items$ is a set of items and $\MBidders$ is a set of multibidders for $\Items$.
In addition, for any distinct multibidders $\MBidder$ and $\MBidderp$ in $\MBidders$, the following conditions hold:
$\MBPri{\MBidder} \not= \MBPri{\MBidderp}$;
if $\Bidder$ belongs to $\MBBiddersAll{\MBidder}$ and $\Bidderp$ belongs to $\MBBiddersAll{\MBidderp}$, then $\BId{\Bidder} \not = \BId{\Bidderp}$.
For any IUAP $\IAuc = (\MBidders, \Items)$, we define $\IABidders{\IAuc}$ as the union, over all $\MBidder$ in $\MBidders$, of $\MBBiddersAll{\MBidder}$.

\newcommand{\AlgToUAP}{\textsc{ToUap}}

Having defined the notion of an IUAP, we now describe an algorithm \AlgToUAP\ that maps a given IUAP to a UAP.
Algorithm \AlgToUAP\ generalizes the deferred acceptance algorithm.
In each iteration of the deferred acceptance algorithm, an arbitrary single man is chosen, and this man reveals his next choice.
In each iteration of \AlgToUAP, an arbitrary single multibidder is chosen, and this multibidder reveals its next bid.
We state in Lemma~\ref{lem:iuap-confluence} that, like the deferred acceptance algorithm, algorithm \AlgToUAP\ is confluent:
the output does not depend on the nondeterministic choices made during an execution.

\newcommand{\Prefix}[2]{\Id{prefix}(#1, #2)}

Let $\Auc$ be a UAP $(\Bidders, \Items)$ and let $\IAuc$ be an IUAP $(\MBidders, \Items)$.
The predicate $\Prefix{\Auc}{\IAuc}$ is said to hold if $\Bidders \subseteq \IABidders{\IAuc}$ and for any multibidder $\MBidder$ in $\MBidders$, $\Bidders \cap \MBBiddersAll{\MBidder} = \MBBidders{\MBidder}{i}$ for some $i$.

\newcommand{\Conf}{C}
\newcommand{\Confp}{\Conf'}
\newcommand{\Confpp}{\Conf''}
\newcommand{\ConfSub}[1]{\Conf_{#1}}

A \emph{configuration $\Conf$} is a pair $(\Auc, \IAuc)$ where $\Auc$ is a UAP, $\IAuc$ is an IUAP, and $\Prefix{\Auc}{\IAuc}$ holds.

\newcommand{\MBidderOfB}[2]{\Id{multibidder}(#1, #2)}

Let $\Conf = (\Auc, \IAuc)$ be a configuration, where $\Auc = (\Bidders, \Items)$ and $\IAuc = (\MBidders, \Items)$, and let $\Bidder$ be a bidder in $\Bidders$.
Then we define $\MBidderOfB{\Conf}{\Bidder}$ as the unique multibidder $\MBidder$ in $\MBidders$ such that $\Bidder$ belongs to $\MBBiddersAll{\MBidder}$.

\newcommand{\BiddersOfMB}[2]{\Id{bidders}(#1, #2)}

Let $\Conf = (\Auc, \IAuc)$ be a configuration where $\Auc = (\Bidders, \Items)$ and $\IAuc = (\MBidders, \Items)$.
For any $\MBidder$ in $\MBidders$, we define $\BiddersOfMB{\Conf}{\MBidder}$ as $\SetBuild{\Bidder \in \Bidders}{\MBidderOfB{\Conf}{\Bidder} = \MBidder}$.

\newcommand{\CReady}[1]{\Id{ready}(#1)}

Let $\Conf = (\Auc, \IAuc)$ be a configuration where $\IAuc = (\MBidders, \Items)$.
We define $\CReady{\Conf}$ as the set of all bidders $\Bidder$ in $\IABidders{\IAuc}$ such that $\Greedy{\Auc}{\BPri{\Bidder}} = 0$ and $\Bidder = \MBBidder{\MBidder}{\Car{\BiddersOfMB{\Conf}{\MBidder}} + 1}$ where $\MBidder = \MBidderOfB{\Conf}{\Bidder}$.

\begin{algorithm}
\caption{\AlgToUAP$(\IAuc)$}
\label{alg:to-uap}
\begin{algorithmic}[1]
  \Require An IUAP $\IAuc = (\MBidders, \Items)$
  \State $\Auc \gets (\emptyset, \Items)$
  \State $\Conf \gets (\Auc, \IAuc)$
  \While{$\CReady{\Conf}$ is nonempty}
    \State $\Auc \gets \Auc +$ an arbitrary bidder in $\CReady{\Conf}$\label{line:reveal}
    \State $\Conf \gets (\Auc, \IAuc)$
  \EndWhile
  \State \Return $\Auc$
\end{algorithmic}
\end{algorithm}

Our algorithm for mapping an IUAP to a UAP is Algorithm~\ref{alg:to-uap}.
The input is an IUAP $\IAuc$ and the output is a UAP $\Auc$ such that $\Prefix{\Auc}{\IAuc}$ holds.
The algorithm starts with the UAP consisting of all the items in $\Items$ but no bidders.
At this point, no bidder of any multibidder is ``revealed''.
Then, the algorithm iteratively and chooses an arbitrary ``ready'' bidder and ``reveals'' it by adding it to the UAP that is maintained in the program variable $\Auc$.
A bidder $\Bidder$ associated with some multibidder $\MBidder = (\BidderSeq, \Pri)$ is ready if $\Bidder$ is not revealed and for each bidder $\Bidderp$ that precedes $\Bidder$ in $\BidderSeq$, $\Bidderp$ is revealed and is not matched in any greedy MWM of $\Auc$.
It is easy to verify that the predicate $\Prefix{\Auc}{\IAuc}$ is an invariant of the algorithm loop:
if a bidder $\Bidder$ belonging to a multibidder $\MBidder$ is to be revealed at an iteration, and $\Bidders \cap \MBBiddersAll{\MBidder} = \MBBidders{\MBidder}{i}$ for some integer $i$ at the beginning of this iteration, then $\Bidders \cap \MBBiddersAll{\MBidder} = \MBBidders{\MBidder}{i+1}$ after revealing $\Bidder$, where $(\Bidders, \Items)$ is the UAP that is maintained by the program variable $\Auc$ at the beginning of the iteration.
No bidder can be revealed more than once since a bidder cannot be ready after it has been revealed;
it follows that the algorithm terminates.
We now give some useful definitions and lemmas, and we state that the output of the algorithm is uniquely determined (Lemma~\ref{lem:iuap-confluence}), even though the bidder that is revealed in each iteration is chosen nondeterministically.

\newcommand{\CTailId}{\Id{tail}}
\newcommand{\CTail}[1]{\CTailId(#1)}

For any configuration $\Conf = (\Auc, \IAuc)$, we define the predicate $\CTail{\Conf}$ to hold if for any bidder $\Bidder$ that is matched in some greedy MWM of $\Auc$, we have $\Bidder = \MBBidder{\MBidder}{\Car{\BiddersOfMB{\Conf}{\MBidder}}}$ where $\MBidder$ denotes $\MBidderOfB{\Conf}{\Bidder}$.

\begin{lemma}
  \label{lem:iuap-one}
  Let $\Conf = (\Auc, \IAuc)$ be a configuration where $\IAuc = (\MBidders, \Items)$ and assume that $\CTail{\Conf}$ holds.
  Then $\Greedy{\Auc}{\MBPri{\MBidder}} \leq 1$ for each $\MBidder$ in $\MBidders$.
\end{lemma}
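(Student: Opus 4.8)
The plan is to reduce the claim to a counting argument over a single greedy MWM, exploiting two facts: distinct multibidders carry distinct priorities, and $\CTail{\Conf}$ pins every bidder that can appear in a greedy MWM to the tail of its multibidder's revealed prefix.

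First I would fix an arbitrary multibidder $\MBidder$ in $\MBidders$ and set $\Pri = \MBPri{\MBidder}$. The key preliminary observation is that every revealed bidder whose priority equals $\Pri$ must belong to $\MBidder$: each bidder of a multibidder inherits that multibidder's priority, and the IUAP definition requires $\MBPri{\MBidder} \neq \MBPri{\MBidderp}$ for distinct $\MBidder, \MBidderp$, so $\MBidder$ is the unique multibidder of priority $\Pri$. Hence the revealed bidders of priority $\Pri$ are precisely those in $\BiddersOfMB{\Conf}{\MBidder}$.

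Next, by Lemma~\ref{lem:uap-distribution} the quantity $\Greedy{\Auc}{\Pri}$ is well defined, so I would evaluate it on one convenient witness: let $\Match$ be any greedy MWM of $\Auc$, so that $\Greedy{\Auc}{\Pri}$ equals the number of bidders of priority $\Pri$ matched in $\Match$. Consider such a matched bidder $\Bidder$. Because $\Match$ is itself a greedy MWM, $\Bidder$ is matched in some greedy MWM, so the hypothesis $\CTail{\Conf}$ applies: writing $\MBidderp = \MBidderOfB{\Conf}{\Bidder}$, it yields $\Bidder = \MBBidder{\MBidderp}{\Car{\BiddersOfMB{\Conf}{\MBidderp}}}$. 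Since $\Bidder$ has priority $\Pri$, the previous paragraph forces $\MBidderp = \MBidder$, so $\Bidder = \MBBidder{\MBidder}{\Car{\BiddersOfMB{\Conf}{\MBidder}}}$, a single fixed bidder, namely the last revealed bidder of $\MBidder$.

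Consequently at most one bidder of priority $\Pri$ can be matched in $\Match$, which gives $\Greedy{\Auc}{\MBPri{\MBidder}} \leq 1$, as required. I do not expect a substantial obstacle here; the argument is a direct unwinding of the definitions, and the whole content lies in the two observations above. The only points demanding care are translating the existential quantifier ``matched in some greedy MWM'' inside $\CTail{\Conf}$ into a statement about the fixed witness $\Match$ (which is immediate, since $\Match$ is a greedy MWM), and the degenerate case $\BiddersOfMB{\Conf}{\MBidder} = \emptyset$, where no bidder of priority $\Pri$ is revealed and the bound $\Greedy{\Auc}{\MBPri{\MBidder}} = 0$ holds trivially.
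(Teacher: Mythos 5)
Your proof is correct: since distinct multibidders have distinct priorities and every bidder inherits its multibidder's priority, all priority-$\MBPri{\MBidder}$ bidders in $\Auc$ belong to $\MBidder$, and $\CTail{\Conf}$ forces any such bidder matched in a greedy MWM to be the single bidder $\MBBidder{\MBidder}{\Car{\BiddersOfMB{\Conf}{\MBidder}}}$, so each greedy MWM matches at most one of them. The paper states this lemma without proof (the details are deferred to the full version of~\cite{DLP17}), and your direct unwinding of the definitions, including the degenerate case where no bidder of $\MBidder$ has yet been revealed, is precisely the intended argument.
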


\begin{lemma}
  \label{lem:iuap-tail}
  The predicate $\CTail{\Conf}$ is an invariant of the Alg.~\ref{alg:to-uap} loop.
\end{lemma}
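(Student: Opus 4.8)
The plan is to prove the claim by induction on the iterations of the loop in Algorithm~\ref{alg:to-uap}, treating $\CTail{\Conf}$ as a loop invariant. For the base case, the initial UAP $\Auc = (\emptyset, \Items)$ has no bidders, so no bidder is matched in any greedy MWM and $\CTail{\Conf}$ holds vacuously. For the inductive step I assume $\CTail{\Conf}$ holds at the start of an iteration, where $\Conf = (\Auc, \IAuc)$ and $\Auc = (\Bidders, \Items)$, and I let $\BidderOpt \in \CReady{\Conf}$ be the bidder revealed in line~\ref{line:reveal}, forming $\Aucp = (\Bidders \cup \Set{\BidderOpt}, \Items)$ and $\Confp = (\Aucp, \IAuc)$. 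Writing $\MBidder$ for the multibidder that $\BidderOpt$ belongs to, letting $\Pri$ denote $\MBPri{\MBidder}$, and setting $k = \Car{\BiddersOfMB{\Conf}{\MBidder}}$, the definition of $\CReady{\Conf}$ gives $\Greedy{\Auc}{\Pri} = 0$ and $\BidderOpt = \MBBidder{\MBidder}{k + 1}$, so after revealing $\BidderOpt$ the latest-revealed bidder of $\MBidder$ is $\BidderOpt$ itself, while the latest-revealed bidder of every other multibidder is unchanged.

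It is convenient to restate $\CTail{\Confp}$ in contrapositive form: every bidder of $\Confp$ that is not the latest-revealed bidder of its multibidder is unmatched in every greedy MWM of $\Aucp$. The non-latest-revealed bidders of $\Confp$ are exactly $\MBBidder{\MBidder}{1}, \ldots, \MBBidder{\MBidder}{k}$ together with, for each multibidder $\MBidderp \neq \MBidder$, all of its revealed bidders except the last. I first record that all of these bidders are unmatched in every greedy MWM of the old UAP $\Auc$: for the bidders of each $\MBidderp \neq \MBidder$ this is the inductive hypothesis $\CTail{\Conf}$ (their latest-revealed bidder is the same in $\Conf$ and $\Confp$), and for $\MBBidder{\MBidder}{1}, \ldots, \MBBidder{\MBidder}{k}$ it is immediate from $\Greedy{\Auc}{\Pri} = 0$, which asserts that no bidder of priority $\Pri$—equivalently, no revealed bidder of $\MBidder$—is matched in any greedy MWM of $\Auc$. (Lemma~\ref{lem:iuap-one}, applicable to $\Conf$ by the inductive hypothesis, is also available here, giving $\Greedy{\Auc}{\MBPri{\MBidderp}} \leq 1$ for every multibidder.) Thus $\CTail{\Confp}$ reduces to showing that revealing the single bidder $\BidderOpt$ does not cause any of these previously-unmatched bidders to become matched in some greedy MWM of $\Aucp$.

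To attack this core claim I would work with the matroids $\Matroid{\Auc}$ and $\Matroid{\Aucp}$ supplied by Lemma~\ref{lem:matroid}. Since distinct multibidders have distinct priorities while the bidders of one multibidder share a priority, the priorities partition the bidders into weight classes indexed by multibidders, a greedy MWM matches a maximum-priority basis of the matroid, and Lemma~\ref{lem:uap-distribution} fixes the priority distribution. In these terms I would translate ``$\Bidder$ is matched in some greedy MWM'' into the matroid statement that $\Bidder$ lies outside the closure of the set of strictly-higher-priority bidders; then $\Greedy{\Auc}{\Pri} = 0$ says precisely that every revealed bidder of $\MBidder$ lies in that closure. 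The goal is to show that passing from $\Auc$ to $\Aucp$ only enlarges the priority-$\Pri$ class by $\BidderOpt$ and does not move any previously-spanned bidder out of the relevant closure, so that the matched sets at every priority level are unchanged except for the possible inclusion of $\BidderOpt$.

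I expect this last step to be the main obstacle, because adding a bidder to a UAP can in principle raise the maximum matching weight and reshuffle the entire MWM structure. The crux is to argue that the readiness condition $\Greedy{\Auc}{\Pri} = 0$ prevents such a reshuffle from promoting a non-latest bidder: intuitively, since no priority-$\Pri$ bidder was matched before, any augmenting structure created by $\BidderOpt$ can be routed so that it either matches $\BidderOpt$ alone within its class or leaves every lower-priority matched set exactly as the greedy choice already fixed it. I would make this precise through an exchange/augmenting-path argument comparing a greedy MWM of $\Aucp$ with one of $\Auc$ and analyzing their symmetric difference, concluding that the only bidder whose matched status can change is $\BidderOpt$. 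Combining this with the reduction of the second paragraph yields $\CTail{\Confp}$ and closes the induction.
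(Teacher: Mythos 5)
The paper does not actually prove Lemma~\ref{lem:iuap-tail}: it is stated without proof, with details deferred to the full version of~\cite{DLP17}, so there is no in-paper argument to compare against. Judged on its own, your proposal has the right scaffolding --- the induction over loop iterations, the vacuous base case, and the reduction of the inductive step to the claim that revealing the ready bidder $u^*$ cannot cause any previously unmatched, non-latest bidder to become matched in some greedy MWM of the enlarged UAP --- and your use of the readiness condition $\mathit{greedy}(A, z) = 0$ to handle the revealed multibidder's earlier bidders is correct. But the argument stops exactly where the lemma's content lies. The matroid translation does not carry the day by itself: the matroid of the enlarged UAP is not obtained from the old one by adding a single element, since revealing $u^*$ can raise the maximum matching weight and destroy independence of bidder sets that were independent before; so ``previously spanned elements stay spanned'' is not a formal consequence of anything you have established --- it is a restatement of the goal. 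You acknowledge this (``I expect this last step to be the main obstacle'') and defer it to an unspecified exchange/augmenting-path analysis, which leaves the proof incomplete.

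Moreover, the precise claim you propose that analysis should yield --- ``the only bidder whose matched status can change is $u^*$'' --- is false. Revealing $u^*$ can and routinely does displace a currently matched bidder: with a single item $v$, a revealed bidder $b$ of priority $1$ offering weight $5$, and a newly revealed $u^*$ of priority $2$ offering weight $10$, the unique greedy MWM after the reveal matches $u^*$ and unmatches $b$, whose matched status has changed. Such displacement is exactly how the algorithm progresses (it is what makes the displaced multibidder's next bidder ready later), and it is fully consistent with the invariant, which constrains only which bidders \emph{may} be matched, not which must remain matched. What you actually need is the one-directional monotonicity: a non-latest bidder that is unmatched in every greedy MWM remains unmatched in every greedy MWM after the reveal --- a ``rejections are final'' property analogous to deferred acceptance. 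That is the genuinely hard step, it is the substance of the lemma, and it is missing from the proposal.
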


\begin{lemma}
  \label{lem:iuap-confluence}
  Let $\IAuc = (\MBidders, \Items)$ be an IUAP\@.
  Then all executions of Alg.~\ref{alg:to-uap} on input $\IAuc$ produce the same output.
\end{lemma}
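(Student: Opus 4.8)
The plan is to derive the uniqueness of the output from the fact that the reveal step of Alg.~\ref{alg:to-uap} is a terminating, locally confluent rewriting relation, and then invoke Newman's lemma. Observe first that any configuration $\Conf = (\Auc, \IAuc)$ reachable by the algorithm is completely determined by its set of revealed bidders, since $\IAuc$ is fixed and $\Auc$ consists exactly of those bidders (with $\Prefix{\Auc}{\IAuc}$ an invariant). Writing $\Conf \to \Confp$ when $\Confp$ is obtained from $\Conf$ by revealing one bidder of $\CReady{\Conf}$, the output of an execution is precisely the normal form of $\to$ reached from the initial configuration $((\emptyset, \Items), \IAuc)$. Termination of $\to$ has already been argued in the text, so by Newman's lemma it suffices to establish local confluence: whenever $\Conf \to \ConfSub{1}$ by revealing $\BidderSub{1}$ and $\Conf \to \ConfSub{2}$ by revealing $\BidderSub{2}$ with $\BidderSub{1} \neq \BidderSub{2}$, the configurations $\ConfSub{1}$ and $\ConfSub{2}$ have a common successor.

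I would in fact establish the strong (diamond) form. The key claim is that revealing one ready bidder leaves every other ready bidder ready: if $\BidderSub{1}, \BidderSub{2} \in \CReady{\Conf}$ are distinct, then $\BidderSub{2} \in \CReady{\ConfSub{1}}$ (and symmetrically). Granting this, revealing $\BidderSub{2}$ from $\ConfSub{1}$ and $\BidderSub{1}$ from $\ConfSub{2}$ both yield the configuration whose revealed set is that of $\Conf$ together with $\{\BidderSub{1}, \BidderSub{2}\}$; since a configuration is determined by its revealed set, these coincide, giving the diamond and hence confluence.

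To prove the claim, note first that each multibidder contributes at most one candidate to $\CReady{\Conf}$, namely its unique next unrevealed bidder; hence distinct $\BidderSub{1}, \BidderSub{2} \in \CReady{\Conf}$ belong to distinct multibidders $\MBidderSub{1} \neq \MBidderSub{2}$, which by the definition of an IUAP have distinct priorities $\MBPri{\MBidderSub{1}} \neq \MBPri{\MBidderSub{2}}$. The membership condition $\BidderSub{2} = \MBBidder{\MBidderSub{2}}{\Car{\BiddersOfMB{\Conf}{\MBidderSub{2}}} + 1}$ is therefore unaffected by revealing $\BidderSub{1}$, since revealing $\BidderSub{1}$ alters only $\BiddersOfMB{\cdot}{\MBidderSub{1}}$. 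It remains to show that the greedy condition is preserved, i.e.\ that $\Greedy{\Auc}{\MBPri{\MBidderSub{2}}} = 0$ implies $\Greedy{\Auc + \BidderSub{1}}{\MBPri{\MBidderSub{2}}} = 0$.

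This last point is the main obstacle, and it is where the matroid and max-weight-matching structure must be used. The target is a monotonicity statement capturing the intuition that inserting a new bidder (a new vertex whose edges are only incident on existing items) can only \emph{increase} competition for items: adding $\BidderSub{1}$ cannot cause any multibidder other than $\MBidderSub{1}$ that is currently unmatched to become matched in a greedy MWM. I would prove this by fixing a greedy MWM $\Match$ of $\Auc$ and a greedy MWM $\Matchp$ of $\Auc + \BidderSub{1}$; if $\Matchp$ does not use $\BidderSub{1}$ it is a greedy MWM of $\Auc$ as well, and the two share a priority distribution by Lemma~\ref{lem:uap-distribution}, while if $\Matchp$ does use $\BidderSub{1}$ I would delete its incident edge to obtain a matching of $\Auc$ and analyze the symmetric difference with $\Match$ along alternating paths and cycles. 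A weight-exchange argument, using that $\Match$ is maximum weight in $\Auc$ and that any augmenting path ending at a newly matched existing bidder could otherwise be applied to $\Match$, yields a contradiction with the maximality of $\Match$, so no existing priority class can gain a matched bidder. The invariants supplied by Lemmas~\ref{lem:iuap-one} and~\ref{lem:iuap-tail} keep this bookkeeping tractable, since they guarantee that each multibidder contributes at most one bidder to $\MatchedBidders{\Match}$ and to $\MatchedBidders{\Matchp}$; hence the only class whose count can rise is $\MBPri{\MBidderSub{1}}$. In particular the count in class $\MBPri{\MBidderSub{2}}$ remains $0$, establishing the claim and completing the argument.
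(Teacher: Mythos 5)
You should first note a mismatch with the premise of the comparison: the paper does not actually prove Lemma~\ref{lem:iuap-confluence} in this document --- App.~\ref{app:gener-da-alg} states explicitly that proofs of its lemmas are deferred to the full version of~\cite{DLP17}. So there is no in-paper proof to compare against, and your proposal must be judged on its own merits.

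On those merits, your strategy is sound and, as far as I can check, correct. Reducing confluence of the reveal relation to a diamond property (configurations being determined by their revealed sets, termination already argued in the text, Newman's lemma or strong confluence directly) is the standard and appropriate framing. The bookkeeping parts of your key claim are right: each multibidder contributes at most one ready bidder, distinct ready bidders $\BidderSub{1}, \BidderSub{2}$ lie in distinct multibidders with distinct priorities, and revealing $\BidderSub{1}$ does not disturb the ``next unrevealed bidder'' condition for $\BidderSub{2}$. The crux is, as you identify, the monotonicity statement $\Greedy{\Auc}{\Pri} = 0 \Rightarrow \Greedy{\Auc + \BidderSub{1}}{\Pri} = 0$ for $\Pri \neq \BPri{\BidderSub{1}}$, and your sketch of it is the right plan but is also the one place where care is needed. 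Two cautions. First, your phrase ``delete its incident edge to obtain a matching of $\Auc$'' should not suggest that $\Matchp$ minus $\BidderSub{1}$'s edge is an MWM of $\Auc$ --- in general it is not, and the argument must instead analyze the full symmetric difference $\Match \oplus \Matchp$ (which you do go on to say). Second, when the alternating path from a newly matched bidder $\Bidder$ of priority $\Pri$ is traced out, three endpoint cases arise, and they are closed in different ways: paths ending at an item are killed by a pure weight-exchange against the maximality of $\Match$ (resp.\ $\Matchp$); the path ending at $\BidderSub{1}$ itself is impossible for a parity reason you do not mention (a bidder-to-bidder path in a bipartite graph has even length, while having both end edges in $\Matchp$ forces odd length, since $\BidderSub{1}$ carries no $\Match$-edge); and the path ending at a bidder $\Bidderp$ matched in $\Match$ but not $\Matchp$ requires a two-sided priority-exchange argument --- using max-priority of $\Match$ among bases of $\Matroid{\Auc}$ \emph{and} max-priority of $\Matchp$ among bases of $\Matroid{\Auc + \BidderSub{1}}$ --- to conclude $\BPri{\Bidderp} = \BPri{\Bidder} = \Pri$, contradicting $\Greedy{\Auc}{\Pri} = 0$. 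Your sketch gestures at the weight exchange but not at the parity exclusion or the two-sided priority exchange; with those supplied, the claim holds and your proof goes through. (Your appeal to Lemmas~\ref{lem:iuap-one} and~\ref{lem:iuap-tail} is harmless but not actually needed here, since priority classes coincide with multibidders and only the count-zero classes matter.)
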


\newcommand{\UAPId}{\Id{uap}}
\newcommand{\UAP}[1]{\UAPId(#1)}

For any IUAP $\IAuc$, we define $\UAP{\IAuc}$ as the unique (by Lemma~\ref{lem:iuap-confluence}) UAP returned by any execution of Alg.~\ref{alg:to-uap} on input $\IAuc$.

\subsection{The Algorithm}
\label{app:alg-smiw}

\newcommand{\Men}{I}
\newcommand{\Women}{J}

\newcommand{\Man}{i}
\newcommand{\Woman}{j}
\newcommand{\Manp}{\Man'}
\newcommand{\Womanp}{\Woman'}
\newcommand{\Womanpp}{\Woman''}

\newcommand{\ManSub}[1]{\Man_{#1}}
\newcommand{\WomanSub}[1]{\Woman_{#1}}

\newcommand{\Unmatched}{\nil}

\newcommand{\EachMan}{\Man \in \Men}
\newcommand{\EachWoman}{\Woman \in \Women}

\newcommand{\ManOrdPref}[1]{\succeq_{#1}}
\newcommand{\WomanOrdPref}[1]{\succeq_{#1}}
\newcommand{\ManOrdPrefS}{\ManOrdPref{\Man}}
\newcommand{\WomanOrdPrefS}{\WomanOrdPref{\Woman}}

\newcommand{\MarMatch}{\mu}
\newcommand{\MarMatchp}{\MarMatch'}
\newcommand{\MarMatchOpt}{\MarMatch^*}
\newcommand{\MarMatchSub}[1]{\MarMatch_{#1}}

\newcommand{\MOf}[2]{#1(#2)}

\newcommand{\toMBidder}[1]{\Id{multibidder}(#1)}
\newcommand{\toItem}[1]{\Id{item}(#1)}
\newcommand{\toDummy}[1]{\Id{item}_{\Unmatched}(#1)}

The computation of a matching for an instance of the stable marriage market with indifferences is shown in Alg.~\ref{alg:smiw}.
In order to suit the presentation of the current paper, Alg.~\ref{alg:smiw} is expressed using a different notation than in~\cite{DLP17}.
For each woman $\Woman$, we construct an item, denoted $\toItem{\Woman}$, in line~\ref{line:smiw-item}.
For each man $\Man$, we construct a dummy item, denoted $\toDummy{\Man}$, in line~\ref{line:smiw-dummy}, and a multibidder, denoted $\toMBidder{\Man}$, in line~\ref{line:smiw-bidder}, by examining the tiers of preference of the men and the utilities of the women.
The set $\SetBuild{\PriSub{\Man}}{\EachMan}$ of priorities of the multibidders is equal to $\Set{1, \dotsc, \Car{\Men}}$, and we assume that the men have no control over the assignment of the priorities.
These multibidders and items form an IUAP, from which we obtain a UAP and a greedy MWM $\Match$.
Finally, in line~\ref{line:smiw-return}, we use $\Match$ to determine the match of each man in the solution to the stable marriage instance.

\newcommand{\WomenUtilFun}{\psi}
\newcommand{\WomanUtilFunName}[1]{\WomenUtilFun_{#1}}
\newcommand{\WomanUtilFun}[2]{\WomanUtilFunName{#1}(#2)}

\newcommand{\Tier}[2]{\tau_{#1}(#2)}
\newcommand{\NTiers}[1]{K_{#1}}

\begin{algorithm}[htb]
\caption{}
\label{alg:smiw}
\begin{algorithmic}[1]
  \ForAll{$\EachWoman$}
    \State Convert the preference relation $\WomanOrdPrefS$ of woman $\Woman$ into utility function $\WomanUtilFunName{\Woman} \colon \Men + \Unmatched \to \Reals$ that satisfies the following conditions:
    $\WomanUtilFun{\Woman}{\Unmatched} = 0$;
    for any $\Man$ and $\Manp$ in $\Men + \Unmatched$, we have $\Man \WomanOrdPrefS \Manp$ if and only if $\WomanUtilFun{\Woman}{\Man} \geq \WomanUtilFun{\Woman}{\Manp}$.
    This utility assignment should not depend on the preferences of the men.
    \State Construct an item, denoted $\toItem{\Woman}$, corresponding to woman $\Woman$.\label{line:smiw-item}
  \EndFor
  \ForAll{$\EachMan$}
    \State Partition the set $\Women + \Unmatched$ into tiers $\Tier{\Man}{1}, \ldots, \Tier{\Man}{\NTiers{\Man}}$ according to the preference relation of man $\Man$, such that for any $\Woman$ in $\Tier{\Man}{k}$ and $\Womanp$ in $\Tier{\Man}{k'}$, we have $\Woman \ManOrdPrefS \Womanp$ if and only if $k \leq k'$.
    \State Construct a dummy item, denoted $\toDummy{\Man}$, corresponding to man $\Man$.\label{line:smiw-dummy}
    \State Construct a multibidder $(\BidderSeqSub{\Man}, \PriSub{\Man})$, denoted $\toMBidder{\Man}$, corresponding to man $\Man$.
    The priority $\PriSub{\Man}$ is uniquely chosen from the set $\Set{1, \dotsc, \Car{\Men}}$.
    The sequence $\BidderSeqSub{\Man}$ has $\NTiers{\Man}$ bidders such that for each bidder $\BidderSeqSub{\Man}(k)$, we define $\BItems{\BidderSeqSub{\Man}(k)}$ as $\SetBuild{\toItem{\Woman}}{\Woman \in \Tier{\Man}{k}}$ and $\WPair{\BidderSeqSub{\Man}(k)}{\toItem{\Woman}}$ as $\WomanUtilFun{\Woman}{\Man}$, where $\toItem{\Unmatched}$ denotes $\toDummy{\Man}$, and $\WomanUtilFun{\Unmatched}{\Man}$ denotes $0$.\label{line:smiw-bidder}
  \EndFor
  \State $\IAuc = (\MBidders, \Items) = (\SetBuild{\toMBidder{\Man}}{\EachMan}, \SetBuild{\toItem{\Woman}}{\EachWoman} \cup \SetBuild{\toDummy{\Man}}{\EachMan} )$.\label{line:smiw-iuap}
  \State $\Auc = \UAP{\IAuc}$.\label{line:smiw-uap}
  \State Compute a greedy MWM $\Match$ of UAP $\Auc$.\label{line:smiw-greedy}
  \State Output matching $\MarMatch$ such that for each man $\Man$ in $\Men$ and each woman $\Woman$ in $\Women$, we have $\MarMatch(\Man) = \Woman$ if and only if $\BidderSeqSub{\Man}(k)$ is matched to item $\toItem{\Woman}$ in $\Match$ for some $k$.\label{line:smiw-return}
\end{algorithmic}
\end{algorithm}

Algorithm~\ref{alg:smiw} implements a strategyproof Pareto-stable mechanism for the stable marriage problem with indifferences~\cite[Theorem~1]{DLP17}.
The algorithm admits an $O(n^4)$-time implementation, since lines~\ref{line:smiw-uap} and~\ref{line:smiw-greedy} can be implemented in $O(n^4)$ time using the version of the incremental Hungarian method discussed in~\cite[Sect.~3.1]{DLP17}.

\newcommand{\AlgU}{Algorithm}
\newcommand{\Alg}{Alg.}
\newcommand{\App}{App.}

\section{Equivalence of the Two Mechanisms}
\label{app:equivalence}

\newcommand{\MenPref}{a}
\newcommand{\WomenPref}{b}
\newcommand{\ManPref}[2]{\MenPref_{#1, #2}}
\newcommand{\ManPrefS}{\ManPref{\Man}{\Woman}}
\newcommand{\ManPrefUnmatched}[1]{\ManPref{#1}{\Unmatched}}
\newcommand{\ManPrefUnmatchedS}{\ManPrefUnmatched{\Man}}
\newcommand{\WomanPref}[2]{\WomenPref_{#1, #2}}
\newcommand{\WomanPrefS}{\WomanPref{\Man}{\Woman}}
\newcommand{\WomanPrefUnmatched}[1]{\WomanPref{\Unmatched}{#1}}
\newcommand{\WomanPrefUnmatchedS}{\WomanPrefUnmatched{\Woman}}

\newcommand{\MenPri}{\pi}
\newcommand{\ManPri}[1]{\MenPri_{#1}}
\newcommand{\ManPriS}{\ManPri{\Man}}

\newcommand{\MenUtil}{u}
\newcommand{\MenUtilp}{\MenUtil'}
\newcommand{\MenUtilpp}{\MenUtil''}
\newcommand{\MenUtilOpt}{\overline{\MenUtil}}
\newcommand{\MenUtilOptp}{\MenUtilOpt'}
\newcommand{\WomenUtil}{v}
\newcommand{\WomenUtilp}{\WomenUtil'}
\newcommand{\WomenUtilpp}{\WomenUtil''}
\newcommand{\WomenUtilPes}{\underline{\WomenUtil}}
\newcommand{\WomenUtilPesp}{\WomenUtilPes'}

\newcommand{\ManUtil}[1]{\UtilOrResOf{\MenUtil}{#1}}
\newcommand{\ManUtilp}[1]{\UtilOrResOf{\MenUtilp}{#1}}
\newcommand{\ManUtilOpt}[1]{\UtilOrResOf{\MenUtilOpt}{#1}}
\newcommand{\ManUtilOptp}[1]{\UtilOrResOf{\MenUtilOptp}{#1}}
\newcommand{\ManUtilS}{\ManUtil{\Man}}
\newcommand{\WomanUtil}[1]{\UtilOrResOf{\WomenUtil}{#1}}
\newcommand{\WomanUtilS}{\WomanUtil{\Woman}}

\newcommand{\WomenRes}{s}
\newcommand{\ManRespp}[1]{\UtilOrResOf{\MenRespp}{#1}}
\newcommand{\ManResS}{\ManRes{\Man}}
\newcommand{\WomanRes}[1]{\UtilOrResOf{\WomenRes}{#1}}
\newcommand{\WomanResS}{\WomanRes{\Woman}}

\newcommand{\ManComp}  [3]{f_{#1, #2}(\UtilOrResOf{#3}{#1})}
\newcommand{\WomanComp}[3]{g_{#2, #1}(\UtilOrResOf{#3}{#1})}
\newcommand{\ManCompS}  {\ManComp{\Man}{\Woman}{\MenUtil}}
\newcommand{\WomanCompS}{\WomanComp{\Woman}{\Man}{\WomenUtil}}
\newcommand{\ManCompMatchS}  [1]{\ManComp{\Man}{\MOf{#1}{\Man}}{\MenUtil}}
\newcommand{\WomanCompMatchS}[1]{\WomanComp{\Woman}{\MOf{#1}{\Woman}}{\WomenUtil}}

\newcommand{\SMMTuple}{(\Men, \Women, (\ManOrdPrefS)_{\EachMan}, (\WomanOrdPrefS)_{\EachWoman})}

\newcommand{\SMMC}{Stable marriage market} 
\PreviewMacro{\SMMC}
\newcommand{\SMM}{stable marriage market}
\PreviewMacro{\SMM}
\newcommand{\SMMa}{a \SMM}
\PreviewMacro{\SMMa}
\newcommand{\SMMs}{\SMM s}
\PreviewMacro{\SMMs}

\newcommand{\NN}{N}
\newcommand{\BB}{\lambda}

\newcommand{\TSMTuple}{(\Men, \Women, \MenPri, \NN, \BB, \MenPref, \WomenPref)}

\newcommand{\TSMC}{Tiered-slope market} 
\PreviewMacro{\TSMC}
\newcommand{\TSMCAll}{Tiered-Slope Market} 
\PreviewMacro{\TSMCAll}
\newcommand{\TSM}{tiered-slope market}
\PreviewMacro{\TSM}
\newcommand{\TSMa}{a \TSM}
\PreviewMacro{\TSMa}
\newcommand{\TSMs}{\TSM s}
\PreviewMacro{\TSMs}

\newcommand{\Market}{\mathcal{M}}
\newcommand{\Marketp}{\Market'}

In this appendix, we fix \SMMa\ $\SMMTuple$ and an associated \TSM\ $\Market = \TSMTuple$.
As in Sections~\ref{sec:market} and~\ref{sec:marriage}, $\ManCompS$ denotes the compensation that man $\Man$ needs to receive in order to attain utility $\ManUtilS$ in $\Market$ when he is matched to woman $\Woman$, and $\WomanCompS$ denotes the compensation that woman $\Woman$ needs to receive in order to attain utility $\WomanUtilS$ in $\Market$ when she is matched to man $\Man$.
We let $\MenRes$ denote the reserve utility vector of the men in $\Market$, and we let $\WomenRes$ denote the reserve utility vector of the women in $\Market$.
We consider an execution of \Alg~\ref{alg:smiw} on the \SMM\ $\SMMTuple$, and we let $\IAuc = (\MBidders, \Items)$ denote the IUAP constructed at line~\ref{line:smiw-iuap} of this execution.
We assume that $\IAuc$ is constructed in such a way that the following conditions hold:
for each man $\Man$, the priority of $\toMBidder{\Man}$ is $\ManPriS$;
the offers of the multibidders in $\IAuc$, i.e., the weights of the edges of $\IAuc$, satisfy the conditions stated in the last paragraph of App.~\ref{app:iuap-weights} below.
With the assumption that these conditions hold, we show in Theorem~\ref{thm:equivalence} that the set of greedy MWMs of $\UAP{\IAuc}$ corresponds to the set of man-optimal matchings of $\Market$.

\newcommand{\AuthorsDemangeGale}{Demange and Gale}
\newcommand{\AuthorsRothSotomayor}{Roth and Sotomayor}

\AlgU~\ref{alg:smiw} computes a greedy MWM of the UAP $\UAP{\IAuc}$ in lines~\ref{line:smiw-uap} and~\ref{line:smiw-greedy}.
Recall that we defined $\UAP{\IAuc}$ in \App~\ref{app:iuap} by giving an algorithm that converts an IUAP to a UAP, namely \Alg~\ref{alg:to-uap}.
In this appendix, we analyze the executions of \Alg~\ref{alg:to-uap} with input $\IAuc$ in order to relate the greedy MWMs of $\UAP{\IAuc}$ that \Alg~\ref{alg:smiw} computes to the man-optimal matchings of $\Market$.
Our approach is based on a technique used by~\AuthorsDemangeGale~\cite{DG85} to study various structural properties of their model, such as the lattice property.
\AuthorsDemangeGale\ analyze market instances in which the agents and their utility functions are fixed, while the reserve utilities vary.
As noted by \AuthorsRothSotomayor~\cite[Chapter 9]{roth+s:match}, lowering the reserve utility of an agent is analogous to extending the preferences of an agent in the stable marriage model, a technique used to study structural properties of the stable marriage model.
Building on this idea, for each iteration of Alg.~\ref{alg:to-uap}, we inductively show a bijection (Lemmas~\ref{lem:man-optimal-greedy-mwm} and \ref{lem:popt}) from the set of greedy MWMs of the UAP maintained at that iteration to the man-optimal matchings of the corresponding \TSM, where the reserve utilities are adjusted to ``reveal'' only the preferences that are present in the UAP.

\newcommand{\MarPayoff}[2]{(#1, #2)}
\newcommand{\MarPayoffS}{\MarPayoff{\MenUtil}{\WomenUtil}}
\newcommand{\MarPayoffSp}{\MarPayoff{\MenUtilp}{\WomenUtilp}}
\newcommand{\MarPayoffSpp}{\MarPayoff{\MenUtilpp}{\WomenUtilpp}}

\newcommand{\MarOutcome}[3]{(#1, #2, #3)}
\newcommand{\MarOutcomeS}{\MarOutcome{\MarMatch}{\MenUtil}{\WomenUtil}}

In the preceding sections, the terms ``feasible'', ``individually rational'', ``stable'', and ``man-optimal'' are used only for outcomes.
Throughout this appendix, however, we also use these terms for payoffs and matchings, as in~\cite{DG85}.
Here we briefly review the related definitions.
A pair $\MarPayoffS$ consisting of a utility vector $\MenUtil$ of the men and a utility vector $\WomenUtil$ of the women is a \emph{payoff}.
For any feasible outcome $\MarOutcomeS$ of a market $\Marketp$, we say that \emph{$\MarPayoffS$ is a feasible payoff of $\Marketp$}, and that \emph{$\MarMatch$ is compatible with $\MarPayoffS$}.
A feasible payoff $\MarPayoffS$ is \emph{individually rational} if $\ManUtilS \geq \ManResS$ for each man $\Man$ and $\WomanUtilS \geq \WomanResS$ for each woman $\Woman$.
If an outcome $\MarOutcomeS$ is stable (resp., man-optimal) for a market $\Marketp$, then we say that \emph{$\MarMatch$ is a stable} (resp., \emph{man-optimal}) \emph{matching of $\Marketp$}, and that \emph{$\MarPayoffS$ is a stable} (resp., \emph{man-optimal}) \emph{payoff in $\Marketp$}.
For a given market, there is a unique man-optimal payoff, but there can be more than one man-optimal matching.

\subsection{Edge Weights of the IUAP}
\label{app:iuap-weights}

\newcommand{\toWoman}[1]{\Id{woman}(#1)}
\newcommand{\toWomen}[1]{\Id{women}(#1)}
\newcommand{\toBidder}[2]{\Id{bidder}(#1, #2)}

We start our discussion by introducing some useful mappings from items to women and from man-woman pairs to bidders.
Recall that, for each woman $\Woman$, $\toItem{\Woman}$ is constructed at line~\ref{line:smiw-item} of Alg.~\ref{alg:smiw}, and for each man $\Man$, $\toDummy{\Man}$ is constructed at line~\ref{line:smiw-dummy} and $\toMBidder{\Man}$ is constructed at line~\ref{line:smiw-bidder}.
For any non-dummy item $\Item$ in $\Items$, we define $\toWoman{\Item}$ as the woman in $\Women$ associated with $\Item$.
(Thus for each woman $\Woman$, $\toWoman{\toItem{\Woman}}$ is equal to $\Woman$.)
For any dummy item $\Item$ in $\Items$, we define $\toWoman{\Item}$ as $\Unmatched$.
(Thus for each man $\Man$, $\toWoman{\toDummy{\Man}}$ is equal to $\Unmatched$.)
For any subset $\Itemsp$ of $\Items$, we define $\toWomen{\Itemsp}$ as $\SetBuild{\toWoman{\Item}}{\Item \in \Itemsp}$.
For any man $\Man$ in $\Men$ and any element $\Woman$ in $\Women + \Unmatched$, we define $\toBidder{\Man}{\Woman}$ as the bidder in $\toMBidder{\Man}$ such that $\toWomen{\BItems{\toBidder{\Man}{\Woman}}}$ contains $\Woman$.
Remark: For each man $\Man$, the dummy item $\toDummy{\Man}$ belongs to $\BItems{\toBidder{\Man}{\Unmatched}}$.

\newcommand{\IsMatched}[2]{\MOf{#2}{#1} \neq \Unmatched}
\newcommand{\IsUnmatched}[2]{\MOf{#2}{#1} = \Unmatched}
\newcommand{\IsMatchedS}[1]{\IsMatched{#1}{\MarMatch}}
\newcommand{\IsUnmatchedS}[1]{\IsUnmatched{#1}{\MarMatch}}

\newcommand{\WMW}[2]{\WId(#1, #2)}

In the remainder of the paper, for any man $\Man$ and any woman $\Woman$, we use the shorthand $\WMW{\Man}{\Woman}$ to denote $\WPair{\toBidder{\Man}{\Woman}}{\toItem{\Woman}}$, and $\WMW{\Man}{\Unmatched}$ to denote $\WPair{\toBidder{\Man}{\Unmatched}}{\toDummy{\Man}}$.

\newcommand{\WMPref}[1]{\WomenPref(#1)}
\newcommand{\WMUAP}[1]{\WId(#1)}

For any matching $\MarMatch$ of $\Market$, we define $\WMPref{\MarMatch}$ as
\begin{math}
  \sum_{\IsMatchedS{\Woman}} \WomanPref{\MOf{\MarMatch}{\Woman}}{\Woman} + \sum_{\IsUnmatchedS{\Woman}} \WomanPrefUnmatchedS
\end{math}, and we define $\WMUAP{\MarMatch}$ as
\begin{math}
  \sum_{\IsMatchedS{\Woman}} \WMW{\MOf{\MarMatch}{\Woman}}{\Woman}
\end{math}.

\newcommand{\CondUnmatchedMan}{(i)}
\newcommand{\CondMatching}{(ii)}

We assume that the IUAP $\IAuc$ is constructed so that the $\WMW{\Man}{\Woman}$'s satisfy the following conditions:
\CondUnmatchedMan\ for any man $\Man$, $\WMW{\Man}{\Unmatched} = 0$;
\CondMatching\ for any two matchings $\MarMatch$ and $\MarMatchp$ of $\Market$, $\WMUAP{\MarMatch} \geq \WMUAP{\MarMatchp}$ if and only if $\WMPref{\MarMatch} \geq \WMPref{\MarMatchp}$.
It is easy to see that one way to satisfy these conditions is to set $\WomanUtilFun{\Woman}{\Man} = \WomanPrefS - \WomanPrefUnmatchedS$ in \Alg~\ref{alg:smiw}.

\subsection{Matchings of the \TSMCAll\ and Greedy MWMs}
\label{app:tsm-matchings-greedy-mwms}

In this appendix, we first introduce certain mappings from the \TSM\ matchings to the UAP matchings, and we define weights for these \TSM\ matchings.
Then, in Lemma~\ref{lem:weight-greedy-mwm}, we show that these mappings define bijections from certain sets of \TSM\ matchings --- those maximizing the weights we introduce --- to the sets of greedy MWMs.
We start with some useful definitions.

\newcommand{\PRevealed}[2]{P_\Id{all}(#1, #2)}

For any configuration $\Conf = (\Auc, \IAuc)$ and any man $\Man$, we define the predicate $\PRevealed{\Conf}{\Man}$ to hold if $\Man$ has revealed all of his acceptable tiers in $\Conf$, i.e., if $\toBidder{\Man}{\Unmatched}$ belongs to $\Auc$.

\begin{lemma}
  \label{lem:all-revealed-matched}
  Let $\Conf = (\Auc, \IAuc)$ be a configuration and let $\Man$ be a man such that $\PRevealed{\Conf}{\Man}$ holds.
  Then for each greedy MWM $\Match$ of $\Auc$, some bidder associated with $\Man$ is matched in $\Match$.
\end{lemma}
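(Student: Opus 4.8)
The plan is to argue by contradiction, exploiting the private dummy item $\toDummy{\Man}$ together with the characterization, recorded immediately after the definition of a greedy MWM in App.~\ref{app:uap}, that a greedy MWM is in particular a maximum-cardinality MWM (an MCMWM). Suppose, for contradiction, that some greedy MWM $\Match$ of $\Auc$ matches no bidder associated with $\Man$. Since $\PRevealed{\Conf}{\Man}$ holds, the bidder $\toBidder{\Man}{\Unmatched}$ belongs to $\Auc$, and by the construction in lines~\ref{line:smiw-dummy} and~\ref{line:smiw-bidder} of Alg.~\ref{alg:smiw} this bidder carries an edge to $\toDummy{\Man}$ whose weight is $\WMW{\Man}{\Unmatched} = 0$ by condition~(i) of App.~\ref{app:iuap-weights}.

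The first key observation I would record is that the dummy item $\toDummy{\Man}$ is private to man $\Man$: by the same construction, $\toDummy{\Man}$ lies in the item set of no bidder other than $\toBidder{\Man}{\Unmatched}$. Consequently, since $\Match$ matches no bidder of $\Man$, both $\toBidder{\Man}{\Unmatched}$ and $\toDummy{\Man}$ are left unmatched by $\Match$.

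I would then augment $\Match$ by the single, vertex-disjoint edge $(\toBidder{\Man}{\Unmatched}, \toDummy{\Man})$ to obtain a matching $\Matchp$ of $\Auc$. Because this edge has weight $0$, we have $\WEdges{\Matchp} = \WEdges{\Match}$, so $\Matchp$ is again a maximum-weight matching of $\Auc$; yet $\Matchp$ matches one more bidder than $\Match$. This contradicts the fact that the greedy MWM $\Match$ is an MCMWM, and I would conclude that every greedy MWM of $\Auc$ matches some bidder associated with $\Man$.

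The argument is short, and the only step requiring care is the passage from ``$\Match$ is greedy'' to ``$\Match$ is an MCMWM'', which is exactly the characterization cited above. Once that is in hand, the weight-$0$ private dummy edge does all the work: adding it preserves the maximum-weight property precisely because $\Match$ is already an MWM and the total weight is unchanged, so the strict increase in cardinality yields the desired contradiction. I do not expect any genuine obstacle here, only the need to be explicit that the dummy item is adjacent solely to $\toBidder{\Man}{\Unmatched}$, which guarantees that the augmenting edge is vertex-disjoint from $\Match$.
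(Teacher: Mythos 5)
Your proof is correct and takes essentially the same approach as the paper: both argue by contradiction, augmenting the hypothetical greedy MWM $\Match$ by the private zero-weight edge $(\toBidder{\Man}{\Unmatched}, \toDummy{\Man})$ to obtain an MWM $\Matchp$ that contradicts greediness. The only immaterial difference is the final step, where the paper invokes the strict priority increase $\MPriority{\Matchp} > \MPriority{\Match}$ (using $\BPri{\toBidder{\Man}{\Unmatched}} > 0$) while you invoke the strict cardinality increase against the MCMWM property; both are valid readings of the same characterization of greedy MWMs.
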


\begin{proof}
  Suppose the claim does not hold, and let $\Match$ be a greedy MWM of $\Auc$ such that there is no bidder associated with $\Man$ that is matched in $\Match$.
  Then, since $\toBidder{\Man}{\Unmatched}$ and $\toDummy{\Man}$ belong to $\Auc$, $\WMW{\Man}{\Unmatched} = 0$, and $\BPri{\toBidder{\Man}{\Unmatched}} > 0$, we deduce that the matching $\Matchp = \Match + (\toBidder{\Man}{\Unmatched}, \toDummy{\Man})$ is an MWM of $\Auc$ such that $\MPriority{\Matchp} > \MPriority{\Match}$, a contradiction.
  \QEDWrap
\end{proof}

\newcommand{\ConfFirst}{\ConfSub{1}}
\newcommand{\ConfFinal}{\ConfSub{F}}
\newcommand{\CanE}{canonical}
\newcommand{\RelC}{relevant}

We define $\ConfFirst$ as the configuration $(\Auc, \IAuc)$ where $\Auc$ is the UAP that reveals only the first bidder of each multibidder in $\MBidders$.
We say that an execution of Alg.~\ref{alg:to-uap} invoked with input $\IAuc$ is \emph{\CanE} if $\ConfFirst$ is equal to the configuration that the program variable $\Conf$ stores at some iteration of this execution.
We define $\ConfFinal$ as the unique final configuration of any \CanE\ execution, i.e., $(\UAP{\IAuc}, \IAuc)$.
We say that a configuration is \emph{\RelC} if it is equal to the configuration that the program variable $\Conf$ stores at iteration $t_1$ or a subsequent iteration of some \CanE\ execution, where $t_1$ is the iteration at which $\ConfFirst = \Conf$.

\newcommand{\toUAP}[2]{\Phi_{#1}(#2)}
\newcommand{\toUAPFunc}[1]{\Phi_{#1}}

\newcommand{\ManPriUnmatched}[3]{\MenPri_{\Unmatched}(#1, #2, #3)}
\newcommand{\MenPriUnmatched}[2]{\MenPri_{\Unmatched}(#1, #2)}
\newcommand{\ManPriUnmatchedS}[1]{\ManPriUnmatched{\Conf}{#1}{\Man}}
\newcommand{\MenPriUnmatchedS}[1]{\MenPriUnmatched{\Conf}{#1}}

\newcommand{\WMarM}[2]{W_{#1}(#2)}

\newcommand{\ResC}[1]{(1 - \BB^{-1})}
\newcommand{\ResCS}{\ResC{\Man}}

Given a \RelC\ configuration $\Conf = (\Auc, \IAuc)$, we now introduce a mapping from certain matchings in the \TSM\ $\Market$ to the matchings in the UAP $\Auc$, and a weight function for these matchings in $\Market$.
Let $\Conf = (\Auc, \IAuc)$ be a \RelC\ configuration and let $\MarMatch$ be a matching of $\Market$ such that for each man-woman pair $(\Man, \Woman)$ matched in $\MarMatch$, $\toBidder{\Man}{\Woman}$ belongs to $\Auc$.
Then, we define $\toUAP{\Conf}{\MarMatch}$ as the matching
\begin{equation*}
  \bigcup_{\IsMatchedS{\Man}} \Set{(\toBidder{\Man}{\MOf{\MarMatch}{\Man}}, \toItem{\MOf{\MarMatch}{\Man}})} \cup \bigcup_{\IsUnmatchedS{\Man} \land \PRevealed{\Conf}{\Man}} \Set{(\toBidder{\Man}{\Unmatched}, \toDummy{\Man})} .
\end{equation*}
It is easy to see that $\toUAPFunc{\Conf}$ is an injection and that $\toUAP{\Conf}{\MarMatch}$ is a matching of the UAP $\Auc$.
Furthermore, we have $\WMatch{\toUAP{\Conf}{\MarMatch}} = \WMUAP{\MarMatch}$ since $\WPair{\toBidder{\Man}{\Unmatched}}{\toDummy{\Man}} = 0$ for any man $\Man$ by condition $\CondUnmatchedMan$ of App.~\ref{app:iuap-weights}.
For any man $\Man$, we define
\begin{equation*}
  \ManPriUnmatchedS{\MarMatch} =
  \begin{cases}
    0 & \text{if $\IsMatchedS{\Man}$} \\
    \ManPriS & \text{if $\IsUnmatchedS{\Man}$ and $\PRevealed{\Conf}{\Man}$} \\
    \ResCS & \text{if $\IsUnmatchedS{\Man}$ and $\lnot \PRevealed{\Conf}{\Man}$} .
  \end{cases}
\end{equation*}
We define $\MenPriUnmatchedS{\MarMatch}$ as $\sum_{\EachMan} \ManPriUnmatchedS{\MarMatch}$.
Remark: It is easy to see that
\begin{equation*}
  \MenPriUnmatchedS{\MarMatch} = \sum_{\IsUnmatchedS{\Man}} \ManPriUnmatchedS{\MarMatch} = \sum_{\IsUnmatchedS{\Man} \land \PRevealed{\Conf}{\Man}} \ManPriS + \sum_{\IsUnmatchedS{\Man} \land \lnot \PRevealed{\Conf}{\Man}} \ResCS .
\end{equation*}
Finally, we define $\WMarM{\Conf}{\MarMatch}$ as
\begin{equation*}
  \NN \cdot \WMPref{\MarMatch} + \sum_{\IsMatchedS{\Man}} \ManPriS + \MenPriUnmatchedS{\MarMatch} .
\end{equation*}
Remark: It is easy to see that
\begin{equation}
  \label{eq:WMarM}
  \WMarM{\Conf}{\MarMatch} = \NN \cdot \WMPref{\MarMatch} + \sum_{\IsMatchedS{\Man} \lor \PRevealed{\Conf}{\Man}} \ManPriS + \sum_{\IsUnmatchedS{\Man} \land \lnot \PRevealed{\Conf}{\Man}} \ResCS ,
\end{equation}
and that the second term in the RHS is equal to $\MPriority{\toUAP{\Conf}{\MarMatch}}$.

We now show that $\toUAPFunc{\Conf}$ is invertible in certain cases.

\newcommand{\CondInverseAtMostOne}{(1)}
\newcommand{\CondInverseRevealedMatched}{(2)}

\begin{lemma}
  \label{lem:inverse-of-good-matchings}
  Let $\Conf = (\Auc, \IAuc)$ be a \RelC\ configuration and let $\Match$ be a matching of $\Auc$ such that the following conditions hold:
  \CondInverseAtMostOne\ for each man $\Man$, at most one bidder associated with $\Man$ is matched in $\Match$;
  \CondInverseRevealedMatched\ for each man $\Man$, if $\PRevealed{\Conf}{\Man}$ holds then a bidder associated with $\Man$ is matched in $\Match$.
  Then there exists a unique matching $\MarMatch$ of $\Market$ such that $\toUAP{\Conf}{\MarMatch}$ is equal to $\Match$.
\end{lemma}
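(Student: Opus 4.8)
The plan is to dispatch uniqueness for free and to spend all of the work on existence. Since $\toUAPFunc{\Conf}$ has already been observed to be an injection, any two matchings of $\Market$ that $\toUAPFunc{\Conf}$ sends to $\Match$ must coincide; it therefore suffices to construct a single matching $\MarMatch$ of $\Market$ with $\toUAP{\Conf}{\MarMatch} = \Match$ and to check that $\MarMatch$ lies in the domain of $\toUAPFunc{\Conf}$, i.e., that $\toBidder{\Man}{\Woman}$ belongs to $\Auc$ for every pair $(\Man, \Woman)$ matched in $\MarMatch$.

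I would define $\MarMatch$ man by man. Fix a man $\Man$. By condition $\CondInverseAtMostOne$, at most one bidder associated with $\Man$ is matched in $\Match$. If that bidder exists and is incident to a non-dummy item $\toItem{\Woman}$, set $\MarMatch(\Man) = \Woman$; in every other case (the matched bidder is incident to a dummy item, or no bidder of $\Man$ is matched) set $\MarMatch(\Man) = \Unmatched$. The construction rests on two structural facts from \Alg~\ref{alg:smiw}: first, because the tiers partition $\Women + \Unmatched$, the unique bidder of $\Man$ whose item set contains $\toItem{\Woman}$ is exactly $\toBidder{\Man}{\Woman}$, so a bidder of $\Man$ matched to $\toItem{\Woman}$ must be $\toBidder{\Man}{\Woman}$, and this bidder belongs to $\Auc$ (being matched in $\Match$), which secures the domain condition; second, the dummy item $\toDummy{\Man}$ occurs in $\BItems{\Bidder}$ only for bidders $\Bidder$ of the multibidder $\toMBidder{\Man}$, and in fact only for $\toBidder{\Man}{\Unmatched}$ by the Remark following the definition of $\toBidder{\cdot}{\cdot}$. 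Hence a bidder of $\Man$ incident to a dummy item is $\toBidder{\Man}{\Unmatched}$ incident to $\toDummy{\Man}$. To see that $\MarMatch$ is a genuine matching of $\Market$, I would note that it is a function by condition $\CondInverseAtMostOne$, and that it is woman-injective because $\MarMatch(\Man) = \MarMatch(\Manp) = \Woman$ with $\Man \neq \Manp$ would force $\toItem{\Woman}$ to be matched to two distinct bidders in $\Match$.

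Next I would verify $\toUAP{\Conf}{\MarMatch} = \Match$ by double inclusion on edges, classifying each man $\Man$ by the type of item (if any) to which his matched bidder is incident. The men whose matched bidder is incident to a non-dummy item produce exactly the first union in the definition of $\toUAP{\Conf}{\MarMatch}$. For the second union, the crucial observation is that a bidder of $\Man$ is matched to $\toDummy{\Man}$ in $\Match$ if and only if $\MarMatch(\Man) = \Unmatched$ and $\PRevealed{\Conf}{\Man}$ holds: being matched to $\toDummy{\Man}$ forces $\toBidder{\Man}{\Unmatched} \in \Auc$, which is precisely $\PRevealed{\Conf}{\Man}$, while condition $\CondInverseRevealedMatched$ supplies the converse, since a revealed man must have a matched bidder, which can only be incident to some $\toItem{\Woman}$ (handled by the first union) or to $\toDummy{\Man}$. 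A man with $\lnot \PRevealed{\Conf}{\Man}$ and no matched bidder contributes an edge to neither $\Match$ nor $\toUAP{\Conf}{\MarMatch}$, so the two edge sets agree.

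The delicate step, and the one I would write out most carefully, is this alignment of unmatched men with the second union: I must ensure both that every man matched to his own dummy item reappears as a revealed unmatched man in $\toUAP{\Conf}{\MarMatch}$, and that condition $\CondInverseRevealedMatched$ forbids a revealed man from being missing in $\Match$ while $\toUAP{\Conf}{\MarMatch}$ would still demand his dummy edge. Everything here hinges on the privacy of dummy items --- that $\toDummy{\Man}$ is bid on only within $\toMBidder{\Man}$ --- which keeps the per-man bookkeeping independent and makes the two matchings coincide man by man.
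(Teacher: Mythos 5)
Your proof is correct and takes essentially the same route as the paper's: the paper defines $\mu$ by the very same rule ($\mu(i) = j$ if and only if the edge $(\Id{bidder}(i, j), \Id{item}(j))$ lies in $M$), invokes condition (1) to see that $\mu$ is a matching of the market, condition (2) to supply the dummy edges of the unmatched revealed men, and the injectivity of $\Phi_{C}$ for uniqueness. You merely spell out the steps the paper labels ``easy to see'' --- the privacy of each dummy item to its own multibidder, the uniqueness of the bidder of $i$ bidding on $\Id{item}(j)$, and the edge-by-edge verification that $\Phi_{C}(\mu) = M$ --- so there is no substantive difference in approach.
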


\begin{proof}
  Let $\MarMatch$ denote the matching such that $\MOf{\MarMatch}{\Man} = \Woman$ if and only if $(\toBidder{\Man}{\Woman}, \toItem{\Woman})$ belongs to $\Match$.
  It is easy to see by condition~\CondInverseAtMostOne\ that $\MarMatch$ is a matching of $\Market$.
  Moreover, condition~\CondInverseRevealedMatched\ implies that if $\MOf{\MarMatch}{\Man} = \Unmatched$ for a man $\Man$, then $(\toBidder{\Man}{\Unmatched}, \toDummy{\Man})$ belongs to $\Match$.
  The claim follows since $\toUAPFunc{\Conf}$ is an injection and $\toUAP{\Conf}{\MarMatch}$ is equal to $\Match$.
  \QEDWrap
\end{proof}

\begin{lemma}
  \label{lem:inverse-of-greedy-mwms}
  Let $\Conf = (\Auc, \IAuc)$ be a \RelC\ configuration and let $\Match$ be a greedy MWM of $\Auc$.
  Then there exists a unique matching $\MarMatch$ of $\Market$ such that $\toUAP{\Conf}{\MarMatch}$ is equal to $\Match$.
\end{lemma}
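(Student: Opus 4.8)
The plan is to derive this lemma from Lemma~\ref{lem:inverse-of-good-matchings} by verifying that any greedy MWM $\Match$ of $\Auc$ satisfies the two hypotheses \CondInverseAtMostOne\ and \CondInverseRevealedMatched\ of that lemma. Once both conditions are established, Lemma~\ref{lem:inverse-of-good-matchings} supplies a matching $\MarMatch$ of $\Market$ with $\toUAP{\Conf}{\MarMatch}$ equal to $\Match$, and uniqueness is immediate since $\toUAPFunc{\Conf}$ is an injection.

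Condition \CondInverseRevealedMatched\ is essentially a restatement of Lemma~\ref{lem:all-revealed-matched}: for every man $\Man$ with $\PRevealed{\Conf}{\Man}$, some bidder associated with $\Man$ is matched in the greedy MWM $\Match$. I would therefore dispatch \CondInverseRevealedMatched\ simply by invoking Lemma~\ref{lem:all-revealed-matched}.

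For condition \CondInverseAtMostOne, I would appeal to the priority-distribution machinery. First, since $\Conf$ is \RelC, it is the configuration stored by the program variable at some iteration of a \CanE\ execution of Alg.~\ref{alg:to-uap}; the predicate $\CTail{\Conf}$ holds vacuously at the initial configuration (where no bidder is present) and is preserved by the loop by Lemma~\ref{lem:iuap-tail}, so $\CTail{\Conf}$ holds at $\Conf$. Lemma~\ref{lem:iuap-one} then gives $\Greedy{\Auc}{\MBPri{\MBidder}} \leq 1$ for every multibidder $\MBidder$. The key bookkeeping observation is that the set of bidders carrying a given priority coincides with the set of bidders associated with a single man: distinct multibidders have distinct priorities, and all bidders within one multibidder share its priority. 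Hence the number of bidders associated with man $\Man$ that are matched in $\Match$ equals the number of matched bidders of priority $\ManPriS$ in $\Match$, which by Lemma~\ref{lem:uap-distribution} equals $\Greedy{\Auc}{\ManPriS} \leq 1$. This establishes \CondInverseAtMostOne.

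The only places demanding care are the two reductions just described: tying ``matched bidders of priority $\ManPriS$'' to ``matched bidders associated with $\Man$'', and confirming that $\CTail{\Conf}$ holds for every \RelC\ configuration. Both rest on facts already available---the distinctness of multibidder priorities together with the fact that a multibidder's bidders share its priority, and the invariance result of Lemma~\ref{lem:iuap-tail} with its vacuous base case---so I do not anticipate a genuine obstacle; the proof is essentially an assembly of Lemmas~\ref{lem:inverse-of-good-matchings}, \ref{lem:all-revealed-matched}, \ref{lem:iuap-one}, \ref{lem:iuap-tail}, and~\ref{lem:uap-distribution}.
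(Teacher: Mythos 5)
Your proposal is correct and follows essentially the same route as the paper's proof: verify the two hypotheses of Lemma~\ref{lem:inverse-of-good-matchings}, dispatching the second via Lemma~\ref{lem:all-revealed-matched} and the first via Lemmas~\ref{lem:iuap-one} and~\ref{lem:iuap-tail}, then conclude by the injectivity of $\toUAPFunc{\Conf}$. The only difference is that you spell out the bookkeeping the paper leaves implicit (the base case for $\CTail{\Conf}$ on relevant configurations, and the identification of matched bidders of priority $\ManPriS$ with matched bidders of man $\Man$ via Lemma~\ref{lem:uap-distribution} and the distinctness of multibidder priorities), which is a faithful elaboration rather than a different argument.
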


\begin{proof}
  It is sufficient to prove that $\Match$ satisfies the two conditions of Lemma~\ref{lem:inverse-of-good-matchings}.
  Condition~\CondInverseAtMostOne\ is satisfied because Lemmas~\ref{lem:iuap-one} and \ref{lem:iuap-tail} imply that, for each man $\Man$, $\Match$ matches at most one bidder associated with $\Man$.
  Lemma~\ref{lem:all-revealed-matched} implies that Condition~\CondInverseRevealedMatched\ is satisfied.
  \QEDWrap
\end{proof}

We now show that, given a \RelC\ configuration $\Conf = (\Auc, \IAuc)$, $\toUAPFunc{\Conf}$ is a bijection from a certain set of matchings of $\Market$ to the set of greedy MWMs of $\Auc$.
We start with some useful lemmas and definitions that help us to define this set.

\begin{lemma}
  \label{lem:reveals-acceptable}
  Let $\Conf = (\Auc, \IAuc)$ be a \RelC\ configuration and let $(\Man, \Woman)$ be a man-woman pair such that $\toBidder{\Man}{\Woman}$ belongs to $\Auc$.
  Then $\ManPrefS \geq \ManPrefUnmatchedS$.
\end{lemma}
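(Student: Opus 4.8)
The plan is to exploit two structural facts about Algorithm~\ref{alg:to-uap}: first, that the bidders of a multibidder $\toMBidder{\Man}$ are revealed strictly in tier order (a consequence of the $\Prefix{\Auc}{\IAuc}$ invariant together with the definition of $\CReady{\Conf}$), and second, that the dummy bidder $\toBidder{\Man}{\Unmatched}$ acts as a permanent ``floor'': once it is revealed, man $\Man$ is matched in every greedy MWM, which blocks any later (less preferred) bidder of $\toMBidder{\Man}$ from ever becoming ready. Concretely, let $k_0$ be the index with $\Unmatched \in \Tier{\Man}{k_0}$, so that $\toBidder{\Man}{\Unmatched} = \BidderSeqSub{\Man}(k_0)$, and let $k$ be the index with $\Woman \in \Tier{\Man}{k}$, so that $\toBidder{\Man}{\Woman} = \BidderSeqSub{\Man}(k)$. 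By the preference conversion in Alg.~\ref{alg:smiw} and the tier ordering, the desired inequality $\ManPrefS \geq \ManPrefUnmatchedS$ is equivalent to $\Woman \ManOrdPrefS \Unmatched$, which in turn is equivalent to $k \leq k_0$. Hence it suffices to establish $k \leq k_0$.

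To set up the argument, I would note that since $\Conf$ is relevant, it arises at some iteration of an execution of Alg.~\ref{alg:to-uap}, along which $\Prefix{\Auc}{\IAuc}$ is maintained; thus at every configuration the revealed bidders of $\toMBidder{\Man}$ form a prefix of its bidder sequence. Because $\toBidder{\Man}{\Woman} = \BidderSeqSub{\Man}(k)$ belongs to $\Auc$, the bidders $\BidderSeqSub{\Man}(1), \dots, \BidderSeqSub{\Man}(k)$ were all revealed at or before the iteration producing $\Conf$. I would then suppose for contradiction that $k > k_0$; in particular both $\BidderSeqSub{\Man}(k_0) = \toBidder{\Man}{\Unmatched}$ and $\BidderSeqSub{\Man}(k_0 + 1)$ were revealed during the execution.

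The key step is to show that $\BidderSeqSub{\Man}(k_0 + 1)$ could never have been ready after $\toBidder{\Man}{\Unmatched}$ was revealed. Revealing a bidder never removes bidders from $\Auc$, so once $\toBidder{\Man}{\Unmatched}$ is revealed, the predicate $\PRevealed{\Conf}{\Man}$ holds at that and every subsequent configuration. By Lemma~\ref{lem:all-revealed-matched}, at each such configuration some bidder associated with $\Man$ is matched in every greedy MWM; since the multibidders have pairwise distinct priorities, this gives $\Greedy{\Auc}{\ManPriS} \geq 1$ at every configuration from that point on. But the definition of $\CReady{\Conf}$ requires $\Greedy{\Auc}{\ManPriS} = 0$ for any bidder of $\toMBidder{\Man}$ to be ready. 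Therefore $\BidderSeqSub{\Man}(k_0 + 1)$ is never ready after $\toBidder{\Man}{\Unmatched}$ is revealed, and since it can only be revealed while ready, it is never revealed --- contradicting the previous paragraph. Hence $k \leq k_0$, which yields $\ManPrefS \geq \ManPrefUnmatchedS$.

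The main obstacle is making the ``floor'' argument fully rigorous, namely establishing the persistence of the blocking effect: one must argue that $\PRevealed{\Conf}{\Man}$ is monotone along the execution and that Lemma~\ref{lem:all-revealed-matched} may be invoked at each intermediate configuration, so that $\Greedy{\Auc}{\ManPriS} \geq 1$ holds not just at $\Conf$ but at every configuration after $\toBidder{\Man}{\Unmatched}$ is revealed. Everything else --- the equivalence between the tier comparison $k \leq k_0$ and the preference inequality $\ManPrefS \geq \ManPrefUnmatchedS$, and the prefix structure of the revealed bidders --- is routine once this persistence is in hand.
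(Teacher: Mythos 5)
Your proof is correct and takes essentially the same approach as the paper's: both hinge on the observation that once $\toBidder{\Man}{\Unmatched}$ is revealed, $\PRevealed{\Confp}{\Man}$ holds at every subsequent configuration, so Lemma~\ref{lem:all-revealed-matched} keeps some bidder of $\toMBidder{\Man}$ matched in every greedy MWM and hence no bidder of $\toMBidder{\Man}$ is ever ready again, meaning no less preferred tier can be revealed. Your explicit tier-index bookkeeping ($k \leq k_0$) and the persistence argument you flag are just a more detailed rendering of the paper's one-line conclusion that no other tier of $\Man$ is subsequently revealed.
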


\begin{proof}
  If the least preferred acceptable tier of a man $\Man$ is revealed at some iteration, i.e., if $\toBidder{\Man}{\Unmatched}$ is added at line~\ref{line:reveal} of \Alg~\ref{alg:to-uap}, then for any configuration $\Confp$ that results in a subsequent iteration, $\PRevealed{\Confp}{\Man}$ holds, and by Lemma~\ref{lem:all-revealed-matched}, there is no bidder associated with $\Man$ in $\CReady{\Confp}$.
  Hence no other tier of $\Man$ is subsequently revealed.
  \QEDWrap
\end{proof}

\newcommand{\LeastTier}[2]{\Id{least}(#1, #2)}

For any \RelC\ configuration $\Conf$ and any man $\Man$, we define $\LeastTier{\Conf}{\Man}$ as the nonempty subset of $\Women + \Unmatched$ in the least preferred tier of $\Man$ that is revealed in $\Conf$, i.e.,  $\toWomen{\BItems{\MBBidder{\MBidder}{\Car{\BiddersOfMB{\Conf}{\MBidder}}}}}$, where $\MBidder$ denotes $\toMBidder{\Man}$.
Remark: $\ManPrefS = \ManPref{\Man}{\Womanp} \geq \ManPrefUnmatchedS$ for any $\Woman$ and $\Womanp$ belonging to $\LeastTier{\Conf}{\Man}$, where the inequality follows from Lemma~\ref{lem:reveals-acceptable} and is tight if and only if $\PRevealed{\Conf}{\Man}$ holds.

\newcommand{\PLeast}[2]{P_{\Id{least}}(#1, #2)}

For any \RelC\ configuration $\Conf$ and any matching $\MarMatch$ of $\Market$, we define the predicate $\PLeast{\Conf}{\MarMatch}$ to hold if for each man-woman pair $(\Man, \Woman)$ matched in $\MarMatch$, the woman $\Woman$ belongs to $\LeastTier{\Conf}{\Man}$.
Remark: For any \RelC\ configuration $\Conf = (\Auc, \IAuc)$ and any matching $\MarMatch$ such that $\PLeast{\Conf}{\MarMatch}$ holds, it is easy to see that $\toUAP{\Conf}{\MarMatch}$ is well-defined because for each man-woman pair $(\Man, \Woman)$ matched in $\MarMatch$, $\toBidder{\Man}{\Woman}$ belongs to $\Auc$.

The following lemma is only used to prove Lemma~\ref{lem:weight-greedy-mwm}.

\begin{lemma}
  \label{lem:uap-weight-marm}
  Let $\Conf = (\Auc, \IAuc)$ be a \RelC\ configuration, and let $\MarMatchSub{0}$ and $\MarMatch$ be two matchings of $\Market$ such that $\PLeast{\Conf}{\MarMatchSub{0}}$ holds, $\toUAP{\Conf}{\MarMatch}$ is a greedy MWM of $\Auc$, and $\WMUAP{\MarMatchSub{0}} < \WMUAP{\MarMatch}$.
  Then $\WMarM{\Conf}{\MarMatchSub{0}} < \WMarM{\Conf}{\MarMatch}$.
\end{lemma}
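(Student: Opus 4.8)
The plan is to compare the two matchings by analyzing the symmetric difference of their images under $\toUAPFunc{\Conf}$ inside the UAP $\Auc$. Write $\Match = \toUAP{\Conf}{\MarMatch}$ and $\MatchSub{0} = \toUAP{\Conf}{\MarMatchSub{0}}$; the former is a matching of $\Auc$ by hypothesis, and the latter is well-defined because $\PLeast{\Conf}{\MarMatchSub{0}}$ guarantees that every pair matched in $\MarMatchSub{0}$ uses a revealed bidder. Using $\WMatch{\toUAP{\Conf}{\MarMatchp}} = \WMUAP{\MarMatchp}$ together with~(\ref{eq:WMarM}) and the remark that its middle term equals $\MPriority{\toUAP{\Conf}{\MarMatchp}}$, I would write $\WMarM{\Conf}{\MarMatch} - \WMarM{\Conf}{\MarMatchSub{0}}$ as an $\NN$-scaled weight difference plus a difference of \emph{priority terms}, where the priority term of a matching credits each man $\Man$ with $\ManPriS$ when he is matched in the associated UAP matching and with $\ResCS$ when he is non-revealed and unmatched. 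Here Lemma~\ref{lem:all-revealed-matched} (for $\Match$) and the definition of $\toUAPFunc{\Conf}$ (for $\MatchSub{0}$) keep every revealed man matched on both sides, so revealed men never affect this difference, and Lemmas~\ref{lem:iuap-one} and~\ref{lem:iuap-tail} ensure at most one bidder per man is matched, so the notion of a matched man is unambiguous.

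First I would dispatch the weight contribution. By condition~(ii) of App.~\ref{app:iuap-weights}, $\WMUAP{\MarMatchSub{0}} < \WMUAP{\MarMatch}$ forces $\WMPref{\MarMatchSub{0}} < \WMPref{\MarMatch}$, and since $\WMPref{\cdot}$ is integer-valued the gap is at least $1$; as the weights $\WMW{\Man}{\Woman}$ are integers (as in the construction realizing conditions (i)--(ii)), the UAP weight of every matching is an integer. I would then decompose $\Match \oplus \MatchSub{0}$ into alternating paths and cycles. Because $\Match$ is a maximum-weight matching, flipping any single component $Q$ cannot increase weight, so $\WEdges{\Match \cap Q} \geq \WEdges{\MatchSub{0} \cap Q}$ for every $Q$; each component weight change is thus a nonnegative integer, and the total surplus $\WEdges{\Match} - \WEdges{\MatchSub{0}} \geq 1$ forces at least one component to be weight-strict, with change at least $1$.

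The core of the argument is to show that every component contributes nonnegatively to $\WMarM{\Conf}{\MarMatch} - \WMarM{\Conf}{\MarMatchSub{0}}$, strictly on any weight-strict one. The priority term changes only at a man who toggles between matched and unmatched, and such toggles occur only at the man-endpoints of a component, so each component's priority swing involves at most the priorities of its endpoint men. For a cycle the matched men are unchanged, so its contribution is the nonnegative weight change alone. For a weight-strict path, a parity check on the bipartite alternating structure shows that flipping it leaves at most one man newly unmatched in $\Match$, so the priority swing is at least $-\Car{\Men}$ (a single loss $-(\ManPriS - \ResCS)$, together with any nonnegative gains); combined with a weight surplus of at least $1$ this gives a contribution at least $\NN - \Car{\Men} \geq 1 > 0$, using $\NN \geq \Car{\Men} + 1$.

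The delicate case, which I expect to be the main obstacle, is a weight-neutral path $Q$: the $\NN$-scaled term vanishes, so I must rule out the priority term favoring $\MarMatchSub{0}$. Here $\Match \oplus Q$ has the same weight as $\Match$ and is therefore itself an MWM, so its matched-bidder set is independent in $\Matroid{\Auc}$; since $\Match$ is a greedy MWM, $\MatchedBidders{\Match}$ is a maximum-priority independent set of $\Matroid{\Auc}$, whence $\MPriority{\Match} \geq \MPriority{\Match \oplus Q}$. Restricted to $Q$ this says the men $\Match$ matches carry at least as much priority as those $\MatchSub{0}$ matches. A case analysis of the path endpoints then finishes it: a one-endpoint man newly matched only in $\MatchSub{0}$ would make $\Match \oplus Q$ an equal-weight matching of strictly larger priority, contradicting the maximality just invoked; in the remaining cases the matched-man counts agree or $\Match$ matches the higher-priority man, so the $\ResCS$ residual adjustments cancel or are dominated because $\ResCS < 1 \leq \ManPriS$, yielding a nonnegative contribution. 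Summing over all components, every term is nonnegative and at least one weight-strict term is strictly positive, so $\WMarM{\Conf}{\MarMatch} > \WMarM{\Conf}{\MarMatchSub{0}}$.
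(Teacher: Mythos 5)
Your overall architecture coincides with the paper's: decompose $\MatchSub{0} \oplus \Match$ into alternating paths and cycles, use the MWM property of $\Match$ to make every component's weight change nonnegative, dominate weight-strict components by the $\NN$-scaled preference term (using $\NN > \max_{\EachMan} \ManPriS$, the observation that a path flip leaves at most one man newly unmatched, and the residual bound $0 < \ResCS < 1$), and resolve weight-neutral paths by greediness --- your ``$\Match \oplus Q$ would be an equal-weight matching of strictly larger priority'' contradiction is exactly the paper's argument that the priority must move in $\Match$'s favor. The only presentational difference is that the paper telescopes through intermediate matchings $\MatchSub{k} = (\MatchSub{k-1} \setminus X'_k) \cup X_k$ and compares consecutive values of $\WMarM{\Conf}{\cdot}$, whereas you sum per-component contributions directly; by vertex-disjointness these are the same computation.

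There is, however, one genuine gap: the step converting per-component \emph{UAP-weight} comparisons into per-component \emph{$\WMPref$} comparisons, i.e., your claims that a weight-strict component contributes at least $\NN$ to the first term of the $\WMarM{\Conf}{\cdot}$ difference and a weight-neutral component contributes $0$ to it. You justify this via integrality of the edge weights $\WMW{\Man}{\Woman}$, but App.~\ref{app:iuap-weights} assumes only conditions \CondUnmatchedMan\ and \CondMatching; the integer assignment $\WomanUtilFun{\Woman}{\Man} = \WomanPrefS - \WomanPrefUnmatchedS$ is offered merely as one realization. Moreover, even granting integer UAP weights, condition \CondMatching\ is only an order equivalence between $\WMUAP{\cdot}$ and $\WMPref{\cdot}$ \emph{over matchings of $\Market$}: an integer gap in $\WMUAP{\cdot}$ yields no quantitative bound on the gap in $\WMPref{\cdot}$, which is what the term $\NN \cdot \WMPref{\cdot}$ requires. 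The paper closes this by verifying that each intermediate matching $\MatchSub{k}$ satisfies the two conditions of Lemma~\ref{lem:inverse-of-good-matchings} --- every matched bidder is the least-revealed bidder of its man (via $\PLeast{\Conf}{\MarMatchSub{0}}$ and Lemma~\ref{lem:iuap-tail} applied to $\Match$), and the bidders of revealed men are matched in both endpoints of the symmetric difference, hence interior to components and matched throughout --- so that each flip has a preimage $\MarMatchSub{k}$ in $\Market$; condition \CondMatching\ applied to the pair $(\MarMatchSub{k-1}, \MarMatchSub{k})$ then gives the sign of the per-component $\WMPref{\cdot}$ change, and integrality of the $\WomenPref$-values gives the increment of at least $1$ in the strict case (and exactly $0$ in the neutral case, which your argument also needs). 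You state the two validity facts for $\Match$ and $\MatchSub{0}$ but never for the flipped matchings, and you never construct their preimages; this verification is the one substantive piece of the paper's proof missing from yours, and your integrality shortcut would fail for non-integer weights satisfying \CondUnmatchedMan\ and \CondMatching.
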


\newcommand{\ic}{k} 

\begin{proof}
  \newcommand{\EdgesInitial}[1]{X'_{#1}}
  \newcommand{\EdgesFinal}[1]{X_{#1}}

  Since $\toUAP{\Conf}{\MarMatch}$ is a greedy MWM of $\Auc$, Lemma~\ref{lem:iuap-tail} implies that $\PLeast{\Conf}{\MarMatch}$ holds.
  Let $\MatchSub{0}$ denote $\toUAP{\Conf}{\MarMatchSub{0}}$.
  The symmetric difference of $\MatchSub{0}$ and $\Match$, denoted $\MatchSub{0} \oplus \Match$, corresponds to a collection $\PathCycleColl$ of vertex-disjoint paths and cycles.
  Let $\Seq{\PathCycleSub{1}, \dotsc, \PathCycleSub{\Car{\PathCycleColl}}}$ be an arbitrary permutation of $\PathCycleColl$.
  For any integer $\ic$ such that $1 \leq \ic \leq \Car{\PathCycleColl}$,
  let $\EdgesFinal{\ic}$ denote the edges of $\PathCycleSub{\ic}$ that belong to $\Match$,
  let $\EdgesInitial{\ic}$ denote the edges of $\PathCycleSub{\ic}$ that belong to $\MatchSub{0}$, and
  let $\MatchSub{\ic}$ denote the matching $(\MatchSub{\ic-1} \setminus \EdgesInitial{\ic}) \cup \EdgesFinal{\ic}$.
  Remark: It is easy to see that $\MatchSub{\Car{\PathCycleColl}}$ is equal to $\Match$.

  We start by showing that, for each integer $\ic$ such that $1 \leq \ic \leq \Car{\PathCycleColl}$, $\MatchSub{\ic}$ satisfies the two conditions of Lemma~\ref{lem:inverse-of-good-matchings}.
  It is easy to see that condition~\CondInverseAtMostOne\ holds because $\PLeast{\Conf}{\MarMatchSub{0}}$ and $\PLeast{\Conf}{\MarMatch}$ imply that, for any integer $\ic$ such that $0 \leq \ic \leq \Car{\PathCycleColl}$, any bidder that is matched in $\MatchSub{\ic}$ is the least preferred bidder of the associated man that is revealed in $\Auc$.
  We now address condition~\CondInverseRevealedMatched.
  Since $\PLeast{\Conf}{\MarMatchSub{0}}$ and $\PLeast{\Conf}{\MarMatch}$ hold, the definitions of $\MatchSub{0}$ and $\Match$ imply that for each man $\Man$ such that $\PRevealed{\Conf}{\Man}$ holds, both $\MatchSub{0}$ and $\Match$ match the bidder $\toBidder{\Man}{\Unmatched}$.
  It follows that, for each man $\Man$ such that $\PRevealed{\Conf}{\Man}$ holds, $\toBidder{\Man}{\Unmatched}$ is not an endpoint of any path in $\PathCycleColl$, and thus $\toBidder{\Man}{\Unmatched}$ is matched in $\MatchSub{\ic}$ for all $1 \leq \ic \leq \Car{\PathCycleColl}$, establishing condition~\CondInverseRevealedMatched.
  Having established that $\MatchSub{\ic}$ satisfies the two conditions of Lemma~\ref{lem:inverse-of-good-matchings}, for each integer $\ic$ such that $1 \leq \ic \leq \Car{\PathCycleColl}$, we define $\MarMatchSub{\ic}$ as the matching of $\Market$ such that $\toUAP{\Conf}{\MarMatchSub{\ic}}$ is equal to $\MatchSub{\ic}$.
  We now establish two simple but useful claims.

  Claim 1: $\WMatch{\MatchSub{\ic}} > \WMatch{\MatchSub{\ic-1}}$ for at least one $\ic$ such that $1 \leq \ic \leq \Car{\PathCycleColl}$.
  The claim follows directly from the fact that $\WMatch{\MatchSub{0}} = \WMUAP{\MarMatchSub{0}} < \WMUAP{\MarMatch} = \WMatch{\Match} = \WMatch{\MatchSub{\Car{\PathCycleColl}}}$.

  Claim 2: $\WMatch{\MatchSub{\ic}} \geq \WMatch{\MatchSub{\ic-1}}$ for all $1 \leq \ic \leq \Car{\PathCycleColl}$.
  For the sake of contradiction, suppose that the claim fails for some integer $\ic$.
  Then $(\Match \setminus \EdgesFinal{\ic}) \cup \EdgesInitial{\ic}$ is a matching of $\Auc$ with weight higher than that of $\Match$, a contradiction since $\Match$ is an MWM of $\Auc$.

  Having established these two claims, we now complete the proof of the lemma by showing that the following two conditions hold for any $1 \leq \ic \leq \Car{\PathCycleColl}$:
  (a) if $\WMatch{\MatchSub{\ic}} > \WMatch{\MatchSub{\ic-1}}$ then $\WMarM{\Conf}{\MarMatchSub{\ic}} > \WMarM{\Conf}{\MarMatchSub{\ic-1}}$; and
  (b) if $\WMatch{\MatchSub{\ic}} = \WMatch{\MatchSub{\ic-1}}$ then $\WMarM{\Conf}{\MarMatchSub{\ic}} \geq \WMarM{\Conf}{\MarMatchSub{\ic-1}}$.

  We first address condition~(a).
  Let $\ic$ be an integer such that $1 \leq \ic \leq \Car{\PathCycleColl}$ and $\WMatch{\MatchSub{\ic}} > \WMatch{\MatchSub{\ic-1}}$.
  Our goal is to establish that $\WMarM{\Conf}{\MarMatchSub{\ic}} > \WMarM{\Conf}{\MarMatchSub{\ic-1}}$.
  Since $\Car{\Men} - \Car{\toUAP{\Conf}{\MarMatchp}}$ is equal to $\Car{\SetBuild{\Man}{\IsUnmatched{\Man}{\MarMatchp} \land \lnot \PRevealed{\Conf}{\Man}}}$ for any $\MarMatchp$, equality~\eqref{eq:WMarM} and the associated remark imply that the difference $\WMarM{\Conf}{\MarMatchSub{\ic}} - \WMarM{\Conf}{\MarMatchSub{\ic-1}}$ is equal to
  \begin{equation}
    \label{eq:successive-weight-diff}
    \begin{multlined}
      \NN \cdot (\WMPref{\MarMatchSub{\ic}} - \WMPref{\MarMatchSub{\ic-1}}) + (\MPriority{\MatchSub{\ic}} - \MPriority{\MatchSub{\ic-1}}) + {}\\
      (\Car{\MatchSub{\ic-1}} - \Car{\MatchSub{\ic}}) \ResCS .
    \end{multlined}
  \end{equation}
  Since $\WMUAP{\MarMatchSub{\ic}} = \WMatch{\MatchSub{\ic}} > \WMatch{\MatchSub{\ic-1}} = \WMUAP{\MarMatchSub{\ic-1}}$, condition~\CondMatching\ stated in App.~\ref{app:iuap-weights} implies that $\WMPref{\MarMatchSub{\ic}} > \WMPref{\MarMatchSub{\ic-1}}$.
  Then, since $\NN$, $\WMPref{\MarMatchSub{\ic}}$, $\WMPref{\MarMatchSub{\ic-1}}$, and the priorities are integers, and since $\NN > \max_{\EachMan} \ManPriS$, we deduce that the first term of~\eqref{eq:successive-weight-diff} is at least $1 + \max_{\EachMan} \ManPriS$.
  Thus, in order to establish that $\WMarM{\Conf}{\MarMatchSub{\ic}} > \WMarM{\Conf}{\MarMatchSub{\ic-1}}$, it is enough to show that the sum of the second and third term of~\eqref{eq:successive-weight-diff} is greater than $-1 - \max_{\EachMan} \ManPriS$.
  If $\PathCycleSub{\ic}$ is a cycle, then it is easy to see that $\MatchedBidders{\MatchSub{\ic}} = \MatchedBidders{\MatchSub{\ic-1}}$, and hence that both the second and third terms of~\eqref{eq:successive-weight-diff} are zero.
  In the remainder of this paragraph, we address the case where $\PathCycleSub{\ic}$ is a path.
  In this case, it is easy to see that $\MatchedBidders{\MatchSub{\ic-1}} \setminus \MatchedBidders{\MatchSub{\ic}}$ contains at most one bidder.
  Then, since $\min_{\EachMan} \ManPriS > 0$, we deduce that the second term of~\eqref{eq:successive-weight-diff} is at least $-\max_{\EachMan} \ManPriS$.
  Finally, since $-1 \leq \Car{\MatchSub{\ic-1}} - \Car{\MatchSub{\ic}} \leq 1$, we conclude that the third term of~\eqref{eq:successive-weight-diff} is greater than $-1$, as required.
  
  We now address condition~(b).
  Let $\ic$ be an integer such that $1 \leq \ic \leq \Car{\PathCycleColl}$ and $\WMatch{\MatchSub{\ic}} = \WMatch{\MatchSub{\ic-1}}$.
  Our goal is to establish that $\WMarM{\Conf}{\MarMatchSub{\ic}} \geq \WMarM{\Conf}{\MarMatchSub{\ic-1}}$.
  Again, the difference $\WMarM{\Conf}{\MarMatchSub{\ic}} - \WMarM{\Conf}{\MarMatchSub{\ic-1}}$ is equal to~\eqref{eq:successive-weight-diff}.
  In this case, since $\WMUAP{\MarMatchSub{\ic}} = \WMatch{\MatchSub{\ic}} = \WMatch{\MatchSub{\ic-1}} = \WMUAP{\MarMatchSub{\ic-1}}$, condition~\CondMatching\ stated in App.~\ref{app:iuap-weights} implies that $\WMPref{\MarMatchSub{\ic}} = \WMPref{\MarMatchSub{\ic-1}}$, and hence that the first term of~\eqref{eq:successive-weight-diff} is zero.
  Thus, in order to establish that $\WMarM{\Conf}{\MarMatchSub{\ic}} \geq \WMarM{\Conf}{\MarMatchSub{\ic-1}}$, it remains to show that the sum of the second and third term of~\eqref{eq:successive-weight-diff} is nonnegative.
  If $\PathCycleSub{\ic}$ is a cycle, then it is easy to see that $\MatchedBidders{\MatchSub{\ic}} = \MatchedBidders{\MatchSub{\ic-1}}$, and hence that both the second and third terms of~\eqref{eq:successive-weight-diff} are zero.
  In the remainder of this paragraph, we address the case where $\PathCycleSub{\ic}$ is a path.
  In this case, it is easy to see that $\MatchedBidders{\MatchSub{\ic}} \not= \MatchedBidders{\MatchSub{\ic-1}}$.
  Since (as argued in the second paragraph of the proof) any bidder that is matched in $\MatchSub{\ic}$ or in $\MatchSub{\ic-1}$ is the least preferred bidder of the associated man that is revealed in $\Auc$, we deduce that $\MPriority{\MatchSub{\ic}} \not= \MPriority{\MatchSub{\ic-1}}$.
  We conclude that $\MPriority{\MatchSub{\ic}} > \MPriority{\MatchSub{\ic-1}}$, for otherwise $(\Match \setminus \EdgesFinal{\ic}) \cup \EdgesInitial{\ic}$ is an MWM of $\Auc$ with priority higher than that of $\Match$, a contradiction since $\Match$ is a greedy MWM of $\Auc$.
  It follows that the second term of~\eqref{eq:successive-weight-diff} is at least $1$ since the priorities are integers.
  Finally, since $-1 \leq \Car{\MatchSub{\ic-1}} - \Car{\MatchSub{\ic}} \leq 1$, we conclude that the third term of~\eqref{eq:successive-weight-diff} is greater than $-1$, as required.
  \QEDWrap
\end{proof}

\newcommand{\MarMatchSet}{X}
\newcommand{\MarMatchSetp}{\MarMatchSet'}
\newcommand{\MarMatchSetOpt}{\Opt{\MarMatchSet}}

\begin{lemma}
  \label{lem:weight-greedy-mwm}
  Let $\Conf = (\Auc, \IAuc)$ be a \RelC\ configuration and let $\MarMatchSet$ be the set of all matchings $\MarMatch$ of $\Market$ such that $\PLeast{\Conf}{\MarMatch}$ holds.
  Let $\MarMatchSetOpt$ denote the set $\SetBuild{\MarMatch}{\MarMatch \in \MarMatchSet \land \WMarM{\Conf}{\MarMatch} = \max_{\MarMatchp \in \MarMatchSet} \WMarM{\Conf}{\MarMatchp}}$.
  Then, $\toUAPFunc{\Conf}$ is a bijection from $\MarMatchSetOpt$ to the set of greedy MWMs of $\Auc$.
\end{lemma}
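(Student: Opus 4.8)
The plan is to exploit two facts already in hand: $\toUAPFunc{\Conf}$ is injective, and $\toUAP{\Conf}{\MarMatch}$ is a well-defined matching of $\Auc$ whenever $\PLeast{\Conf}{\MarMatch}$ holds. Hence it suffices to prove two containments: (a) every greedy MWM of $\Auc$ equals $\toUAP{\Conf}{\MarMatch}$ for some $\MarMatch \in \MarMatchSetOpt$, and (b) $\toUAP{\Conf}{\MarMatch}$ is a greedy MWM of $\Auc$ for every $\MarMatch \in \MarMatchSetOpt$. For (a) I would take a greedy MWM $\Match$, apply Lemma~\ref{lem:inverse-of-greedy-mwms} to obtain the unique $\MarMatch$ with $\toUAP{\Conf}{\MarMatch} = \Match$, and then argue that $\MarMatch \in \MarMatchSetOpt$.

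The common engine for both directions is an inequality about the matroid $\Matroid{\Auc}$ of Lemma~\ref{lem:matroid}, whose element priorities are the integers $\ManPriS \geq 1$. I would show that if $\Match$ is a greedy MWM, so that $\MatchedBidders{\Match}$ is a maximum-priority base, and $\Matchp$ is any MWM, so that $\MatchedBidders{\Matchp}$ is independent, then $\MPriority{\Match} - \MPriority{\Matchp} \geq (1 - \BB^{-1})(\Car{\Match} - \Car{\Matchp})$, where $0 < 1 - \BB^{-1} < 1$ since $\BB \geq \NN \geq 2$. Augmenting $\MatchedBidders{\Matchp}$ to a base using $\Car{\Match} - \Car{\Matchp}$ elements, each of priority at least $1$, gives $\MPriority{\Match} \geq \MPriority{\Matchp} + (\Car{\Match} - \Car{\Matchp}) \geq \MPriority{\Matchp} + (1 - \BB^{-1})(\Car{\Match} - \Car{\Matchp})$; because $1 - \BB^{-1} < 1$, equality throughout forces $\Car{\Match} = \Car{\Matchp}$ and $\MPriority{\Match} = \MPriority{\Matchp}$.

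For (a), Lemma~\ref{lem:iuap-tail} makes every matched bidder of $\Match$ the tail bidder of its multibidder, so each matched woman lies in the relevant least revealed tier and $\PLeast{\Conf}{\MarMatch}$ holds, giving $\MarMatch \in \MarMatchSet$. To get maximality of $\WMarM{\Conf}{\cdot}$, I would compare $\MarMatch$ with an arbitrary $\MarMatchp \in \MarMatchSet$: if $\WMUAP{\MarMatchp} < \WMUAP{\MarMatch}$, then Lemma~\ref{lem:uap-weight-marm} gives $\WMarM{\Conf}{\MarMatchp} < \WMarM{\Conf}{\MarMatch}$; otherwise $\WMUAP{\MarMatchp} \geq \WMUAP{\MarMatch}$, and since the weight of the matching $\toUAP{\Conf}{\MarMatchp}$ cannot exceed that of the MWM $\Match$ we get $\WMUAP{\MarMatchp} = \WMUAP{\MarMatch}$, so $\toUAP{\Conf}{\MarMatchp}$ is also an MWM, condition~\CondMatching\ of App.~\ref{app:iuap-weights} gives $\WMPref{\MarMatchp} = \WMPref{\MarMatch}$, and the expansion of $\WMarM{\Conf}{\cdot}$ via~\eqref{eq:WMarM} together with the matroid inequality yields $\WMarM{\Conf}{\MarMatchp} \leq \WMarM{\Conf}{\MarMatch}$. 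For (b), given $\MarMatch \in \MarMatchSetOpt$, I would fix any greedy MWM $\MatchOpt$ (one exists since $\Matroid{\Auc}$ has a maximum-priority base) and its preimage $\MarMatchOpt \in \MarMatchSetOpt$ furnished by (a), so that $\WMarM{\Conf}{\MarMatch} = \WMarM{\Conf}{\MarMatchOpt}$; Lemma~\ref{lem:uap-weight-marm} rules out $\WMUAP{\MarMatch} < \WMUAP{\MarMatchOpt}$, so $\toUAP{\Conf}{\MarMatch}$ is an MWM and $\WMPref{\MarMatch} = \WMPref{\MarMatchOpt}$; the equality of $\WMarM{\Conf}{\cdot}$ then forces the matroid inequality to be tight, whence $\Car{\toUAP{\Conf}{\MarMatch}} = \Car{\MatchOpt}$ and $\MPriority{\toUAP{\Conf}{\MarMatch}} = \MPriority{\MatchOpt}$, so $\toUAP{\Conf}{\MarMatch}$ is a maximum-cardinality MWM whose matched-bidder set is a maximum-priority base, i.e., a greedy MWM.

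The strict-weight case is routine, being delivered directly by Lemma~\ref{lem:uap-weight-marm}. The main obstacle will be the equal-weight regime, where the dominant term $\NN \cdot \WMPref{\cdot}$ of $\WMarM{\Conf}{\cdot}$ cancels and the winner is decided by the lower-order terms, which encode, lexicographically, maximum cardinality and then maximum priority of the induced UAP matching. The matroid augmentation claim is exactly what disentangles this tie-break: because every priority is an integer at least $1$ while the coefficient $1 - \BB^{-1}$ attached to the unmatched, not-fully-revealed men lies strictly below $1$, trading one unit of cardinality for priority is always strictly favorable, so $\WMarM{\Conf}{\cdot}$ is maximized precisely at the maximum-priority maximum-cardinality matchings, which are the preimages of the greedy MWMs.
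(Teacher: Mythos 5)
Your proof is correct, and it takes a genuinely different route through the crux of the argument. The skeleton matches the paper's: injectivity of $\toUAPFunc{\Conf}$, pulling each greedy MWM back to a member of $\MarMatchSet$ via Lemmas~\ref{lem:inverse-of-greedy-mwms} and~\ref{lem:iuap-tail}, Lemma~\ref{lem:uap-weight-marm} to dispose of the strict-weight comparisons, and condition~\CondMatching\ of App.~\ref{app:iuap-weights} to equate the $\WMPref$ values of equal-weight matchings. Where you diverge is the equal-weight regime. The paper splits into two cases on $\Car{\MatchOpt}$ versus $\Car{\Match}$: when $\Car{\MatchOpt} < \Car{\Match}$ it invokes the exchange property of $\Matroid{\Auc}$ to add a \emph{single} bidder to $\MatchedBidders{\MatchOpt}$, and then must verify --- via a somewhat delicate tail-bidder argument --- that the augmented MWM satisfies conditions~\CondInverseAtMostOne\ and~\CondInverseRevealedMatched\ of Lemma~\ref{lem:inverse-of-good-matchings}, so that it can be pulled back to a matching $\MarMatchp$ of $\Market$ with $\WMarM{\Conf}{\MarMatchp} > \WMarM{\Conf}{\MarMatchOpt}$; when the cardinalities are equal it compares priorities directly. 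You instead prove one standalone matroid inequality, $\MPriority{\Match} - \MPriority{\Matchp} \geq \Car{\Match} - \Car{\Matchp} \geq \ResCS \pr{\Car{\Match} - \Car{\Matchp}}$, by augmenting the independent set $\MatchedBidders{\Matchp}$ to a base using elements of integer priority at least $1$, and you combine it with the expansion~\eqref{eq:WMarM}, under which, at equal $\WMPref$ values, $\WMarM{\Conf}{\MarMatch} - \WMarM{\Conf}{\MarMatchp}$ is exactly $\pr{\MPriority{\Match} - \MPriority{\Matchp}} - \ResCS \pr{\Car{\Match} - \Car{\Matchp}}$. This treats the paper's two cases uniformly, never needs to pull augmented matchings back through Lemma~\ref{lem:inverse-of-good-matchings}, and isolates the real tie-breaking mechanism: since every priority is an integer at least $1$ while the coefficient attached to unmatched, not-fully-revealed men is $1 - \BB^{-1} < 1$, a positive cardinality gap forces strict inequality, so in your direction~(b) the equality of $\WMarM$ values yields $\Car{\toUAP{\Conf}{\MarMatch}} = \Car{\MatchOpt}$ and $\MPriority{\toUAP{\Conf}{\MarMatch}} = \MPriority{\MatchOpt}$, making $\toUAP{\Conf}{\MarMatch}$ a maximum-priority MCMWM, hence greedy (its matched-bidder set is independent of base cardinality, hence a maximum-priority base). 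The paper's approach buys locality --- one exchange element at a time, staying inside the $\WMarM$-comparison machinery it has already built --- while yours buys a cleaner, purely matroid-algebraic engine at the mild cost of stating and proving the augmentation inequality separately; all the auxiliary facts you rely on (well-definedness of $\toUAP{\Conf}{\MarMatchp}$ for $\MarMatchp \in \MarMatchSet$, the identity $\Car{\Men} - \Car{\toUAP{\Conf}{\MarMatchp}} = \Car{\SetBuild{\Man}{\IsUnmatched{\Man}{\MarMatchp} \land \lnot \PRevealed{\Conf}{\Man}}}$, and independence of $\MatchedBidders{\Matchp}$ for any MWM $\Matchp$) are indeed available from the surrounding text.
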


\newcommand{\Menp}{\Men'}

\newcommand{\Var}{x}

\begin{proof}
  Let $\Match$ be a greedy MWM of $\Auc$ and let $\MarMatch$ be the matching (by Lemma~\ref{lem:inverse-of-greedy-mwms}) of $\Market$ such that $\toUAP{\Conf}{\MarMatch}$ is equal to $\Match$.
  Lemma~\ref{lem:iuap-tail} implies that $\PLeast{\Conf}{\MarMatch}$ holds;
  thus $\MarMatch$ belongs to $\MarMatchSet$.
  Let $\MarMatchOpt$ be a matching in $\MarMatchSetOpt$ and let $\MatchOpt$ denote $\toUAP{\Conf}{\MarMatchOpt}$.
  Note that $\MatchOpt$ is a matching of $\Auc$.
  Since $\toUAPFunc{\Conf}$ is an injection, it remains to show that $\WMarM{\Conf}{\MarMatch} = \max_{\MarMatchp \in \MarMatchSet} \WMarM{\Conf}{\MarMatchp}$ and that $\MatchOpt$ is a greedy MWM of $\Auc$.

  Since $\WMarM{\Conf}{\MarMatch} \leq \max_{\MarMatchp \in \MarMatchSet} \WMarM{\Conf}{\MarMatchp} = \WMarM{\Conf}{\MarMatchOpt}$, Lemma~\ref{lem:uap-weight-marm} implies that $\WMUAP{\MarMatch} \leq \WMUAP{\MarMatchOpt}$.
  Then, since $\Match$ is an MWM of $\Auc$, we deduce that $\MatchOpt$ is an MWM of $\Auc$.
  Since $\Match$ is a greedy MWM, and hence an MCMWM of $\Auc$, we deduce that $\Car{\MatchOpt} \leq \Car{\Match}$.
  We consider two cases.

  Case 1: $\Car{\MatchOpt} < \Car{\Match}$.
  Let $\Var$ be a bidder in $\MatchedBidders{\Match} \setminus \MatchedBidders{\MatchOpt}$ such that there exists an MWM, call it $\Matchp$, of $\Auc$ having $\MatchedBidders{\Matchp} = \MatchedBidders{\MatchOpt} + \Var$;
  the exchange property of $\Matroid{\Auc}$ implies the existence of such a bidder $\Var$ and matching $\Matchp$.
  Let $\Manp$ denote the man associated with $\Var$.
  We first argue that $\Matchp$ satisfies the two conditions of Lemma~\ref{lem:inverse-of-good-matchings}.
  By definition, the men corresponding to the bidders in $\MatchedBidders{\MatchOpt}$ are distinct;
  let $\Menp$ denote the set of men corresponding to these bidders, i.e., $\Menp = \SetBuild{\Man}{\IsMatched{\Man}{\MarMatchOpt} \lor \PRevealed{\Conf}{\Man}}$.
  Since $\PLeast{\Conf}{\MarMatchOpt}$ holds, we deduce that, for each man $\Man$ in $\Menp$, the bidder in $\MatchedBidders{\MatchOpt}$ associated with $\Man$ corresponds to the least preferred tier of $\Man$ that is revealed in $\Conf$.
  Lemma~\ref{lem:iuap-tail} implies that $\Var$ corresponds to the least preferred tier of $\Manp$ that is revealed in $\Conf$.
  Then, since $\Var$ does not belong to $\MatchedBidders{\MatchOpt}$, the results of the preceding two sentences imply that $\Manp$ does not belong to $\Menp$, and thus that $\Matchp$ satisfies condition~\CondInverseAtMostOne.
  We now address condition~\CondInverseRevealedMatched.
  Matching $\MatchOpt$ satisfies condition~\CondInverseRevealedMatched\ by definition.
  Thus $\Matchp$ satisfies condition~\CondInverseRevealedMatched\ since any bidder matched by $\MatchOpt$ is also matched by $\Matchp$.
  Since $\Matchp$ satisfies conditions~\CondInverseAtMostOne\ and~\CondInverseRevealedMatched, Lemma~\ref{lem:inverse-of-good-matchings} implies that there is a matching, call it $\MarMatchp$, of $\Market$ such that $\toUAP{\Conf}{\MarMatchp}$ is equal to $\Matchp$.
  Since both $\Matchp$ and $\MatchOpt$ are MWMs of $\Auc$, condition~\CondMatching\ stated in \App~\ref{app:iuap-weights} implies that $\WMPref{\MarMatchp} = \WMPref{\MarMatchOpt}$.
  Since $\MatchedBidders{\Matchp}$ properly contains $\MatchedBidders{\MatchOpt}$, we deduce that the set $\SetBuild{\Man}{\IsMatched{\Man}{\MarMatchp} \lor \PRevealed{\Conf}{\Man}}$ properly contains the set $\SetBuild{\Man}{\IsMatched{\Man}{\MarMatchOpt} \lor \PRevealed{\Conf}{\Man}}$.
  Then, since $\ManPriS > \ResCS$ for each man $\Man$, the results of the preceding two sentences imply that $\WMarM{\Conf}{\MarMatchp} > \WMarM{\Conf}{\MarMatchOpt}$, contradicting the definition of $\MarMatchOpt$.

  Case 2: $\Car{\MatchOpt} = \Car{\Match}$.
  Since both $\Match$ and $\MatchOpt$ are MWMs of $\Auc$, condition~\CondMatching\ stated in \App~\ref{app:iuap-weights} implies that $\WMPref{\MarMatch} = \WMPref{\MarMatchOpt}$.
  Since $\Car{\MatchedBidders{\Match}} = \Car{\MatchedBidders{\MatchOpt}}$, we deduce that the cardinality of $\SetBuild{\Man}{\IsMatchedS{\Man} \lor \PRevealed{\Conf}{\Man}}$ is equal to the cardinality of $\SetBuild{\Man}{\IsMatched{\Man}{\MarMatchOpt} \lor \PRevealed{\Conf}{\Man}}$.
  Then, by using the results of the preceding two sentences and the remark regarding the second term in the RHS of~(\ref{eq:WMarM}), we deduce that $\WMarM{\Conf}{\MarMatch} - \WMarM{\Conf}{\MarMatchOpt} = \MPriority{\toUAP{\Conf}{\MarMatch}} - \MPriority{\toUAP{\Conf}{\MarMatchOpt}}$.
  The latter expression is nonnegative since $\Match$ is a greedy MWM of $\Auc$, and it is nonpositive since $\WMarM{\Conf}{\MarMatchOpt} = \max_{\MarMatchp \in \MarMatchSet} \WMarM{\Conf}{\MarMatchp}$.
  Thus we deduce that $\WMarM{\Conf}{\MarMatch} = \max_{\MarMatchp \in \MarMatchSet} \WMarM{\Conf}{\MarMatchp}$ and that $\MatchOpt$ is a greedy MWM of $\Auc$.
  \QEDWrap
\end{proof}

\subsection{Revealing Preferences in the \TSMCAll}
\label{app:tsm-reveal}

Recall that Alg.~\ref{alg:to-uap} iteratively reveals the bidders, which correspond to the tiers of men, and the state of the revealed bidders are captured in a configuration.
In this appendix, we first show how to adjust the reserve utilities of the men to obtain markets identical to the \TSM\ except that only the tiers that are revealed in a configuration are acceptable (Lemma~\ref{lem:reveal-reserve}).
Then, we inductively show a bijection (Lemma~\ref{lem:man-optimal-greedy-mwm} and \ref{lem:popt}) from the set of greedy MWMs of the UAP maintained at each iteration of Alg.~\ref{alg:to-uap} to the man-optimal matchings of the corresponding market with adjusted reserve utilities.
Finally, we establish our result in Theorem~\ref{thm:equivalence}, and we prove Theorem~\ref{thm:implementation} of Sect.~\ref{sec:implementation}.

\newcommand{\ConfRes}[1]{\Id{reserve}(#1)}

For any \RelC\ configuration $\Conf$, we define $\ConfRes{\Conf}$ as the reserve utility vector $\MenResp$ of $\Men$ such that for each man $\Man$,
\begin{equation*}
  \ManResp{\Man} = \max\cbr{\ManResS, \ResCS \lexp{\ManPref{\Man}{\Woman}}} =
  \begin{cases}
    \ManPriS \lexp{\ManPref{\Man}{\Unmatched}} & \text{if $\PRevealed{\Conf}{\Man}$} \\
    \ResCS \lexp{\ManPref{\Man}{\Woman}} & \text{otherwise} ,
  \end{cases}
\end{equation*}
where $\Woman$ is some element in $\LeastTier{\Conf}{\Man}$.

\newcommand{\MarketRes}[1]{\Market(#1)}
\newcommand{\MarketConf}[1]{\Market(#1)}

For any reserve utility vector $\MenResp$ of $\Men$ such that $\MenResp \geq \MenRes$, we define $\MarketRes{\MenResp}$ as the market that is equal to $\Market$ except that the reserve utilities of the men are given by $\MenResp$.
For any \RelC\ configuration $\Conf$, we define $\MarketConf{\Conf}$ as $\MarketRes{\ConfRes{\Conf}}$.
Lemma~\ref{lem:reveal-reserve} below shows that, for any \RelC\ configuration $\Conf$, only the tiers of men that are revealed in $\Conf$ are ``acceptable'' in $\MarketConf{\Conf}$.

\begin{lemma}
\label{lem:reveal-reserve}
  Let $\Conf = (\Auc, \IAuc)$ be a \RelC\ configuration, let $\MarOutcomeS$ be an individually rational outcome for $\MarketConf{\Conf}$, and let $(\Man, \Woman)$ be a man-woman pair matched in $\MarMatch$.
  Then $\toBidder{\Man}{\Woman}$ belongs to $\Auc$.
\end{lemma}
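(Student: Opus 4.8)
The plan is to show that the adjusted men's reserves in $\MarketConf{\Conf}$ are large enough that man $i$ cannot be matched to a woman lying strictly below his least revealed tier; since a man reveals his tiers in order of preference, so that the revealed tiers form a prefix of his preference order, this is exactly the assertion $\toBidder{i}{j} \in \Auc$. Concretely, I would fix any representative $j^*$ of the least revealed tier $\LeastTier{\Conf}{i}$ (nonempty because $\Conf$ is relevant) and observe that $\toBidder{i}{j} \in \Auc$ if and only if $j$'s tier is at least as preferred as $\LeastTier{\Conf}{i}$, i.e. if and only if $a_{i,j} \geq a_{i,j^*}$. So it suffices to prove $a_{i,j} \geq a_{i,j^*}$.

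The key quantity is a reserve lower bound. Because $\ConfRes{\Conf}$ is defined as a maximum that includes the term $(1-\lambda^{-1})\lexp{a_{i,j^*}}$, individual rationality directly gives $u_i \geq \ManResp{i} \geq (1-\lambda^{-1})\lexp{a_{i,j^*}}$, with no need to split on whether $\PRevealed{\Conf}{i}$ holds. Now suppose for contradiction that $a_{i,j} < a_{i,j^*}$ and set $d = a_{i,j^*} - a_{i,j} \geq 1$. Dividing the reserve bound by $\lexp{a_{i,j}}$ gives $u_i \lexp{-a_{i,j}} \geq (1-\lambda^{-1})\lexp{d} \geq \lambda - 1 \geq (b_{i,j}+1)\N - 1$, using $d \geq 1$, $\lexp{d-1} \geq 1$, and $\lambda \geq (b_{i,j}+1)\N$. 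On the other side, feasibility of the matched pair $(i,j)$ in $\MarketConf{\Conf}$ gives $u_i \lexp{-a_{i,j}} \leq b_{i,j}\N + \pi_i - v_j$, and individual rationality of woman $j$ supplies the crucial slack $v_j \geq s_j = b_{\inil,j}\N \geq \N$, valid because the unmatched option $0$ is itself counted in $b_{\inil,j}$, so $b_{\inil,j} \geq 1$. Combining the two estimates yields $(b_{i,j}+1)\N - 1 \leq b_{i,j}\N + \pi_i - \N$, i.e. $2\N - 1 \leq \pi_i$, contradicting $\pi_i < \N$. Hence $a_{i,j} \geq a_{i,j^*}$, and therefore $\toBidder{i}{j}$ belongs to $\Auc$.

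I expect the borderline case $d = 1$ to be the main obstacle. If one instead combined the reserve lower bound only with the upper bound $u_i < \lexp{a_{i,j}+1}$ of Lemma~\ref{lem:utility}, the resulting comparison $(1-\lambda^{-1})\lexp{a_{i,j^*}} < \lexp{a_{i,j}+1}$ collapses to $\lambda < 2$ when $d \geq 2$ but becomes vacuous at $d = 1$; so the argument genuinely needs the woman-side slack $v_j \geq \N$. That slack is precisely what the bound $s_j \geq \N$ (equivalently $b_{\inil,j} \geq 1$) provides, and it is essential: at the threshold $\pi_i = \N - 1$ one can exhibit a feasible individually rational pairing of $i$ with a woman one tier below $\LeastTier{\Conf}{i}$ whenever $s_j$ is allowed to vanish, so recording $b_{\inil,j} \geq 1$ is the step that actually closes the gap.
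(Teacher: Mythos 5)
Your proposal is correct and takes essentially the same route as the paper's own proof: both reduce the claim to showing $a_{i,j} \geq a_{i,j^*}$ for $j^*$ in the least revealed tier, and both derive a contradiction with the man's individual rationality against the reserve term $\ResC{\Man} \lexp{a_{i,j^*}}$ in $\ConfRes{\Conf}$ by combining feasibility of the matched pair, the woman's individual rationality with the slack $v_j \geq \WomanRes{\Woman} = b_{\inil, j}\N \geq \N$, the bound $\lambda \geq (b_{i,j}+1)\N$, and $\pi_i < \N$. The only difference is organizational — the paper first isolates the bound $\ManCompS \leq \lambda - 2$ and then compares utilities against the reserve, whereas you run the identical inequalities as a two-sided squeeze on $u_i \lexp{-a_{i,j}}$ — and your closing observation that the woman-side slack is what closes the borderline adjacent-tier case corresponds exactly to the paper's step ``$\WomanPrefUnmatchedS \geq 1$.''
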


\begin{proof}
  We have
  \begin{align*}
    \ManCompS
    &\leq -\WomanCompS = -\WomanUtilS + \NN \WomanPrefS + \ManPriS \\
    &\leq \NN \pr{\WomanPrefS - \WomanPrefUnmatchedS } + \ManPriS \\
    &\leq \NN \WomanPrefS - \NN + \ManPriS \\
    &\leq \BB - 2\NN + \ManPriS \\
    &\leq \BB - 2 ,
  \end{align*}
  where the inequalities are justified as follows:
  the first inequality follows from the feasibility of $\MarOutcomeS$;
  the second inequality follows from the individual rationality of $\MarOutcomeS$, which implies $\WomanUtilS \geq \WomanResS = \NN \WomanPrefUnmatchedS$;
  the third inequality follows since $\WomanPrefUnmatchedS \geq 1$;
  the fourth inequality follows since $\BB \geq \max_{\Man, \Woman}(\WomanPrefS+1) \NN$;
  the fifth inequality follows since $\NN > \max_{\EachMan}\ManPriS$.
  Then, since $\ManCompS \leq \BB - 2$, we deduce that $\ManUtilS \leq (\BB - 2) \lexp{\ManPref{\Man}{\Woman}}$.

  Let $\MenResp$ denote $\ConfRes{\Conf}$ and let $\Womanp$ be an arbitrary element in $\LeastTier{\Conf}{\Man}$.
  Assume the claim of the lemma is false: thus $\ManPref{\Man}{\Woman} < \ManPref{\Man}{\Womanp}$.
  Then, since $\ManUtilS \leq (\BB - 2) \lexp{\ManPref{\Man}{\Woman}}$, we conclude that $\ManUtilS < \ResCS \lexp{\ManPref{\Man}{\Womanp}}$, and hence that $\ManUtilS < \ManResp{\Man}$, contradicting individual rationality.
  \QEDWrap
\end{proof}

The following lemma provides a lower bound on the utilities of men in the man-optimal outcomes.
It is used in the proof of Lemma~\ref{lem:popt}.

\newcommand{\MarPayoffMenOpt}{\MarPayoff{\MenUtilOpt}{\WomenUtilPes}}
\newcommand{\MarPayoffMenOptp}{\MarPayoff{\MenUtilOptp}{\WomenUtilPesp}}
\newcommand{\MarOutcomeMenOpt}[1]{\MarOutcome{#1}{\MenUtilOpt}{\WomenUtilPes}}

\begin{lemma}
  \label{lem:man-lower-bound}
  Let $\Conf$ be a \RelC\ configuration, let $\MarOutcomeMenOpt{\MarMatch}$ be a man-optimal outcome for $\MarketConf{\Conf}$, and let $(\Man, \Woman)$ be a man-woman pair matched in $\MarMatch$.
  Then $\ManUtilOpt{\Man} \geq \lexp{\ManPref{\Man}{\Woman}}$.
\end{lemma}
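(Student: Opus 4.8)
The plan is to obtain Lemma~\ref{lem:man-lower-bound} as an immediate application of Lemma~\ref{lem:slackness-general} to the market $\MarketConf{\Conf}$. By definition $\MarketConf{\Conf} = \MarketRes{\ConfRes{\Conf}}$ is precisely the associated \TSM\ $\Market$ with the reserve utility vector of the men replaced by $\MenResp = \ConfRes{\Conf}$, and $\MarOutcomeMenOpt{\MarMatch}$ is a man-optimal outcome of this market. Hence, once I check that $\MenResp$ satisfies hypotheses \CondMenResMinimum\ and \CondMenResIntegrality\ of Lemma~\ref{lem:slackness-general}, that lemma yields $\lexp{\ManPref{\Man}{\MOf{\MarMatch}{\Man}}} \leq \ManUtilOpt{\Man}$ for every man $\Man$. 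Since $(\Man, \Woman)$ is matched in $\MarMatch$, we have $\MOf{\MarMatch}{\Man} = \Woman$, and this is exactly the claimed bound $\ManUtilOpt{\Man} \geq \lexp{\ManPref{\Man}{\Woman}}$.

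Condition \CondMenResMinimum, namely $\MenResp \geq \MenRes$, is immediate, since each component $\ManResp{\Man} = \max\{\ManResS, \ResCS \lexp{\ManPref{\Man}{\Woman}}\}$ is by construction at least $\ManResS$.

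The real content is verifying condition \CondMenResIntegrality: for each man $\Man$ and integer $k$, either $\ManResp{\Man}\lexp{k}$ is an integer or $0 < \ManResp{\Man}\lexp{k} < 1$. I would split on the two cases of the definition of $\ConfRes{\Conf}$. If $\PRevealed{\Conf}{\Man}$ holds, then $\ManResp{\Man} = \ManPriS \lexp{\ManPref{\Man}{\Unmatched}}$, so $\ManResp{\Man}\lexp{k} = \ManPriS \lexp{\ManPref{\Man}{\Unmatched}+k}$; when the exponent is nonnegative this is a product of integers, hence an integer, while when it is at most $-1$ the value is positive and bounded above by $\ManPriS \BB^{-1}$, which is strictly less than $1$ because $\ManPriS < \NN \leq \BB$. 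Otherwise $\ManResp{\Man} = \ResCS \lexp{\ManPref{\Man}{\Woman}} = (\BB - 1)\lexp{\ManPref{\Man}{\Woman}-1}$, so $\ManResp{\Man}\lexp{k} = (\BB - 1)\lexp{\ManPref{\Man}{\Woman}+k-1}$; when the exponent $\ManPref{\Man}{\Woman}+k-1$ is nonnegative this is again an integer, while when it is at most $-1$ the value is positive and bounded above by $(\BB - 1)\BB^{-1} = 1 - \BB^{-1} < 1$. In either case the required dichotomy holds, and so condition \CondMenResIntegrality\ is established.

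I expect this integrality bookkeeping to be the only delicate step: one must invoke the size constraints $\ManPriS < \NN \leq \BB$ and the fact that $\BB$ is an integer to dispose of the negative-exponent subcases, and must rewrite $\ResCS \lexp{\ManPref{\Man}{\Woman}}$ as $(\BB-1)\lexp{\ManPref{\Man}{\Woman}-1}$ to see the integrality cleanly. Everything else reduces to applying Lemma~\ref{lem:slackness-general} and reading off $\MOf{\MarMatch}{\Man} = \Woman$.
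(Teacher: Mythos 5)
Your proposal is correct and takes essentially the same route as the paper: the paper's proof likewise reduces the lemma to Lemma~\ref{lem:slackness-general} applied with $\MenResp = \ConfRes{\Conf}$, noting that condition~\CondMenResMinimum\ is immediate from the definition of $\ConfRes{\Conf}$ and verifying condition~\CondMenResIntegrality\ by the same two-case split on whether $\PRevealed{\Conf}{\Man}$ holds. Your only addition is to spell out the integrality bookkeeping (the rewriting of $\ResCS \lexp{\ManPref{\Man}{\Woman}}$ as $(\BB-1)\lexp{\ManPref{\Man}{\Woman}-1}$ and the bounds via $\ManPriS < \NN \leq \BB$) that the paper dismisses as ``easy to see,'' and your exponent thresholds match the paper's exactly.
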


\begin{proof}
  Let $\MenResp$ denote $\ConfRes{\Conf}$.
  We show that $\MenResp$ satisfies the conditions required by Lemma~\ref{lem:slackness-general}, then the claim follows from that lemma.
  By definition, $\MenResp$ is at least $\MenRes$, so it satisfies condition~\CondMenResMinimum.
  For any man $\Man$, if $\PRevealed{\Conf}{\Man}$ holds, then $\ManResp{\Man}$ is equal to $\ManPriS \lexp{\ManPref{\Man}{\Unmatched}}$, and it is easy to see that $\ManResp{\Man} \lexp{k}$ is an integer for any integer $k \geq -\ManPrefUnmatchedS$ and that $0 < \ManResp{\Man} \lexp{k} < 1$ for any integer $k < -\ManPrefUnmatchedS$;
  otherwise, $\ManResp{\Man}$ is equal to $\ResCS \lexp{\ManPref{\Man}{\Woman}}$ where $\Woman$ is some element in $\LeastTier{\Conf}{\Man}$, and it is easy to see that $\ManResp{\Man} \lexp{k}$ is an integer for any integer $k > -\ManPref{\Man}{\Woman}$ and that $0 < \ManResp{\Man} \lexp{k} < 1$ for any integer $k \leq -\ManPref{\Man}{\Woman}$.
  Thus, $\MenResp$ satisfies condition~\CondMenResIntegrality\ as well.
  \QEDWrap
\end{proof}

For a matching $\MarMatch$ satisfying $\PLeast{\Conf}{\MarMatch}$, the following lemma gives a necessary and sufficient condition for $\MarMatch$ to be a stable matching of $\MarketConf{\Conf}$.
The proof of the lemma is quite involved and is deferred to App.~\ref{sec:equivalence-proof}.

\begin{lemma}
  \label{lem:max-weight-stable}
  Let $\Conf$ be a \RelC\ configuration and let $\MarMatchSet$ be the set of all matchings $\MarMatch$ of $\MarketConf{\Conf}$ such that $\PLeast{\Conf}{\MarMatch}$ holds.
  Assume that there exists at least one stable outcome, denoted $\MarOutcomeS$, for $\MarketConf{\Conf}$ such that $\MarMatch$ belongs to $\MarMatchSet$.
  Then, a matching $\MarMatchOpt$ that belongs to $\MarMatchSet$ is compatible with the stable payoff $\MarPayoffS$ if and only if $\WMarM{\Conf}{\MarMatchOpt} = \max_{\MarMatchp \in \MarMatchSet} \WMarM{\Conf}{\MarMatchp}$.
\end{lemma}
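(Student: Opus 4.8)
The plan is to reduce the statement to the complementary-slackness characterization of Lemma~\ref{lem:duality} and then to compare the two objectives by a path-and-cycle argument modeled on the proof of Lemma~\ref{lem:uap-weight-marm}. For any matching $\MarMatchp$ of $\MarketConf{\Conf}$, let
\[
  D(\MarMatchp) = \sum_{\EachMan} \pr{\ManUtil{\Man} \lexp{-\ManPref{\Man}{\MOf{\MarMatchp}{\Man}}} - \ManPriS} - \sum_{\EachWoman} \pr{\WomanPref{\MOf{\MarMatchp}{\Woman}}{\Woman} \NN - \WomanUtil{\Woman}} .
\]
By Lemma~\ref{lem:duality}, $D(\MarMatchp) \geq 0$ for every $\MarMatchp$, with equality if and only if $\MarMatchp$ is compatible with $\MarPayoffS$; since $\MarOutcomeS$ is stable, $D(\MarMatch) = 0$. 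Using $\sum_{\EachWoman} \WomanPref{\MOf{\MarMatchp}{\Woman}}{\Woman} \NN = \NN \cdot \WMPref{\MarMatchp}$, the quantity $V(\MarMatchp) := \NN \cdot \WMPref{\MarMatchp} - \sum_{\EachMan} \ManUtil{\Man} \lexp{-\ManPref{\Man}{\MOf{\MarMatchp}{\Man}}}$ satisfies $D(\MarMatchp) = \sum_{\EachWoman} \WomanUtil{\Woman} - \sum_{\EachMan} \ManPriS - V(\MarMatchp)$. Hence $\MarMatchp$ is compatible with $\MarPayoffS$ exactly when $V(\MarMatchp)$ attains its maximum over all matchings, a maximum witnessed by $\MarMatch$. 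The lemma thus amounts to showing that, over $\MarMatchSet$, the objective $V$ and the objective $\WMarM{\Conf}{\cdot}$ have the same maximizers.

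Both $V$ and $\WMarM{\Conf}{\cdot}$ carry the leading term $\NN \cdot \WMPref{\cdot}$, so comparing them reduces to comparing, man by man, the compensation $\ManUtil{\Man} \lexp{-\ManPref{\Man}{\MOf{\cdot}{\Man}}}$ subtracted in $V$ against the priority-type term added in $\WMarM{\Conf}{\cdot}$ (see~\eqref{eq:WMarM}). The crucial observation is that for a man with $\PRevealed{\Conf}{\Man}$, the least revealed tier contains $\Unmatched$, so $\ManPref{\Man}{\Woman} = \ManPrefUnmatchedS$ for $\Woman \in \LeastTier{\Conf}{\Man}$; consequently both his compensation exponent and his priority contribution are the same whether he is matched or unmatched in a matching of $\MarMatchSet$. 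Thus a fully revealed man contributes identically to $V$ and to $\WMarM{\Conf}{\cdot}$ regardless of his status, so that any reconfiguration that only alters the status of fully revealed men (and rematches already-matched agents within their least tiers) moves $V$ and $\WMarM{\Conf}{\cdot}$ by the identical amount $\NN$ times the change in $\WMPref{\cdot}$.

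To compare two matchings of $\MarMatchSet$, I would decompose their symmetric difference into vertex-disjoint paths and cycles and process the components one at a time, remaining inside $\MarMatchSet$ via $\PLeast{\Conf}{\cdot}$ and Lemma~\ref{lem:inverse-of-good-matchings}, exactly as in the proof of Lemma~\ref{lem:uap-weight-marm}. Cycles and paths whose endpoints are unmatched women only rematch agents inside their least tiers, and hence, by the previous paragraph, move $V$ and $\WMarM{\Conf}{\cdot}$ together. For the inclusion ``compatible $\Rightarrow$ weight-maximal'' I would take a compatible $\MarMatchOpt$ and a maximizer $\MarMatchSub{0}$ of $\WMarM{\Conf}{\cdot}$ (one exists and, by Lemma~\ref{lem:weight-greedy-mwm}, maps under $\toUAPFunc{\Conf}$ to a greedy MWM), and process $\MarMatchOpt \oplus \MarMatchSub{0}$ to force $\WMarM{\Conf}{\MarMatchOpt} = \WMarM{\Conf}{\MarMatchSub{0}}$; for the reverse inclusion I would take a maximizer $\MarMatchOpt$ and process $\MarMatchOpt \oplus \MarMatch$ to force $V(\MarMatchOpt) = V(\MarMatch)$, i.e.\ $D(\MarMatchOpt) = 0$.

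The hard part will be the remaining components: paths with a man endpoint at which a partially revealed man (one with $\lnot \PRevealed{\Conf}{\Man}$) switches between matched and unmatched. Because his least revealed tier exponent strictly exceeds $\ManPrefUnmatchedS$, such a switch rescales his compensation $\ManUtil{\Man} \lexp{-\ManPref{\Man}{\cdot}}$ by a factor of at least $\BB$, a swing of order $\BB$ that dominates every $\NN$-scale change, while his priority term jumps between $\ManPriS$ and $\ResCS$. The heart of the proof is to exclude these reconfigurations on the optimal set: I would argue, using the reserve structure of $\MarketConf{\Conf}$, a non-integrality argument in the style of Lemma~\ref{lem:slackness-general}, and the distinctness of the priorities $\ManPri{\cdot}$, that a partially revealed man has a determined matched/unmatched status across the optimal matchings. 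Intuitively, matching such a man to a least-tier woman $\Woman$ pins $\WomanUtilS$ to the value $\WomanPref{\Man}{\Woman} \NN + \ManPriS - 1 + \BB^{-1}$, and the distinct priorities make these pinned values mutually inconsistent across men and with the women's reserve $\WomanResS = \WomanPrefUnmatchedS \NN$, so no two optimal matchings differ by such a component. Once the bad components are ruled out, the fully revealed bookkeeping of the second paragraph, together with the magnitude hierarchy $1 \leq \ManPriS < \NN \leq \pr{\WomanPref{\Man}{\Woman} + 1} \NN \leq \BB$ (which prevents the priority and $\ResCS$ corrections from ever overturning a strict change in $\WMPref{\cdot}$), collapses the maximizer sets of $V$ and $\WMarM{\Conf}{\cdot}$ and completes both implications.
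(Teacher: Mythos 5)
Your reduction to Lemma~\ref{lem:duality} fails in $\MarketConf{\Conf}$, and with it the anchor $D(\MarMatch) = 0$. Lemma~\ref{lem:duality} is proved for a tiered-slope market in which feasibility pins each unmatched man's utility to $\ManPriS \lexp{\ManPrefUnmatchedS}$; in $\MarketConf{\Conf}$ the men's reserves are raised to $\ConfRes{\Conf}$, and feasibility pins an unmatched man $\Man$ to $\ManResp{\Man}$ instead. For a partially revealed man, i.e.\ one with $\lnot \PRevealed{\Conf}{\Man}$, we have $\ManResp{\Man} = \ResCS \lexp{\ManPref{\Man}{\Woman}}$ with $\ManPref{\Man}{\Woman} > \ManPrefUnmatchedS$, hence $\ManResp{\Man} \geq (\BB - 1) \lexp{\ManPrefUnmatchedS} > \ManPriS \lexp{\ManPrefUnmatchedS}$ (since $\WomanPrefUnmatchedS \geq 1$ gives $\BB \geq 2\NN > \ManPriS + 1$), so his term $\ManUtil{\Man} \lexp{-\ManPrefUnmatchedS} - \ManPriS$ in your $D$ is strictly positive whenever he is unmatched. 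Thus for any stable outcome of $\MarketConf{\Conf}$ that leaves a partially revealed man unmatched --- the generic situation at intermediate configurations, since precisely such men become ready in Alg.~\ref{alg:to-uap} --- you get $D(\MarMatch) > 0$, and $D(\MarMatchp) = 0$ is not equivalent to compatibility. This is repairable, but the repair is not cosmetic: you must redo the duality with the adjusted reserves, and the correction terms are exactly the $\MenPriUnmatchedS{\cdot}$ bookkeeping (with $\ResCS$ in place of $\ManPriS$ for partially revealed unmatched men) that the paper develops in Lemmas~\ref{lem:women-payment}, \ref{lem:stable-weight} and~\ref{lem:u-equal}.

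The decisive gap, however, is your treatment of the man-ended paths, which you correctly identify as the hard part but resolve with an incorrect mechanism. Your exclusion argument rests on the claim that matching a partially revealed man $\Man$ to a least-tier woman $\Woman$ ``pins'' $\WomanUtilS$ to $\WomanPrefS \NN + \ManPriS - \ResCS$. Stability only yields $\ManUtil{\Man} \lexp{-\ManPref{\Man}{\Woman}} = \WomanPrefS \NN + \ManPriS - \WomanUtilS$ together with $\ManUtil{\Man} \geq \ManResp{\Man}$: a \emph{matched} man's utility is not tied to his reserve (only an unmatched man's is), so you obtain only the one-sided bound $\WomanUtilS \leq \WomanPrefS \NN + \ManPriS - \ResCS$, and the ``mutually inconsistent pinned values'' argument collapses. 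Nor would it suffice to know that the matched/unmatched status of partially revealed men is determined across the weight-maximizers (which does follow from Lemma~\ref{lem:weight-greedy-mwm} together with Lemma~\ref{lem:uap-distribution}): the comparison you must carry out is between a maximizer $\MarMatchOpt$ and the stable matching $\MarMatch$, which at that point is \emph{not} known to be a maximizer --- showing that $\MarMatch$ matches the same set of partially revealed men is essentially the lemma itself. The paper closes exactly this hole in Lemma~\ref{lem:max-weight-compatible}: it chains the stability inequalities along each alternating path and plays them off against augmentations of $\MarMatchOpt$ that would increase $\WMarM{\Conf}{\cdot}$, splitting on $\PRevealed{\Conf}{\cdot}$ at the endpoints (its Cases 1.1--1.4 and 2.1--2.2) and using only $\NN > \ManPriS > \ResCS$, integrality of the $\WomenPref$ values, and distinctness of the priorities --- no pinning or non-integrality of utilities is needed. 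Your handling of cycles, woman-ended paths, and fully revealed men is sound (it matches Lemmas~\ref{lem:u-equal} and~\ref{lem:uap-weight-marm}), and your endgame --- pairwise complementary slackness, forcing $f_{\Man,\Woman} + g_{\Man,\Woman} = 0$ on every matched pair since each term is nonnegative by stability --- is the paper's; but without a correct argument at the man endpoints the proof does not go through.
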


\newcommand{\POpt}[1]{P_{\Id{opt}}(#1)}

For any \RelC\ configuration $\Conf$, we define the predicate $\POpt{\Conf}$ to hold if for each man-optimal matching $\MarMatch$ of $\MarketConf{\Conf}$, $\PLeast{\Conf}{\MarMatch}$ holds.
Remark: It is easy to see that $\PLeast{\ConfFirst}{\MarMatch}$ holds for any matching $\MarMatch$ of $\MarketConf{\ConfFirst}$, and thus $\POpt{\ConfFirst}$ holds.
For a \RelC\ configuration $\Conf$ satisfying $\POpt{\Conf}$, the following lemma characterizes the man-optimal matchings of $\MarketConf{\Conf}$.

\begin{lemma}
  \label{lem:man-optimal-max-weight}
  Let $\Conf$ be a \RelC\ configuration such that $\POpt{\Conf}$ holds.
  Let $\MarMatchSet$ be the set of all matchings $\MarMatch$ of $\MarketConf{\Conf}$ such that $\PLeast{\Conf}{\MarMatch}$ holds.
  Then, a matching $\MarMatch$ is a man-optimal matching of $\MarketConf{\Conf}$ if and only if $\MarMatch$ belongs to $\MarMatchSet$ and $\WMarM{\Conf}{\MarMatch} = \max_{\MarMatchp \in \MarMatchSet} \WMarM{\Conf}{\MarMatchp}$.
\end{lemma}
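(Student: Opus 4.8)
The plan is to reduce this man-optimality characterization to the stability characterization already supplied by Lemma~\ref{lem:max-weight-stable}, exploiting the fact that $\MarketConf{\Conf}$ has a \emph{unique} man-optimal payoff. First I would recall that a man-optimal outcome of $\MarketConf{\Conf}$ exists (by the existence result of Demange and Gale invoked in Sect.~\ref{sec:market}) and that its payoff, which I write as $\MarPayoffMenOpt = \MarPayoff{\MenUtilOpt}{\WomenUtilPes}$, is the unique man-optimal payoff. The pivotal step is to identify the man-optimal matchings of $\MarketConf{\Conf}$ with the matchings compatible with $\MarPayoffMenOpt$: if $\MarMatch$ is man-optimal, then there is a man-optimal outcome with matching $\MarMatch$, and by uniqueness of the man-optimal payoff this outcome must be $\MarOutcomeMenOpt{\MarMatch}$, which is therefore feasible; conversely, if $\MarOutcomeMenOpt{\MarMatch}$ is feasible, then since $\MarPayoffMenOpt$ is individually rational and satisfies the stability inequalities (both being properties of the payoff alone), the outcome $\MarOutcomeMenOpt{\MarMatch}$ is stable, and because its payoff equals the man-optimal payoff it is in fact man-optimal, making $\MarMatch$ a man-optimal matching.

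Having made this identification, I would invoke Lemma~\ref{lem:max-weight-stable} with $\MarPayoffMenOpt$ in the role of its stable payoff $\MarPayoffS$. To discharge the hypothesis of that lemma I need a stable outcome of $\MarketConf{\Conf}$ whose matching lies in $\MarMatchSet$. I would take the man-optimal outcome guaranteed above; its matching, call it $\MarMatchSub{0}$, is man-optimal, and since $\POpt{\Conf}$ is assumed, $\PLeast{\Conf}{\MarMatchSub{0}}$ holds, so $\MarMatchSub{0} \in \MarMatchSet$. This both supplies the required stable outcome $\MarOutcomeMenOpt{\MarMatchSub{0}}$ and confirms that $\MarMatchSet$ meets the man-optimal matchings. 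Lemma~\ref{lem:max-weight-stable} then asserts that a matching $\MarMatchOpt \in \MarMatchSet$ is compatible with $\MarPayoffMenOpt$ if and only if $\WMarM{\Conf}{\MarMatchOpt} = \max_{\MarMatchp \in \MarMatchSet} \WMarM{\Conf}{\MarMatchp}$.

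The two directions then fall out directly. For the forward direction, if $\MarMatch$ is man-optimal, then $\POpt{\Conf}$ places $\MarMatch$ in $\MarMatchSet$ and the identification above makes it compatible with $\MarPayoffMenOpt$, so Lemma~\ref{lem:max-weight-stable} forces $\WMarM{\Conf}{\MarMatch} = \max_{\MarMatchp \in \MarMatchSet} \WMarM{\Conf}{\MarMatchp}$. For the reverse direction, if $\MarMatch \in \MarMatchSet$ attains this maximum weight, then Lemma~\ref{lem:max-weight-stable} makes $\MarMatch$ compatible with $\MarPayoffMenOpt$, and the identification turns compatibility into man-optimality.

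I expect the only genuine obstacle to lie in the identification step, specifically in verifying that feasibility of $\MarOutcomeMenOpt{\MarMatch}$ upgrades to man-optimality. This rests on two facts drawn from the Demange--Gale framework: stability of an outcome is determined by its payoff (through individual rationality and the no-blocking inequalities) together with feasibility of the matching, and the man-optimal payoff is unique. Both are available from the earlier development, so no new market-theoretic argument is required; everything else is a mechanical appeal to Lemma~\ref{lem:max-weight-stable} together with the hypothesis $\POpt{\Conf}$.
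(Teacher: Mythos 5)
Your proposal is correct and follows essentially the same route as the paper's proof: both discharge the hypothesis of Lemma~\ref{lem:max-weight-stable} by taking a man-optimal outcome (which exists by \cite[Property 2]{DG85}) whose matching lies in $\MarMatchSet$ thanks to $\POpt{\Conf}$, and then read off the equivalence from that lemma applied to the unique man-optimal payoff. The paper's version is merely more compressed, leaving implicit the identification you spell out --- that man-optimal matchings are exactly the matchings compatible with the man-optimal payoff, via payoff uniqueness and the fact that stability is a payoff-level property once feasibility holds.
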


\begin{proof}
  Let $\MarMatchSetOpt$ denote the set $\SetBuild{\MarMatch}{\MarMatch \in \MarMatchSet \land \WMarM{\Conf}{\MarMatch} = \max_{\MarMatchp \in \MarMatchSet} \WMarM{\Conf}{\MarMatchp}}$.
  Since $\POpt{\Conf}$ implies that all man-optimal matchings of $\MarketConf{\Conf}$ are included in $\MarMatchSet$, and since there is at least one man-optimal matching~\cite[Property 2]{DG85}, we deduce that $\MarMatchSet$ contains a man-optimal, and hence stable, matching.
  Thus, Lemma~\ref{lem:max-weight-stable} implies that the set of stable matchings of $\MarketConf{\Conf}$ is equal to $\MarMatchSetOpt$, and that each matching in $\MarMatchSetOpt$ is compatible with the man-optimal payoff in $\MarketConf{\Conf}$.
  \QEDWrap
\end{proof}

\begin{lemma}
  \label{lem:man-optimal-greedy-mwm}
  Let $\Conf = (\Auc, \IAuc)$ be a \RelC\ configuration such that $\POpt{\Conf}$ holds.
  Then, $\toUAPFunc{\Conf}$ is a bijection from the set of man-optimal matchings of $\MarketConf{\Conf}$ to the set of greedy MWMs of $\Auc$.
\end{lemma}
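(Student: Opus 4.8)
The plan is to obtain this bijection by composing the two preceding lemmas, which between them have already split the work into a ``marriage-market'' characterization of man-optimality and a ``weight-to-greedy-MWM'' correspondence. Let $\MarMatchSet$ denote the set of all matchings $\MarMatch$ of $\MarketConf{\Conf}$ (equivalently, of $\Market$) such that $\PLeast{\Conf}{\MarMatch}$ holds, and let $\MarMatchSetOpt$ denote the subset of $\MarMatchSet$ consisting of those matchings $\MarMatch$ achieving $\WMarM{\Conf}{\MarMatch} = \max_{\MarMatchp \in \MarMatchSet} \WMarM{\Conf}{\MarMatchp}$.

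First I would invoke Lemma~\ref{lem:man-optimal-max-weight}, whose hypotheses --- that $\Conf$ is \RelC\ and that $\POpt{\Conf}$ holds --- are exactly what we are assuming. This lemma identifies the set of man-optimal matchings of $\MarketConf{\Conf}$ with $\MarMatchSetOpt$. Next I would invoke Lemma~\ref{lem:weight-greedy-mwm}, which asserts that $\toUAPFunc{\Conf}$ is a bijection from $\MarMatchSetOpt$ onto the set of greedy MWMs of $\Auc$. Since the domain $\MarMatchSetOpt$ supplied by the first lemma is precisely the domain on which the second lemma establishes a bijection, chaining the two conclusions immediately yields that $\toUAPFunc{\Conf}$ is a bijection from the man-optimal matchings of $\MarketConf{\Conf}$ to the greedy MWMs of $\Auc$, as desired.

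The only point that warrants a line of justification is that the set $\MarMatchSet$ appearing in Lemma~\ref{lem:weight-greedy-mwm} (matchings of $\Market$ satisfying $\PLeast{\Conf}{\MarMatch}$) coincides with the set $\MarMatchSet$ appearing in Lemma~\ref{lem:man-optimal-max-weight} (matchings of $\MarketConf{\Conf}$ satisfying $\PLeast{\Conf}{\MarMatch}$), so that $\MarMatchSetOpt$ is unambiguous across the two applications. This holds because $\MarketConf{\Conf}$ differs from $\Market$ only in the reserve utilities of the men, so the two markets share exactly the same underlying set of matchings, and the predicate $\PLeast{\Conf}{\MarMatch}$ and the weight $\WMarM{\Conf}{\MarMatch}$ depend only on the combinatorial matching together with the configuration $\Conf$, not on the reserve utilities. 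Consequently no genuine obstacle arises here; the combinatorial difficulty --- the matroid exchange argument on one side and the path/cycle weight-comparison argument on the other --- has already been discharged in the proofs of Lemmas~\ref{lem:man-optimal-max-weight} and~\ref{lem:weight-greedy-mwm}, and the present lemma is essentially their formal composition.
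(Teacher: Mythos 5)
Your proposal is correct and matches the paper's own proof essentially verbatim: the paper likewise applies Lemma~\ref{lem:man-optimal-max-weight} to identify the man-optimal matchings of $\MarketConf{\Conf}$ with the maximum-weight subset $\MarMatchSetOpt$ of $\MarMatchSet$, and then concludes via the bijection of Lemma~\ref{lem:weight-greedy-mwm}. Your added observation that $\MarMatchSet$ is unambiguous across the two lemmas (since $\Market$ and $\MarketConf{\Conf}$ differ only in the men's reserve utilities and hence share the same matchings) is a correct point the paper leaves implicit.
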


\begin{proof}
  Let $\MarMatchSet$ be the set of all matchings $\MarketConf{\Conf}$ of $\Market$ such that $\PLeast{\Conf}{\MarMatch}$ holds.
  Let $\MarMatchSetOpt$ denote the set $\SetBuild{\MarMatch}{\MarMatch \in \MarMatchSet \land \WMarM{\Conf}{\MarMatch} = \max_{\MarMatchp \in \MarMatchSet} \WMarM{\Conf}{\MarMatchp}}$.
  Lemma~\ref{lem:man-optimal-max-weight} implies that the set of man-optimal matchings of $\MarketConf{\Conf}$ is equal to $\MarMatchSetOpt$.
  Then the claim follows from Lemma~\ref{lem:weight-greedy-mwm}.
  \QEDWrap
\end{proof}

Having established the correspondence between the man-optimal matchings and the greedy MWMs given a \RelC\ configuration $\Conf$ such that $\POpt{\Conf}$ holds, we now show inductively in Lemma~\ref{lem:popt} that $\POpt{\Conf}$ holds for all \RelC\ configurations $\Conf$.
We start with two lemmas that are useful in proving Lemma~\ref{lem:popt};
the second one (Lemma~\ref{lem:man-optimal-monotone-reserve}) is also used in the proof of Theorem~\ref{thm:equivalence}.

\begin{lemma}
  \label{lem:man-optimal-ready}
  Let $\Conf$ be a \RelC\ configuration such that $\POpt{\Conf}$ holds, let $\MenResp$ denote $\ConfRes{\Conf}$, and let $\MarPayoffMenOpt$ denote the man-optimal payoff in $\MarketConf{\Conf}$.
  Then, for each bidder in $\CReady{\Conf}$, we have $\ManUtilOpt{\Man} = \ManResp{\Man}$, where $\Man$ denotes the man associated with that bidder.
\end{lemma}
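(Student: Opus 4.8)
The plan is to exploit the bijection $\toUAPFunc{\Conf}$ of Lemma~\ref{lem:man-optimal-greedy-mwm} between the man-optimal matchings of $\MarketConf{\Conf}$ and the greedy MWMs of $\Auc$, and then to read off the unmatched status of the relevant man directly from the definition of $\toUAPFunc{\Conf}$. First I would fix a bidder $\Bidder$ in $\CReady{\Conf}$, let $\Man$ be the associated man so that $\MBidderOfB{\Conf}{\Bidder} = \toMBidder{\Man}$, and note $\BPri{\Bidder} = \ManPriS$. By the definition of $\CReady{\Conf}$ we have $\Greedy{\Auc}{\ManPriS} = 0$. Since the multibidder priorities are distinct and every bidder of $\toMBidder{\Man}$ carries the priority $\ManPriS$, this says precisely that no bidder associated with $\Man$ is matched in any greedy MWM of $\Auc$.

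Next I would deduce $\lnot \PRevealed{\Conf}{\Man}$: this is just the contrapositive of Lemma~\ref{lem:all-revealed-matched}, since if $\PRevealed{\Conf}{\Man}$ held then every greedy MWM of $\Auc$ would match a bidder of $\Man$, contradicting $\Greedy{\Auc}{\ManPriS} = 0$. Then I would invoke the hypothesis $\POpt{\Conf}$ together with the existence of a man-optimal outcome~\cite[Property 2]{DG85} to pick a man-optimal matching $\MarMatch$ of $\MarketConf{\Conf}$ compatible with the man-optimal payoff $\MarPayoffMenOpt$. By Lemma~\ref{lem:man-optimal-greedy-mwm}, $\toUAP{\Conf}{\MarMatch}$ is a greedy MWM of $\Auc$, so no bidder of $\Man$ is matched in it.

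To finish, I would unwind the definition of $\toUAP{\Conf}{\MarMatch}$: a bidder of $\Man$ appears in $\toUAP{\Conf}{\MarMatch}$ exactly when $\IsMatchedS{\Man}$ holds, or when $\IsUnmatchedS{\Man}$ holds together with $\PRevealed{\Conf}{\Man}$. Since $\lnot \PRevealed{\Conf}{\Man}$, the absence of any matched bidder of $\Man$ forces $\IsUnmatchedS{\Man}$. Feasibility of the man-optimal outcome $\MarOutcomeMenOpt{\MarMatch}$ for $\MarketConf{\Conf}$ then yields $\ManUtilOpt{\Man} = \ManResp{\Man}$, which is the claim.

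The argument is short, so there is no deep obstacle; the hard part will mostly be the bookkeeping, namely converting $\Greedy{\Auc}{\ManPriS} = 0$ into ``no matched bidder of $\Man$'' via the distinctness of multibidder priorities, and reading the two-case definition of $\toUAPFunc{\Conf}$ correctly so that the combination of ``no matched bidder of $\Man$'' and $\lnot \PRevealed{\Conf}{\Man}$ collapses to $\IsUnmatchedS{\Man}$. Once Lemmas~\ref{lem:all-revealed-matched} and~\ref{lem:man-optimal-greedy-mwm} are available, everything else is immediate.
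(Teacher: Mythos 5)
Your proposal is correct and follows essentially the same route as the paper's proof: converting $\Greedy{\Auc}{\ManPriS} = 0$ into ``no bidder of $\Man$ is matched in any greedy MWM,'' transferring this through the bijection of Lemma~\ref{lem:man-optimal-greedy-mwm} and the definition of $\toUAPFunc{\Conf}$ to conclude $\IsUnmatchedS{\Man}$ in a man-optimal matching, and then reading off $\ManUtilOpt{\Man} = \ManResp{\Man}$ from feasibility/stability. Your explicit detour through Lemma~\ref{lem:all-revealed-matched} to get $\lnot \PRevealed{\Conf}{\Man}$ is harmless bookkeeping that the paper leaves implicit (indeed, $\mu(\Man) = \Unmatched$ already follows because $\toUAP{\Conf}{\MarMatch}$ matches a bidder of $\Man$ whenever $\IsMatchedS{\Man}$), so the two arguments coincide in substance.
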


\begin{proof}
  Let $\Conf$ be $(\Auc, \IAuc)$.
  If a bidder belongs to $\CReady{\Conf}$, then it is not matched in any greedy MWM of $\Auc$, and the definition of $\toUAPFunc{\Conf}$ and Lemma~\ref{lem:man-optimal-greedy-mwm} imply that it is not matched in any man-optimal matching of $\MarketConf{\Conf}$.
  Then the claim of the lemma follows from the stability of $\MarPayoffMenOpt$.
  \QEDWrap
\end{proof}

\begin{lemma}
  \label{lem:man-optimal-monotone-reserve}
  Let $\MenResp$ and $\MenRespp$ be two reserve utility vectors of the men such that $\MenResp \geq \MenRespp \geq \MenRes$.
  Let $\MarPayoffSp$ and $\MarPayoffSpp$ be the man-optimal payoffs of $\MarketRes{\MenResp}$ and $\MarketRes{\MenRespp}$, respectively.
  Then $\MenUtilp \geq \MenUtilpp$.
\end{lemma}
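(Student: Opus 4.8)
The plan is to argue by contradiction, constructing from the two given man-optimal outcomes a single ``glued'' stable outcome of $\mathcal{M}(r')$ that makes some men strictly better off than under $(u',v')$, contradicting man-optimality in $\mathcal{M}(r')$. Fix man-optimal outcomes $(\mu',u',v')$ and $(\mu'',u'',v'')$ of $\mathcal{M}(r')$ and $\mathcal{M}(r'')$ (these exist by \cite[Property~2]{DG85} and realize the stated payoffs), and recall that in a tiered-slope market stability reads $u_i \lambda^{-a_{i,j}} + v_j \geq b_{i,j} N + \pi_i$ for every pair $(i,j)$, with equality whenever $\mu(i)=j$. Suppose, for contradiction, that the set $I_0 = \{\, i \in I : u''_i > u'_i \,\}$ is nonempty.

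First I would pin down the combinatorial structure attached to $I_0$. Since $u''_i > u'_i \geq r'_i \geq r''_i$ for $i \in I_0$, feasibility forces each such $i$ to be matched in $\mu''$, so $J_0 := \mu''(I_0)$ satisfies $|J_0| = |I_0|$. For $j \in J_0$ with $i = \mu''(j) \in I_0$, combining the matched-pair equality in $\mathcal{M}(r'')$ with the stability inequality of $(u',v')$ at $(i,j)$ gives $(u'_i - u''_i)\lambda^{-a_{i,j}} + (v'_j - v''_j) \geq 0$; as the first term is negative, $v'_j > v''_j \geq s_j$, so $j$ cannot be unmatched in $\mu'$. Writing $i' = \mu'(j)$ and repeating the same manipulation with the roles of the two markets swapped yields $u'_{i'} < u''_{i'}$, i.e.\ $i' \in I_0$. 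Hence $\mu'$ maps $J_0$ injectively into $I_0$; with $|J_0| = |I_0|$ this makes $\mu'|_{J_0}$ a bijection onto $I_0$, so both $\mu'$ and $\mu''$ restrict to bijections between $I_0$ and $J_0$, and $\mu'$ sends $I \setminus I_0$ into $J \setminus J_0$.

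Next I would define the glued outcome $(\hat\mu, \hat u, \hat v)$ on $\mathcal{M}(r')$ by using $\mu'',u'',v''$ on $I_0 \cup J_0$ and $\mu',u',v'$ everywhere else; the bijection structure guarantees $\hat\mu$ is a well-defined matching. Feasibility and individual rationality for $\mathcal{M}(r')$ are routine: matched pairs inherit their equalities, unmatched agents sit at their reserves, and $\hat u_i \geq r'_i$ holds because $I_0$-men have $u''_i > u'_i \geq r'_i$ while the others keep $u'_i$. The substantive step is verifying stability, i.e.\ $\hat u_i \lambda^{-a_{i,j}} + \hat v_j \geq b_{i,j}N + \pi_i$ for every pair $(i,j)$, which splits into four cases according to membership of $i$ in $I_0$ and $j$ in $J_0$: pairs inside $I_0\times J_0$ use stability of $(u'',v'')$; pairs inside $(I\setminus I_0)\times(J\setminus J_0)$ use stability of $(u',v')$; and the two mixed cases follow because raising a man's utility ($u''_i > u'_i$ for $i \in I_0$) or keeping a larger one ($u'_i \geq u''_i$ for $i \notin I_0$) only strengthens the already-valid inequality from the appropriate outcome. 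Thus $(\hat\mu,\hat u,\hat v)$ is stable in $\mathcal{M}(r')$, and man-optimality of $(u',v')$ forces $u'_i \geq \hat u_i$ for all $i$; but $\hat u_i = u''_i > u'_i$ for $i \in I_0$, a contradiction. Therefore $I_0 = \emptyset$ and $u' \geq u''$.

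I expect the main obstacle to be the two structural facts that make the gluing legitimate: showing that both optimal matchings restrict to bijections between $I_0$ and $J_0$ (so $\hat\mu$ is a matching), and then the four-case stability verification for the glued payoff. The sign bookkeeping in the mixed cases --- that switching to $u''$ on $I_0$ and to $v''$ on $J_0$ never violates any stability constraint --- is where the hypotheses $r' \geq r''$ and the definition of $I_0$ are genuinely used, so I would take care to state each case's supporting inequality explicitly rather than leaving it implicit.
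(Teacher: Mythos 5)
Your proof is correct, but it takes a genuinely different route from the paper: the paper's entire proof of this lemma is a one-line appeal to Demange and Gale's comparative-statics result (\cite[Property~3]{DG85}), which states exactly this monotonicity of the man-optimal payoff in the reserve utilities, whereas you reprove that property from scratch via a lattice-style gluing argument. Your decomposition is sound at every step I checked: for $i \in I_0$ the chain $u''_i > u'_i \geq r'_i \geq r''_i$ does force $i$ to be matched in $\mu''$ (this is in fact the \emph{only} place the hypothesis $r' \geq r''$ enters --- the four-case stability check for the glued outcome uses just the definition of $I_0$ and the monotonicity of $f_{i,j}$ in $u_i$, so your closing remark slightly mis-locates where that hypothesis is genuinely used); the matched-pair equality in one market against the stability inequality in the other correctly propagates membership around $I_0$ and $J_0$ and yields the two bijections; and since the stability inequalities $f_{i,j}(u_i) + g_{i,j}(v_j) \geq 0$ do not mention reserves (reserves appear only in feasibility and individual rationality, which you verify separately), the hybrid outcome is stable in $\calM(\MenResp)$ and contradicts man-optimality of $\MarPayoffSp$. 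What the paper's approach buys is brevity and consistency, since \cite{DG85} is already the source for existence of man-optimal outcomes (Property~2) and the strategyproofness theorem used elsewhere; what yours buys is self-containment, and it is worth noting that your argument nowhere uses the tiered-slope structure --- only that $f_{i,j}$ is increasing in $u_i$ --- so it is essentially a proof of the Demange--Gale property in its original generality, specialized notationally to this market.
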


\begin{proof}
  The claim of the lemma follows directly from \cite[Property~3]{DG85}.
  \QEDWrap
\end{proof}

\begin{lemma}
  \label{lem:popt}
  Let $\Conf$ be a \RelC\ configuration such that for each \RelC\ configuration $\Confp$ that precedes $\Conf$, $\POpt{\Confp}$ holds.
  Then $\POpt{\Conf}$ holds.
\end{lemma}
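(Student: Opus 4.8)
The plan is to induct along a \CanE\ execution of \Alg~\ref{alg:to-uap}, reducing each \RelC\ configuration to its immediate predecessor. If $\Conf = \ConfFirst$, then no \RelC\ configuration precedes it and $\POpt{\ConfFirst}$ holds by the remark preceding Lemma~\ref{lem:man-optimal-max-weight}. Otherwise, let $\Confp = (\Aucp, \IAuc)$ be the configuration stored one iteration before $\Conf$ in a \CanE\ execution that reaches $\Conf$; then $\Confp$ is \RelC\ and precedes $\Conf$, so $\POpt{\Confp}$ holds by hypothesis. The step from $\Confp$ to $\Conf$ reveals a single bidder $\Bidder \in \CReady{\Confp}$; let $\ManSub{0}$ be its man, so that $\Greedy{\Aucp}{\ManPri{\ManSub{0}}} = 0$ and the newly revealed tier is $\LeastTier{\Conf}{\ManSub{0}}$. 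Writing $p$ for the common value $\ManPref{\ManSub{0}}{\Womanp}$ of the women $\Womanp \in \LeastTier{\Confp}{\ManSub{0}}$, the reserve vectors $\ConfRes{\Confp}$ and $\ConfRes{\Conf}$ agree at every man except $\ManSub{0}$, where in $\Confp$ the reserve equals $\ResC{\ManSub{0}} \lexp{p} = (1 - \BB^{-1}) \lexp{p}$; using $\ManPri{\ManSub{0}} \leq \NN - 1 \leq \BB - 1$ to treat the case $\PRevealed{\Conf}{\ManSub{0}}$, one checks $\ConfRes{\Conf} \leq \ConfRes{\Confp}$.

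Next I fix a man-optimal outcome $\MarOutcome{\MarMatch}{\MenUtilOptp}{\WomenUtilPesp}$ of $\MarketConf{\Conf}$ and let $\MarPayoffMenOpt$ be the man-optimal payoff of $\MarketConf{\Confp}$; the goal is to prove $\PLeast{\Conf}{\MarMatch}$. Since $\ConfRes{\Conf} \leq \ConfRes{\Confp}$, Lemma~\ref{lem:man-optimal-monotone-reserve} gives $\ManUtilOptp{\Man} \leq \ManUtilOpt{\Man}$ for every man $\Man$. I then bound $\ManUtilOpt{\Man}$ from above using $\POpt{\Confp}$. For $\Man = \ManSub{0}$, because $\Bidder \in \CReady{\Confp}$, Lemma~\ref{lem:man-optimal-ready} yields $\ManUtilOpt{\ManSub{0}} = \ResC{\ManSub{0}} \lexp{p} < \lexp{p}$. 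For every other man $\Man$, $\POpt{\Confp}$ forces each man-optimal matching of $\MarketConf{\Confp}$ to match $\Man$ inside $\LeastTier{\Confp}{\Man} = \LeastTier{\Conf}{\Man}$ or to leave him unmatched at his reserve, so Lemma~\ref{lem:utility} and the explicit reserve values give $\ManUtilOpt{\Man} < \lexp{\ManPref{\Man}{\Womanp} + 1}$ for any $\Womanp \in \LeastTier{\Conf}{\Man}$.

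Finally, I combine these upper bounds with the lower bound of Lemma~\ref{lem:man-lower-bound}. Man-optimality makes the outcome individually rational, so by Lemma~\ref{lem:reveal-reserve} every woman matched in $\MarMatch$ lies in a tier revealed in $\Conf$. For a matched pair $(\Man, \Woman)$ with $\Man \neq \ManSub{0}$, choosing $\Womanp \in \LeastTier{\Conf}{\Man}$, Lemma~\ref{lem:man-lower-bound} gives $\lexp{\ManPref{\Man}{\Woman}} \leq \ManUtilOptp{\Man} \leq \ManUtilOpt{\Man} < \lexp{\ManPref{\Man}{\Womanp} + 1}$, so $\ManPref{\Man}{\Woman} \leq \ManPref{\Man}{\Womanp}$; since $\Woman$ is revealed, its value is at least that of the least revealed tier, whence $\Woman \in \LeastTier{\Conf}{\Man}$. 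For a matched pair $(\ManSub{0}, \Woman)$, the same chain gives $\lexp{\ManPref{\ManSub{0}}{\Woman}} \leq \ManUtilOpt{\ManSub{0}} < \lexp{p}$, so $\ManPref{\ManSub{0}}{\Woman} < p$; as the only tier revealed in $\Conf$ but not in $\Confp$ is $\LeastTier{\Conf}{\ManSub{0}}$, which sits immediately below value $p$, the revealed woman $\Woman$ must lie in $\LeastTier{\Conf}{\ManSub{0}}$. Thus $\PLeast{\Conf}{\MarMatch}$ holds and $\POpt{\Conf}$ follows. The main obstacle is the man $\ManSub{0}$: revealing his new tier lowers his reserve and could a priori let him---or, through a rejection cascade, some other man---reach a strictly more preferred tier; the integrality gap between consecutive tiers, encoded by the factor $\BB$ inside $\lexp{\cdot}$ together with the sharp lower bound of Lemma~\ref{lem:man-lower-bound} and the monotonicity of Lemma~\ref{lem:man-optimal-monotone-reserve}, is exactly what forbids this.
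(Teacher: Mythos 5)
Your proof is correct, but it is organized differently from the paper's. The paper argues by contradiction: given a violating pair $(\Man, \Woman)$ in a man-optimal matching of $\MarketConf{\Conf}$, it jumps directly back to the configuration $\Confp$ at the beginning of the iteration in which the tier $\LeastTier{\Conf}{\Man}$ was revealed --- possibly many iterations before $\Conf$ --- where Lemma~\ref{lem:man-optimal-ready} pins the man-optimal utility of $\Man$ at $\ResC{\Man} \lexp{\ManPref{\Man}{\Womanpp}}$ for $\Womanpp \in \LeastTier{\Confp}{\Man}$; combining this with the lower bound of Lemma~\ref{lem:man-lower-bound} at $\Conf$ and the monotonicity of Lemma~\ref{lem:man-optimal-monotone-reserve} yields the contradiction, with no case analysis over other men. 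You instead step back exactly one iteration to the immediate predecessor: for the man $\ManSub{0}$ whose bidder was just revealed, your argument coincides with the paper's core (ready bidder, then Lemmas~\ref{lem:man-optimal-ready}, \ref{lem:man-lower-bound}, and \ref{lem:man-optimal-monotone-reserve}), while for every other man $\Man$ you add a uniform upper bound $\ManUtilOpt{\Man} < \lexp{\ManPref{\Man}{\Womanp} + 1}$ at $\Confp$, derived from $\POpt{\Confp}$ and the unchanged least tiers, to show that his matched partner cannot rise above $\LeastTier{\Conf}{\Man}$. What each approach buys: your single-step induction consumes $\POpt{\Confp}$ only for the immediate predecessor --- a formally weaker hypothesis than the paper's ``all preceding relevant configurations,'' which the paper genuinely needs for its jump-back --- at the price of the extra per-man bound; the paper's targeted choice of $\Confp$ makes its proof shorter. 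One presentational caveat in your argument: Lemma~\ref{lem:utility} is stated for tiered-slope markets with the standard reserve utilities, whereas $\MarketConf{\Confp}$ has modified reserves for the men, so the matched-man upper bound you cite should be observed to carry over (it does, since that bound's derivation uses only feasibility of the matched pair and $v_j \geq 0$ for the women, whose reserves are unchanged); your separate treatment of unmatched men via the explicit values of $\ConfRes{\Confp}$ is the right complement. Your verification that $\ConfRes{\Conf} \leq \ConfRes{\Confp}$, including the case where the newly revealed tier is the last one (via $\ManPri{\ManSub{0}} \leq \NN - 1 \leq \BB - 1$), is correct and is a detail the paper leaves implicit.
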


\begin{proof}
  Consider an arbitrary \CanE\ execution that produces $\Conf$ at some iteration and let $(\Auc, \IAuc)$ be the configuration $\Conf$.
  For the sake of contradiction, assume $\POpt{\Conf}$ does not hold.
  Let $\MarOutcomeMenOpt{\MarMatch}$ be a man-optimal outcome for $\MarketConf{\Conf}$ such that $\PLeast{\Conf}{\MarMatch}$ does not hold and let $(\Man, \Woman)$ be a man-woman pair matched in $\MarMatch$ such that $\Woman$ does not belong to $\LeastTier{\Conf}{\Man}$.
  Let $\Womanp$ be an arbitrary element of $\LeastTier{\Conf}{\Man}$.
  Since Lemma~\ref{lem:reveal-reserve} implies that $\toBidder{\Man}{\Woman}$ belongs to $\Auc$, we deduce that $\ManPref{\Man}{\Woman} > \ManPref{\Man}{\Womanp}$.
  Let $\Confp$ be the configuration at the beginning of the iteration that reveals the tier of $\Man$ corresponding to $\LeastTier{\Conf}{\Man}$, i.e., the iteration at which $\toBidder{\Man}{\Womanp}$ is added at line~\ref{line:reveal} of Alg.~\ref{alg:to-uap}.
  Let $\Womanpp$ be an arbitrary element of $\LeastTier{\Confp}{\Man}$, let $\MenResp$ denote $\ConfRes{\Confp}$, and let $\MarPayoffMenOptp$ denote the man-optimal payoff in $\MarketConf{\Confp}$.
  Then, we deduce that $\ManPref{\Man}{\Woman} \geq \ManPref{\Man}{\Womanpp} > \ManPref{\Man}{\Womanp} \geq \ManPrefUnmatched{\Man}$, where the last inequality follows from Lemma~\ref{lem:reveals-acceptable}.
  Thus, $\PRevealed{\Confp}{\Man}$ does not hold, and hence $\ManResp{\Man} = \ResCS \lexp{\ManPref{\Man}{\Womanpp}}$.
  Since $\toBidder{\Man}{\Womanp}$ belongs to $\CReady{\Confp}$, and since $\POpt{\Confp}$ holds, Lemma~\ref{lem:man-optimal-ready} implies that $\ManUtilOptp{\Man} = \ManResp{\Man} = \ResCS \lexp{\ManPref{\Man}{\Womanpp}}$.
  On the other hand, since the man-woman pair $(\Man, \Woman)$ is matched in $\MarMatch$, which is a man-optimal matching of $\MarketConf{\Conf}$, Lemma~\ref{lem:man-lower-bound} implies that $\ManUtilOpt{\Man} \geq \lexp{\ManPref{\Man}{\Woman}}$.
  Combining the results of the preceding two sentences, we conclude that $\ManUtilOpt{\Man} \geq \lexp{\ManPref{\Man}{\Woman}} > \ResCS \lexp{\ManPref{\Man}{\Womanpp}} = \ManUtilOptp{\Man}$, contradicting Lemma~\ref{lem:man-optimal-monotone-reserve} since $\ConfRes{\Conf} \leq \ConfRes{\Confp}$.
  \QEDWrap
\end{proof}

We are now ready to establish our equivalence result in Theorem~\ref{thm:equivalence}.
We start with a useful lemma.

\begin{lemma}
  \label{lem:every-man-matched}
  For each greedy MWM $\Match$ of $\UAP{\IAuc}$ and each man $\Man$, some bidder associated with $\Man$ is matched in $\Match$.
\end{lemma}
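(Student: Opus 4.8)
The plan is to combine the termination condition of \Alg~\ref{alg:to-uap} with Lemma~\ref{lem:all-revealed-matched}, working in the final configuration $\ConfFinal = (\UAP{\IAuc}, \IAuc)$ and splitting on whether $\PRevealed{\ConfFinal}{\Man}$ holds. First I would fix an arbitrary greedy MWM $\Match$ of $\UAP{\IAuc}$ and an arbitrary man $\Man$. If $\PRevealed{\ConfFinal}{\Man}$ holds, then Lemma~\ref{lem:all-revealed-matched}, applied with $\Conf = \ConfFinal$, immediately yields that some bidder associated with $\Man$ is matched in $\Match$, and there is nothing more to do.

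The substantive case is when $\PRevealed{\ConfFinal}{\Man}$ fails, i.e., when $\toBidder{\Man}{\Unmatched}$ has not been revealed. Here I would first observe that the multibidder $\toMBidder{\Man}$ still has an unrevealed next bidder. Indeed, the invariant $\Prefix{\Auc}{\IAuc}$ guarantees that the revealed bidders of $\toMBidder{\Man}$ form a prefix of its sequence of $\NTiers{\Man}$ bidders, and since the tier of $\Man$ containing $\Unmatched$ has not been revealed, this prefix has length $\Car{\BiddersOfMB{\ConfFinal}{\toMBidder{\Man}}} < \NTiers{\Man}$. Hence the bidder $\Bidder = \MBBidder{\toMBidder{\Man}}{\Car{\BiddersOfMB{\ConfFinal}{\toMBidder{\Man}}} + 1}$ exists and satisfies the second defining condition of $\CReady{\ConfFinal}$. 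Because \Alg~\ref{alg:to-uap} has terminated, $\CReady{\ConfFinal}$ is empty, so $\Bidder \notin \CReady{\ConfFinal}$; the second condition holding, the first must fail, giving $\Greedy{\UAP{\IAuc}}{\BPri{\Bidder}} \geq 1$.

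To finish, I would translate this back into a statement about $\Man$. Since every bidder of $\toMBidder{\Man}$ has priority $\ManPriS$ and the IUAP definition forces distinct priorities across distinct multibidders, the bidders of $\UAP{\IAuc}$ with priority $\ManPriS$ are exactly those associated with $\Man$. Thus $\Greedy{\UAP{\IAuc}}{\ManPriS} = \Greedy{\UAP{\IAuc}}{\BPri{\Bidder}} \geq 1$ means that at least one bidder associated with $\Man$ is matched in any greedy MWM, and in particular in $\Match$. The main obstacle is only the bookkeeping of this second case: verifying that the next bidder genuinely exists (so that condition~(b) of $\CReady{\ConfFinal}$ is met) and that $\Greedy{\UAP{\IAuc}}{\ManPriS}$ isolates precisely the bidders of $\Man$. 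Both rest on facts already in hand --- the prefix invariant and the distinctness of multibidder priorities --- so no appeal to the confluence or tail lemmas is needed, and the argument should be brief.
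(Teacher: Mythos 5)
Your proof is correct and is essentially the paper's argument in contrapositive form: the paper assumes no bidder of $\Man$ is matched in $\Match$, notes that $\Greedy{\UAP{\IAuc}}{\ManPriS} = 0$ then forces all of $\Man$'s bidders to be revealed (else $\CReady{\ConfFinal}$ would be nonempty), and contradicts Lemma~\ref{lem:all-revealed-matched}, whereas you deploy the same two ingredients --- emptiness of $\CReady{\ConfFinal}$ at termination and Lemma~\ref{lem:all-revealed-matched} --- as a direct case split on $\PRevealed{\ConfFinal}{\Man}$. The bookkeeping you supply (existence of the next bidder via the prefix invariant, and the identification of the priority-$\ManPriS$ bidders with those associated with $\Man$) is sound and merely makes explicit what the paper leaves implicit.
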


\begin{proof}
  Suppose the claim does not hold, and let $\Match$ be a greedy MWM of $\UAP{\IAuc}$ and let $\Man$ be a man such that no bidder associated with $\Man$ is matched in $\Match$.
  Then all of the bidders associated with $\Man$ belong to $\UAP{\IAuc}$, for otherwise $\CReady{\ConfFinal}$ is nonempty, contradicting the definition of $\ConfFinal$ (recall that $\ConfFinal = (\UAP{\IAuc}, \IAuc)$ is the unique final configuration of any \CanE\ execution).
  Thus, we deduce that $\PRevealed{\ConfFinal}{\Man}$ holds, and Lemma~\ref{lem:all-revealed-matched} implies that some bidder associated with $\Man$ is matched in $\Match$, a contradiction.
  \QEDWrap
\end{proof}

\begin{theorem}
  \label{thm:equivalence}
  $\toUAPFunc{\ConfFinal}$ is a bijection from the set of man-optimal matchings of $\Market$ to the set of greedy MWMs of $\UAP{\IAuc}$.
\end{theorem}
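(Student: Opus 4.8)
The plan is to route the argument through the adjusted market $\MarketConf{\ConfFinal}$, for which Lemma~\ref{lem:man-optimal-greedy-mwm} already delivers the desired bijection, and then to show that $\MarketConf{\ConfFinal}$ and $\Market$ have exactly the same man-optimal matchings. First I would verify the hypothesis of Lemma~\ref{lem:man-optimal-greedy-mwm} at $\ConfFinal$, namely $\POpt{\ConfFinal}$. The remark preceding Lemma~\ref{lem:man-optimal-max-weight} gives the base case $\POpt{\ConfFirst}$, and Lemma~\ref{lem:popt} supplies the inductive step along the precedence ordering of \RelC\ configurations; hence by well-founded induction $\POpt{\Conf}$ holds for every \RelC\ configuration, and in particular $\POpt{\ConfFinal}$ holds since $\ConfFinal$ is the \RelC\ final configuration of any \CanE\ execution. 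Applying Lemma~\ref{lem:man-optimal-greedy-mwm} to $\ConfFinal$ then shows that $\toUAPFunc{\ConfFinal}$ is a bijection from the man-optimal matchings of $\MarketConf{\ConfFinal}$ onto the greedy MWMs of $\UAP{\IAuc}$, so it remains only to identify the man-optimal matchings of $\MarketConf{\ConfFinal}$ with those of $\Market$.

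The only difference between the two markets is that $\MarketConf{\ConfFinal} = \MarketRes{\ConfRes{\ConfFinal}}$ raises the men's reserve utilities from $\MenRes$ to $\ConfRes{\ConfFinal} \geq \MenRes$, so the tools are Lemma~\ref{lem:man-optimal-monotone-reserve} together with Lemma~\ref{lem:every-man-matched}. I would first observe that in any man-optimal matching $\MarMatch$ of $\MarketConf{\ConfFinal}$, every man is either matched or unmatched with $\PRevealed{\ConfFinal}{\Man}$ holding: indeed $\toUAPFunc{\ConfFinal}$ sends $\MarMatch$ to a greedy MWM of $\UAP{\IAuc}$, in which each man owns a matched bidder (Lemma~\ref{lem:every-man-matched}), and by the definition of $\toUAPFunc{\ConfFinal}$ the only matched bidder available to an unmatched man is his dummy bidder, whose presence signals revelation; at such a man $\ConfRes{\ConfFinal}$ agrees with $\MenRes$. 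Consequently the man-optimal outcome $\MarOutcomeS$ of $\MarketConf{\ConfFinal}$, with payoff $\MarPayoffS$, is still feasible, individually rational, and stable for $\Market$ (feasibility at an unmatched man uses that such a man is revealed, individual rationality only relaxes as reserves drop, and stability is reserve-free), so the man-optimal payoff $\MarPayoffSp$ of $\Market$ obeys $\MenUtilp \geq \MenUtil$; Lemma~\ref{lem:man-optimal-monotone-reserve} gives the reverse $\MenUtil \geq \MenUtilp$, whence $\MenUtil = \MenUtilp$.

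To promote this to equality of the full man-optimal payoffs I would use that the man-optimal payoff is woman-pessimal among stable payoffs (the lattice property of \AuthorsDemangeGale~\cite{DG85}). Reading the man-optimal outcome of $\MarketConf{\ConfFinal}$ inside $\Market$ shows $\WomenUtilp \leq \WomenUtil$; conversely, since $\MenUtil = \MenUtilp$ forces every man left unmatched by a man-optimal matching of $\Market$ to be revealed (otherwise his utility would fall below the strictly larger unrevealed reserve, violating individual rationality in $\MarketConf{\ConfFinal}$), a man-optimal outcome of $\Market$ is also stable for $\MarketConf{\ConfFinal}$, giving $\WomenUtil \leq \WomenUtilp$. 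Thus $\MarPayoffS = \MarPayoffSp$ is the common man-optimal payoff of the two markets. A matching is then man-optimal for either market exactly when it is compatible with this single payoff in that market; since in any such matching the unmatched men are revealed, the reserve gap never enters feasibility, individual rationality, or stability, so the two notions of man-optimal matching coincide. Composing this identification with the bijection of the first paragraph yields the theorem.

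The crux, and the step I expect to be delicate, is precisely this reserve-bridging. Algorithm~\ref{alg:to-uap} halts as soon as each man acquires a matched bidder, so $\ConfFinal$ need not reveal every tier of every man, and $\ConfRes{\ConfFinal}$ genuinely exceeds $\MenRes$ on some coordinates; one cannot simply declare $\MarketConf{\ConfFinal} = \Market$. The real work lies in showing that these elevated reserves are non-binding, being raised only at men who are matched or whose unmatched status is accompanied by full revelation (so that the reserve value is in fact unchanged), which is what renders the man-optimal payoff, and hence the man-optimal matchings, insensitive to the adjustment.
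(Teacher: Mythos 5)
Your proposal is correct and follows essentially the same route as the paper's proof: establish $\POpt{\ConfFinal}$ by repeated application of Lemma~\ref{lem:popt}, apply the bijection of Lemma~\ref{lem:man-optimal-greedy-mwm} to $\ConfFinal$, and bridge the reserve gap between $\Market$ and $\MarketConf{\ConfFinal}$ by showing that unmatched men in the relevant man-optimal matchings are fully revealed (via Lemma~\ref{lem:every-man-matched}) so that stability transfers in both directions, with Lemma~\ref{lem:man-optimal-monotone-reserve} closing the loop --- exactly the paper's Claims 1 and 2. Your additional detour through woman-pessimality to equate the women's utility vectors is sound but unnecessary, since man-optimality is defined solely in terms of the men's utilities, so the paper simply transfers stable outcomes between the two markets and concludes directly.
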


\begin{proof}
  Observe that $\POpt{\ConfFinal}$ holds by repeated application of Lemma~\ref{lem:popt}.
  Let $\MarPayoffMenOpt$ denote the man-optimal payoff in $\MarketConf{\ConfFinal}$.
  Let $\MenResp$ denote $\ConfRes{\ConfFinal}$ and recall that $\MenRes$ is the reserve utility vector of the men in $\Market$.
  We now prove two useful claims.

  Claim 1: Any man-optimal outcome $\MarOutcomeMenOpt{\MarMatch}$ for $\MarketConf{\ConfFinal}$ is stable for $\Market$.
  Let $\MarOutcomeMenOpt{\MarMatch}$ be a man-optimal outcome for $\MarketConf{\ConfFinal}$.
  Since $\MarketConf{\ConfFinal}$ and $\Market$ differ only in the reserve utility vectors of the men, it is enough to show that for each man $\Man$ who is unmatched in $\MarMatch$, we have $\ManUtilOpt{\Man} = \ManRes{\Man}$.
  Let $\Man$ be a man who is unmatched in $\MarMatch$.
  Observe that $\PRevealed{\ConfFinal}{\Man}$ holds, for otherwise no bidder associated with $\Man$ is matched in $\toUAP{\ConfFinal}{\MarMatch}$, which is a greedy MWM of $\UAP{\IAuc}$ by Lemma~\ref{lem:man-optimal-greedy-mwm}, contradicting Lemma~\ref{lem:every-man-matched}.
  Thus $\ManResp{\Man} = \ManPriS \lexp{\ManPref{\Man}{\Unmatched}} = \ManRes{\Man}$.
  Then, by the stability of $\MarOutcomeMenOpt{\MarMatch}$ for $\MarketConf{\ConfFinal}$, we conclude that $\ManUtilOpt{\Man} = \ManResp{\Man} = \ManRes{\Man}$.

  Claim 2: Any man-optimal outcome for $\Market$ is stable for $\MarketConf{\ConfFinal}$.
  Claim 1 and Lemma~\ref{lem:man-optimal-monotone-reserve} imply that the payoff $\MarPayoffMenOpt$, which is the man-optimal payoff in $\MarketConf{\ConfFinal}$, is also the man-optimal payoff in $\Market$.
  Let $\MarOutcomeMenOpt{\MarMatch}$ be a man-optimal outcome for $\Market$.
  As in the proof of Claim 1, since $\MarketConf{\ConfFinal}$ and $\Market$ differ only in the reserve utility vectors of the men, it is enough to show that for each man $\Man$ who is unmatched in $\MarMatch$, we have $\ManUtilOpt{\Man} = \ManResp{\Man}$.
  Let $\Man$ be a man who is unmatched in $\MarMatch$.
  By the stability of $\MarOutcomeMenOpt{\MarMatch}$ for $\Market$, we deduce that $\ManUtilOpt{\Man} = \ManRes{\Man}$.
  By the individual rationality of $\MarPayoffMenOpt$ for $\MarketConf{\ConfFinal}$, we deduce that $\ManUtilOpt{\Man} \geq \ManResp{\Man}$.
  Since $\MenResp \geq \MenRes$, we conclude that $\ManUtilOpt{\Man} = \ManRes{\Man} = \ManResp{\Man}$.

  Claims 1 and 2, and Lemma~\ref{lem:man-optimal-monotone-reserve} imply that the set of man-optimal matchings of $\Market$ is equal to the set of man-optimal matchings of $\MarketConf{\ConfFinal}$.
  Then the theorem follows from Lemma~\ref{lem:man-optimal-greedy-mwm} since $\POpt{\ConfFinal}$ holds.
  \QEDWrap
\end{proof}

\begin{proof}[\ConfOrFull{}{Proof of }Theorem~\ref{thm:implementation}]
  We construct a \TSM\ $\Market$ associated with the instance of the stable marriage market with indifferences, and we run Alg.~\ref{alg:smiw}, which admits an $O(n^4)$-time implementation, by setting the edge weights as described in App.~\ref{app:iuap-weights}.
  Then the result follows from Theorems~\ref{thm:strategyproofness} and \ref{thm:equivalence}.
  \QEDWrap
\end{proof}

\subsection{Proof of Lemma~\ref{lem:max-weight-stable}}
\label{sec:equivalence-proof}

The purpose of this appendix is to prove Lemma~\ref{lem:max-weight-stable}.
We start with some useful lemmas.

\begin{lemma}
  \label{lem:women-payment}
  Let $\Conf$ be a \RelC\ configuration, let $\MarMatch$ be a matching of $\MarketConf{\Conf}$, and let $\WomenUtil$ be a utility vector of the women such that $\WomanUtilS = \WomanResS$ for each woman $\Woman$ who is unmatched in $\MarMatch$.
  Then,
  \begin{equation*}
     \sum_{\EachWoman} \WomanUtilS - \WMarM{\Conf}{\MarMatch} + \MenPriUnmatchedS{\MarMatch} = \sum_{\IsMatchedS{\Woman}} \WomanCompMatchS{\MarMatch} .
  \end{equation*}
\end{lemma}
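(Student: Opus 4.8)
The plan is to establish the identity by a direct computation, expanding the definition of $\WMarM{\Conf}{\MarMatch}$ and invoking the hypothesis on the unmatched women. First I would substitute the definition $\WMarM{\Conf}{\MarMatch} = \NN \cdot \WMPref{\MarMatch} + \sum_{\IsMatchedS{\Man}} \ManPriS + \MenPriUnmatchedS{\MarMatch}$ into the left-hand side of the claimed equation. The two copies of $\MenPriUnmatchedS{\MarMatch}$ then cancel, reducing the left-hand side to
\begin{equation*}
  \sum_{\EachWoman} \WomanUtilS - \NN \cdot \WMPref{\MarMatch} - \sum_{\IsMatchedS{\Man}} \ManPriS .
\end{equation*}

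Next I would rewrite the two remaining negative terms so that every summation ranges over the women. Since $\MarMatch$ induces a bijection between the matched men and the matched women, I have $\sum_{\IsMatchedS{\Man}} \ManPriS = \sum_{\IsMatchedS{\Woman}} \ManPri{\MOf{\MarMatch}{\Woman}}$. Expanding the definition $\WMPref{\MarMatch} = \sum_{\IsMatchedS{\Woman}} \WomanPref{\MOf{\MarMatch}{\Woman}}{\Woman} + \sum_{\IsUnmatchedS{\Woman}} \WomanPrefUnmatchedS$ then splits $\NN \cdot \WMPref{\MarMatch}$ into a matched part and an unmatched part.

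I would then split $\sum_{\EachWoman} \WomanUtilS$ into contributions from the matched women and the unmatched women. For each unmatched woman $\Woman$, the hypothesis gives $\WomanUtilS = \WomanResS = b_{\inil, \Woman} \NN = \NN \WomanPrefUnmatchedS$; these contributions therefore cancel exactly against the unmatched part $\NN \sum_{\IsUnmatchedS{\Woman}} \WomanPrefUnmatchedS$ of $\NN \cdot \WMPref{\MarMatch}$. What remains is
\begin{equation*}
  \sum_{\IsMatchedS{\Woman}}
    \Big( \WomanUtilS - \NN \WomanPref{\MOf{\MarMatch}{\Woman}}{\Woman}
      - \ManPri{\MOf{\MarMatch}{\Woman}} \Big) ,
\end{equation*}
and each summand equals $\WomanUtilS - (\WomanPref{\MOf{\MarMatch}{\Woman}}{\Woman} \NN + \ManPri{\MOf{\MarMatch}{\Woman}})$, which is exactly $\WomanCompMatchS{\MarMatch}$ by the definition of the compensation function $g$. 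This yields the right-hand side and completes the argument.

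The computation invokes no structural property such as stability, individual rationality, or man-optimality; it relies only on the definitions of $\WMarM{\Conf}{\cdot}$ and $\WMPref{\cdot}$, the form $\WomanResS = b_{\inil, \Woman} \NN$ of the women's reserve utilities, and the single hypothesis that each unmatched woman receives exactly her reserve utility. The only real obstacle is thus careful bookkeeping: keeping straight which sums range over men and which over women, and consistently using the bijection between matched men and matched women induced by $\MarMatch$.
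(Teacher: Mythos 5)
Your proposal is correct and follows essentially the same route as the paper's own proof: both expand the definition of $\WMarM{\Conf}{\MarMatch}$ to cancel the two occurrences of $\MenPriUnmatchedS{\MarMatch}$, split the sums over matched and unmatched women (converting the matched-men sum via the bijection induced by $\MarMatch$), cancel the unmatched women's contributions using $\WomanResS = \NN \WomanPrefUnmatchedS$, and identify the remaining summands with $\WomanCompMatchS{\MarMatch}$. Your closing observation that the argument uses no stability or rationality assumptions, only the definitions and the hypothesis on unmatched women, accurately reflects the paper's computation as well.
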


\begin{proof}
  \begin{align*}
    \sum_{\EachWoman} \WomanUtilS - \WMarM{\Conf}{\MarMatch} + \MenPriUnmatchedS{\MarMatch}
    &= \sum_{\EachWoman} \WomanUtilS - \NN \cdot \WMPref{\MarMatch} - \sum_{\IsMatchedS{\Man}} \ManPriS \\
    &= \sum_{\IsMatchedS{\Woman}} \WomanUtilS - \NN \sum_{\IsMatchedS{\Woman}} \WomanPref{\MOf{\MarMatch}{\Woman}}{\Woman} - \sum_{\IsMatchedS{\Woman}} \ManPri{\MOf{\MarMatch}{\Woman}} \\
    &\quad + \sum_{\IsUnmatchedS{\Woman}} \WomanResS - \NN \sum_{\IsUnmatchedS{\Woman}} \WomanPrefUnmatchedS \\
    &= \sum_{\IsMatchedS{\Woman}} \pr{\WomanUtilS - \WomanPref{\MOf{\MarMatch}{\Woman}}{\Woman} \NN - \ManPri{\MOf{\MarMatch}{\Woman}}} \\
    &= \sum_{\IsMatchedS{\Woman}} \WomanCompMatchS{\MarMatch} ,
  \end{align*}
  where the third equality follows since $\WomanResS = \NN \WomanPrefUnmatchedS$.
  \QEDWrap
\end{proof}

\begin{lemma}
  \label{lem:stable-weight}
  Let $\Conf$ be a \RelC\ configuration and let $\MarOutcomeS$ be a stable outcome for $\MarketConf{\Conf}$.
  Then,
  \begin{equation*}
    \WMarM{\Conf}{\MarMatch} = \sum_{\IsMatchedS{\Man}} \ManCompMatchS{\MarMatch} + \sum_{\EachWoman} \WomanUtilS + \MenPriUnmatchedS{\MarMatch} .
  \end{equation*}
\end{lemma}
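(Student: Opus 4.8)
The plan is to derive this accounting identity by combining Lemma~\ref{lem:women-payment} with the complementary-slackness property of stable outcomes on matched edges. First I would verify that the hypotheses of Lemma~\ref{lem:women-payment} are met: since $\MarOutcomeS$ is stable, it is in particular feasible, and feasibility forces $\WomanUtilS = \WomanResS$ for every woman $\Woman$ who is unmatched in $\MarMatch$. Thus Lemma~\ref{lem:women-payment} applies, and rearranging its conclusion gives
\begin{equation*}
  \WMarM{\Conf}{\MarMatch} = \sum_{\EachWoman} \WomanUtilS + \MenPriUnmatchedS{\MarMatch} - \sum_{\IsMatchedS{\Woman}} \WomanCompMatchS{\MarMatch}.
\end{equation*}
Comparing with the target identity, it then remains only to prove that
\begin{equation*}
  \sum_{\IsMatchedS{\Man}} \ManCompMatchS{\MarMatch} = -\sum_{\IsMatchedS{\Woman}} \WomanCompMatchS{\MarMatch}.
\end{equation*}

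For this second step I would use the fact that, for any matched pair $(\Man, \Woman)$ with $\MOf{\MarMatch}{\Man} = \Woman$, feasibility of $\MarOutcomeS$ gives $\ManCompS + \WomanCompS \leq 0$ while stability of $\MarOutcomeS$ gives $\ManCompS + \WomanCompS \geq 0$; hence $\ManCompS + \WomanCompS = 0$ on every matched edge. Summing this zero-sum identity over all matched men and re-indexing the women's compensations through the matching bijection $\Man \mapsto \MOf{\MarMatch}{\Man}$ between matched men and matched women yields exactly the required relation between the two compensation sums. Substituting this relation back into the displayed expression for $\WMarM{\Conf}{\MarMatch}$ completes the proof.

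I do not anticipate a real obstacle, as the statement is an exact bookkeeping consequence of Lemma~\ref{lem:women-payment} together with the vanishing of each matched edge's net compensation. The only point needing care is the re-indexing between the sum over matched men, weighted by the $f$-compensations $\ManCompMatchS{\MarMatch}$, and the sum over matched women, weighted by the $g$-compensations $\WomanCompMatchS{\MarMatch}$; this is handled cleanly by the bijection induced by $\MarMatch$ on the matched agents.
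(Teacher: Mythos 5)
Your proof is correct and follows essentially the same route as the paper's: both establish $\ManCompS + \WomanCompS = 0$ on each matched edge by combining the feasibility inequality ($\leq 0$) with the stability inequality ($\geq 0$), and then invoke Lemma~\ref{lem:women-payment}, whose hypothesis on unmatched women holds because stable outcomes are feasible. The only differences are cosmetic: you apply Lemma~\ref{lem:women-payment} first and then cancel the compensation sums, whereas the paper states the zero-sum identity first; your attribution of $\WomanUtilS = \WomanResS$ for unmatched women directly to feasibility is, if anything, slightly more precise than the paper's wording.
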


\begin{proof}
  The stability and feasibility of $\MarOutcomeS$ imply that $\ManCompS + \WomanCompS = 0$ for each man-woman pair $(\Man, \Woman)$ matched in $\MarMatch$.
  Thus, $\sum_{\IsMatchedS{\Man}} \ManCompMatchS{\MarMatch} + \sum_{\IsMatchedS{\Woman}} \WomanCompMatchS{\MarMatch} = 0$, and the claim follows from Lemma~\ref{lem:women-payment}, since the stability of $\MarOutcomeS$ implies that $\WomanUtilS = \WomanResS$ for each woman $\Woman$ who is unmatched in $\MarMatch$.
  \QEDWrap
\end{proof}

\begin{lemma}
  \label{lem:u-equal}
  Let $\Conf$ be a \RelC\ configuration, let $\MarMatch$ and $\MarMatchp$ be two matchings such that $\PLeast{\Conf}{\MarMatch}$ and $\PLeast{\Conf}{\MarMatchp}$ hold, let $\MenResp$ denote $\ConfRes{\Conf}$, and let $\MenUtil$ be a utility vector such that for each man $\Man$, $\ManUtil{\Man} = \ManResp{\Man}$ if $\IsUnmatched{\Man}{\MarMatch}$ or $\IsUnmatched{\Man}{\MarMatchp}$.
  Then $\sum_{\IsMatched{\Man}{\MarMatch}} \ManCompMatchS{\MarMatch} + \MenPriUnmatchedS{\MarMatch} = \sum_{\IsMatched{\Man}{\MarMatchp}} \ManCompMatchS{\MarMatchp} + \MenPriUnmatchedS{\MarMatchp}$.
\end{lemma}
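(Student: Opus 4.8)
The plan is to prove the identity man by man. First I would rewrite the left-hand side as $\sum_{\EachMan} c_{\Man}(\MarMatch)$, where $c_{\Man}(\MarMatch) = \ManCompMatchS{\MarMatch}$ when $\IsMatched{\Man}{\MarMatch}$ and $c_{\Man}(\MarMatch) = \ManPriUnmatchedS{\MarMatch}$ when $\IsUnmatched{\Man}{\MarMatch}$. This regrouping is legitimate because $\ManPriUnmatchedS{\MarMatch} = 0$ for every man matched in $\MarMatch$, so the $\ManPriUnmatchedS{\MarMatch}$ terms sum to exactly $\MenPriUnmatchedS{\MarMatch}$ while the compensation terms sum to $\sum_{\IsMatched{\Man}{\MarMatch}} \ManCompMatchS{\MarMatch}$. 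Defining $c_{\Man}(\MarMatchp)$ analogously, it then suffices to establish $c_{\Man}(\MarMatch) = c_{\Man}(\MarMatchp)$ for each man $\Man$. I would also record two facts for repeated use: $\ManCompMatchS{\MarMatch} = \ManUtil{\Man}\lexp{-\ManPref{\Man}{\MOf{\MarMatch}{\Man}}}$; and, since $\PLeast{\Conf}{\MarMatch}$ and $\PLeast{\Conf}{\MarMatchp}$ hold, every woman matched to $\Man$ in either matching lies in $\LeastTier{\Conf}{\Man}$, where by the remark following Lemma~\ref{lem:reveals-acceptable} all elements share a common value of $\ManPref{\Man}{\cdot}$, which I denote $a^{*}_{\Man}$.

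Next I would dispatch the two easy cases. If $\Man$ is matched in both matchings, then $\ManPref{\Man}{\MOf{\MarMatch}{\Man}} = a^{*}_{\Man} = \ManPref{\Man}{\MOf{\MarMatchp}{\Man}}$, so $c_{\Man}(\MarMatch) = \ManUtil{\Man}\lexp{-a^{*}_{\Man}} = c_{\Man}(\MarMatchp)$. If $\Man$ is unmatched in both, then $c_{\Man}(\MarMatch) = \ManPriUnmatchedS{\MarMatch}$ and $c_{\Man}(\MarMatchp) = \ManPriUnmatchedS{\MarMatchp}$ each equal $\ManPriS$ or each equal $\ResCS$ according to whether $\PRevealed{\Conf}{\Man}$ holds; since this predicate depends only on $\Conf$ and not on the matching, the two values agree.

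The main work is the cross case, where $\Man$ is matched in exactly one of the matchings; by symmetry assume $\IsMatched{\Man}{\MarMatch}$ and $\IsUnmatched{\Man}{\MarMatchp}$. Because $\Man$ is unmatched in $\MarMatchp$, the hypothesis on $\MenUtil$ gives $\ManUtil{\Man} = \ManResp{\Man}$, so $c_{\Man}(\MarMatch) = \ManResp{\Man}\lexp{-a^{*}_{\Man}}$. Now I unfold $\ManResp{\Man}$ (recall $\MenResp = \ConfRes{\Conf}$) according to $\PRevealed{\Conf}{\Man}$. If $\PRevealed{\Conf}{\Man}$ holds, the remark following Lemma~\ref{lem:reveals-acceptable} gives $a^{*}_{\Man} = \ManPrefUnmatchedS$ and $\ManResp{\Man} = \ManPriS\lexp{\ManPrefUnmatchedS}$, hence $c_{\Man}(\MarMatch) = \ManPriS\lexp{\ManPrefUnmatchedS}\lexp{-a^{*}_{\Man}} = \ManPriS = \ManPriUnmatchedS{\MarMatchp} = c_{\Man}(\MarMatchp)$. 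If $\PRevealed{\Conf}{\Man}$ fails, then $\ManResp{\Man} = \ResCS\lexp{\ManPref{\Man}{\Woman}} = \ResCS\lexp{a^{*}_{\Man}}$ for $\Woman \in \LeastTier{\Conf}{\Man}$, hence $c_{\Man}(\MarMatch) = \ResCS\lexp{a^{*}_{\Man}}\lexp{-a^{*}_{\Man}} = \ResCS = \ManPriUnmatchedS{\MarMatchp} = c_{\Man}(\MarMatchp)$. Thus $c_{\Man}(\MarMatch) = c_{\Man}(\MarMatchp)$ in all cases, and summing over $\Man$ gives the lemma.

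I expect the only delicate point to be the cross-case bookkeeping: matching the exponent $a^{*}_{\Man}$ coming from the matched woman's tier against the exponent built into $\ConfRes{\Conf}$, and invoking the remark that $a^{*}_{\Man} = \ManPrefUnmatchedS$ exactly when $\PRevealed{\Conf}{\Man}$ holds so that the factor $\lexp{\pm a^{*}_{\Man}}$ cancels and $\ManResp{\Man}\lexp{-a^{*}_{\Man}}$ collapses to $\ManPriS$ or $\ResCS$ as needed. Everything else is routine regrouping and case analysis.
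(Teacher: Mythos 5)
Your proof is correct and follows essentially the same route as the paper's: a man-by-man decomposition into the matched/matched, unmatched/unmatched, and cross cases, with the cross case split on $\PRevealed{\Conf}{\Man}$ and resolved by cancelling the tier exponent from $\LeastTier{\Conf}{\Man}$ against the one built into $\ConfRes{\Conf}$ (the paper writes this as six cases where you merge them by symmetry, but the content is identical).
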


\newcommand{\ManCompUnmatchedS} {\ManComp{\Man}{\Unmatched}{\MenUtil}}

\begin{proof}
  Let $\ManCompUnmatchedS$ denote $0$.
  We show that $\ManCompMatchS{\MarMatch} + \ManPriUnmatchedS{\MarMatch} = \ManCompMatchS{\MarMatchp} + \ManPriUnmatchedS{\MarMatchp}$ for each man $\Man$.
  Let $\Man$ be an arbitrary man.
  We consider six cases.

  Case 1: $\IsMatched{\Man}{\MarMatch}$ and $\IsMatched{\Man}{\MarMatchp}$.
  Then $\ManPriUnmatchedS{\MarMatch} = \ManPriUnmatchedS{\MarMatchp} = 0$.
  $\PLeast{\Conf}{\MarMatch}$ and $\PLeast{\Conf}{\MarMatchp}$ and imply that $\ManPref{\Man}{\MOf{\MarMatch}{\Man}} = \ManPref{\Man}{\MOf{\MarMatchp}{\Man}}$, and hence $\ManCompMatchS{\MarMatch} = \ManCompMatchS{\MarMatchp}$.

  Case 2: $\MOf{\MarMatch}{\Man} = \MOf{\MarMatchp}{\Man} = \Unmatched$.
  In this case $\ManPriUnmatchedS{\MarMatch}$ (resp., $\ManPriUnmatchedS{\MarMatchp}$) is independent of $\MarMatch$ (resp., $\MarMatchp$).

  Case 3 (resp., case 4): $\IsMatched{\Man}{\MarMatch}$ and $\IsUnmatched{\Man}{\MarMatchp}$ (resp., $\IsUnmatched{\Man}{\MarMatch}$ and $\IsMatched{\Man}{\MarMatchp}$) and $\PRevealed{\Conf}{\Man}$.
  Since $\PRevealed{\Conf}{\Man}$ holds, we deduce that $\ManResp{\Man} = \ManPriS \lexp{\ManPref{\Man}{\Unmatched}}$.
  Then, since $\ManUtil{\Man} = \ManResp{\Man}$, $\PLeast{\Conf}{\MarMatch}$ (resp., $\PLeast{\Conf}{\MarMatchp}$) implies that $\ManCompMatchS{\MarMatch} + \ManPriUnmatchedS{\MarMatch} = \ManComp{\Man}{\MOf{\MarMatch}{\Man}}{\MenResp} = \ManPriS$ (resp., $\ManCompMatchS{\MarMatchp} + \ManPriUnmatchedS{\MarMatchp} = \ManComp{\Man}{\MOf{\MarMatchp}{\Man}}{\MenResp} = \ManPriS$).
  Since $\IsUnmatched{\Man}{\MarMatchp}$ (resp., $\IsUnmatched{\Man}{\MarMatch}$), we deduce that $\ManCompMatchS{\MarMatchp} + \ManPriUnmatchedS{\MarMatchp}$ (resp., $\ManCompMatchS{\MarMatch} + \ManPriUnmatchedS{\MarMatch}$) is equal to $0 + \ManPriS = \ManPriS$.

  Case 5 (resp., case 6): $\IsMatched{\Man}{\MarMatch}$ and $\IsUnmatched{\Man}{\MarMatchp}$ (resp., $\IsUnmatched{\Man}{\MarMatch}$ and $\IsMatched{\Man}{\MarMatchp}$) and $\lnot \PRevealed{\Conf}{\Man}$.
  Let $\Woman$ denote $\MOf{\MarMatch}{\Man}$ (resp., $\MOf{\MarMatchp}{\Man}$).
  Since $\PLeast{\Conf}{\MarMatch}$ (resp., $\PLeast{\Conf}{\MarMatchp}$) and $\lnot \PRevealed{\Conf}{\Man}$, we deduce that $\ManResp{\Man} = \ResCS \lexp{\ManPref{\Man}{\Woman}}$.
  Then, since $\ManUtil{\Man} = \ManResp{\Man}$, we deduce that $\ManCompMatchS{\MarMatch} + \ManPriUnmatchedS{\MarMatch} = \ManComp{\Man}{\MOf{\MarMatch}{\Man}}{\MenResp} = \ResCS$ (resp., $\ManCompMatchS{\MarMatchp} + \ManPriUnmatchedS{\MarMatchp} = \ManComp{\Man}{\MOf{\MarMatchp}{\Man}}{\MenResp} = \ResCS$).
  Since $\IsUnmatched{\Man}{\MarMatchp}$ (resp., $\IsUnmatched{\Man}{\MarMatch}$), we deduce that $\ManCompMatchS{\MarMatchp} + \ManPriUnmatchedS{\MarMatchp}$ (resp., $\ManCompMatchS{\MarMatch} + \ManPriUnmatchedS{\MarMatch}$) is equal to $0 + \ResCS = \ResCS$.
  \QEDWrap
\end{proof}

\newcommand{\ClaimMaxWeightUnmatchedMen}{(1)}
\newcommand{\ClaimMaxWeightUnmatchedWomen}{(2)}
\newcommand{\ClaimMaxWeightUEqual}{(3)}

\begin{lemma}
  \label{lem:max-weight-compatible}
  Let $\Conf$ be a \RelC\ configuration and let $\MenResp$ denote $\ConfRes{\Conf}$.
  Let $\MarMatchSet$ be the set of all matchings $\MarMatch$ of $\MarketConf{\Conf}$ such that $\PLeast{\Conf}{\MarMatch}$ holds.
  Let $\MarOutcomeS$ be a stable outcome for $\MarketConf{\Conf}$ such that $\MarMatch$ belongs to $\MarMatchSet$.
  Let $\MarMatchOpt$ be a matching in $\MarMatchSet$ such that $\WMarM{\Conf}{\MarMatchOpt} = \max_{\MarMatchp \in \MarMatchSet} \WMarM{\Conf}{\MarMatchp}$.
  Then the following claims hold:
  \ClaimMaxWeightUnmatchedMen\ $\ManUtil{\Man} = \ManResp{\Man}$ if $\IsUnmatched{\Man}{\MarMatchOpt}$;
  \ClaimMaxWeightUnmatchedWomen\ $\WomanUtil{\Woman} = \WomanRes{\Woman}$ if $\IsUnmatched{\Woman}{\MarMatchOpt}$;
  \ClaimMaxWeightUEqual\ $\sum_{\IsMatchedS{\Man}} \ManCompMatchS{\MarMatch} + \MenPriUnmatchedS{\MarMatch} = \sum_{\IsMatched{\Man}{\MarMatchOpt}} \ManCompMatchS{\MarMatchOpt} + \MenPriUnmatchedS{\MarMatchOpt}$.
\end{lemma}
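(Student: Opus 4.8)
The plan is to prove Claims~\ClaimMaxWeightUnmatchedMen, \ClaimMaxWeightUnmatchedWomen, and~\ClaimMaxWeightUEqual\ simultaneously by a complementary-slackness argument that sandwiches $\WMarM{\Conf}{\MarMatchOpt}$ between two quantities, each equal to $\WMarM{\Conf}{\MarMatch}$, so that a chain of inequalities collapses into equalities whose tightness conditions are precisely the three claims. First I would derive an upper bound on $\WMarM{\Conf}{\MarMatchOpt}$ in terms of the payoff $\MarPayoffS$. Expanding $\WMarM{\Conf}{\MarMatchOpt}$ through its definition and grouping matched pairs, I would apply, for each pair $(\Man, \Woman)$ matched in $\MarMatchOpt$, the stability inequality $\ManCompS + \WomanCompS \ge 0$ rearranged as $\WomanPrefS \N + \ManPriS \le \ManCompS + \WomanUtilS$ (valid since $\MarOutcomeS$ is stable for $\MarketConf{\Conf}$, hence the stability inequality holds for \emph{every} pair), together with the individual-rationality bound $\WomanUtilS \ge \WomanResS = \WomanPrefUnmatchedS \N$ for each woman unmatched in $\MarMatchOpt$. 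This gives
\[
  \WMarM{\Conf}{\MarMatchOpt} \le \sum_{\IsMatched{\Man}{\MarMatchOpt}} \ManCompMatchS{\MarMatchOpt} + \sum_{\EachWoman} \WomanUtilS + \MenPriUnmatchedS{\MarMatchOpt},
\]
with equality if and only if $\ManCompS + \WomanCompS = 0$ for every pair matched in $\MarMatchOpt$ and $\WomanUtilS = \WomanResS$ for every woman unmatched in $\MarMatchOpt$; the latter is exactly Claim~\ClaimMaxWeightUnmatchedWomen.

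Next I would prove the matching reverse inequality
\[
  \sum_{\IsMatched{\Man}{\MarMatchOpt}} \ManCompMatchS{\MarMatchOpt} + \MenPriUnmatchedS{\MarMatchOpt} \le \sum_{\IsMatchedS{\Man}} \ManCompMatchS{\MarMatch} + \MenPriUnmatchedS{\MarMatch}.
\]
To this end I would introduce the truncated payoff $\MenUtilp$ with $\ManUtilp{\Man} = \ManResp{\Man}$ whenever $\IsUnmatched{\Man}{\MarMatchOpt}$ and $\ManUtilp{\Man} = \ManUtilS$ otherwise. Since $\MarOutcomeS$ is stable, feasibility forces $\ManUtilS = \ManResp{\Man}$ for every man unmatched in $\MarMatch$; hence $\MenUtilp$ equals $\MenResp$ on every man unmatched in $\MarMatch$ or in $\MarMatchOpt$, so Lemma~\ref{lem:u-equal}, applied to $\MarMatch$, $\MarMatchOpt$, and $\MenUtilp$, yields equality of the two sides evaluated at $\MenUtilp$. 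Replacing $\MenUtilp$ by the true $\MenUtil$ leaves the $\MarMatchOpt$-side unchanged, because the modified men are unmatched in $\MarMatchOpt$ and contribute only through $\MenPriUnmatchedS{\MarMatchOpt}$, while it can only \emph{increase} the $\MarMatch$-side, because those men are matched in $\MarMatch$ and contribute through $\ManCompMatchS{\MarMatch}$, which is increasing in the utility with $\ManUtilS \ge \ManResp{\Man} = \ManUtilp{\Man}$. This delivers the displayed reverse inequality.

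Finally I would assemble the sandwich. Maximality of $\MarMatchOpt$ gives $\WMarM{\Conf}{\MarMatch} \le \WMarM{\Conf}{\MarMatchOpt}$; carrying the common term $\sum_{\EachWoman}\WomanUtilS$ along, the upper bound followed by the reverse inequality shows $\WMarM{\Conf}{\MarMatchOpt} \le \sum_{\IsMatchedS{\Man}} \ManCompMatchS{\MarMatch} + \sum_{\EachWoman} \WomanUtilS + \MenPriUnmatchedS{\MarMatch}$, and the right-hand side equals $\WMarM{\Conf}{\MarMatch}$ by Lemma~\ref{lem:stable-weight}. Thus the chain returns to $\WMarM{\Conf}{\MarMatch}$ and every inequality is an equality. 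Tightness of the upper bound is Claim~\ClaimMaxWeightUnmatchedWomen; equality of the two sides of the reverse inequality is Claim~\ClaimMaxWeightUEqual; and tracing that equality back to the per-man comparison forces $\ManCompMatchS{\MarMatch} = \ManComp{\Man}{\MOf{\MarMatch}{\Man}}{\MenResp}$, hence $\ManUtilS = \ManResp{\Man}$, for each man matched in $\MarMatch$ but unmatched in $\MarMatchOpt$, which together with the men unmatched in both matchings yields Claim~\ClaimMaxWeightUnmatchedMen. I expect the main obstacle to be the bookkeeping in the reverse inequality: verifying that each modified man contributes to exactly one side of the Lemma~\ref{lem:u-equal} identity and with the correct monotone direction, and confirming that the hypothesis of Lemma~\ref{lem:u-equal} holds, which rests on the feasibility of the stable outcome $\MarOutcomeS$.
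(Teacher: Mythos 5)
Your proof is correct, but it takes a genuinely different route from the paper's. The paper establishes Claims~\ClaimMaxWeightUnmatchedMen\ and~\ClaimMaxWeightUnmatchedWomen\ combinatorially: it decomposes the symmetric difference of $\MarMatch$ and $\MarMatchOpt$ into alternating paths and cycles, and for each man or woman unmatched in $\MarMatchOpt$ it analyzes the path ending at that agent, summing stability equalities along $\MarMatch$-edges against stability inequalities along $\MarMatchOpt$-edges and invoking the maximality of $\WMarM{\Conf}{\MarMatchOpt}$ \emph{locally} to forbid augmentations along each path; this requires an extensive case analysis on the type of the other endpoint and on whether $\PRevealed{\Conf}{\cdot}$ holds for the endpoint men (Cases 1.1--1.4, 2.1--2.2), exploiting the integrality of the $\WomenPref$ values and the scale separation $\NN > \ManPriS > \ResCS$. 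Only afterwards does it obtain Claim~\ClaimMaxWeightUEqual\ from Claim~\ClaimMaxWeightUnmatchedMen\ via Lemma~\ref{lem:u-equal}. You invert this order: you apply Lemma~\ref{lem:u-equal} \emph{first}, to the truncated vector $\MenUtilp$ --- which is legitimate because that lemma holds for an arbitrary utility vector satisfying its hypothesis (its proof never uses stability of the vector), and because feasibility supplies $\ManUtilS = \ManResp{\Man}$ for men unmatched in $\MarMatch$ --- and then close a single global sandwich: maximality gives $\WMarM{\Conf}{\MarMatch} \leq \WMarM{\Conf}{\MarMatchOpt}$, a duality-style estimate in the spirit of Lemma~\ref{lem:duality} bounds $\WMarM{\Conf}{\MarMatchOpt}$ by $\sum_{\IsMatched{\Man}{\MarMatchOpt}} \ManCompMatchS{\MarMatchOpt} + \sum_{\EachWoman} \WomanUtilS + \MenPriUnmatchedS{\MarMatchOpt}$, the truncation argument bounds this by the corresponding $\MarMatch$-expression, and Lemma~\ref{lem:stable-weight} closes the loop at $\WMarM{\Conf}{\MarMatch}$; all three claims then fall out of complementary slackness (termwise nonnegativity is what turns the tight aggregate inequalities into the per-agent equalities, and the men unmatched in both matchings are handled by feasibility). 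I verified the delicate points: the hypothesis of Lemma~\ref{lem:u-equal} does hold for $\MenUtilp$, the substitution of $\MenUtil$ for $\MenUtilp$ leaves the $\MarMatchOpt$-side unchanged and only increases the $\MarMatch$-side since $f_{i,j}$ is increasing in $u_i$ and $\ManUtilS \geq \ManResp{\Man}$ by individual rationality, and the tightness bookkeeping is sound. What your approach buys is considerable: it replaces several pages of path-endpoint case analysis with one chain of three inequalities, uses maximality of $\MarMatchOpt$ exactly once rather than repeatedly, and delivers the tightness facts ($\ManCompS + \WomanCompS = 0$ on pairs matched in $\MarMatchOpt$) that constitute the compatibility conclusion of Lemma~\ref{lem:max-weight-stable} essentially for free. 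What the paper's local analysis buys, by contrast, is self-containedness at the level of individual paths --- the same decomposition machinery it reuses in Lemma~\ref{lem:uap-weight-marm} --- but for this particular lemma your global argument is strictly leaner.
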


\begin{proof}
  Let $\PathCycleColl$ denote the symmetric difference of $\MarMatch$ and $\MarMatchOpt$.
  It is easy to see that $\PathCycleColl$ is a collection of positive length paths and cycles.
  In order to prove Claim~\ClaimMaxWeightUnmatchedMen\ of the lemma, consider an arbitrary man $\Man$ such that $\IsUnmatched{\Man}{\MarMatchOpt}$.
  If $\IsUnmatchedS{\Man}$, then the stability of $\MarPayoffS$ establishes the claim, so assume that $\IsMatchedS{\Man}$.
  Then, $\Man$ is an endpoint of a path in $\PathCycleColl$;
  let $\Path$ denote this path.
  The edges of $\Path$ alternate between edges that are matched in $\MarMatch$ and edges that are matched in $\MarMatchOpt$.
  We consider two cases.

  \newcommand{\pl}{k}

  Case 1: The other endpoint of $\Path$ is a man $\Manp$ such that $\IsUnmatched{\Manp}{\MarMatch}$ and $\IsMatched{\Man}{\MarMatchOpt}$.
  Let $\Path$ be $\Seq{\Man = \ManSub{1}, \WomanSub{1}, \dotsc, \WomanSub{\pl}, \ManSub{\pl+1} = \Manp}$ for some $\pl \geq 1$.
  Then, since $\MOf{\MarMatch}{\ManSub{\ell}} = \WomanSub{\ell}$ for $1 \leq \ell \leq \pl$, the stability of $\MarOutcomeS$ implies that
  \begin{equation*}
    \sum_{1 \leq \ell \leq \pl} \sbr{\ManComp{\ManSub{\ell}}{\WomanSub{\ell}}{\MenUtil} + \WomanComp{\WomanSub{\ell}}{\ManSub{\ell}}{\WomenUtil}} = 0 .
  \end{equation*}
  The stability of $\MarOutcomeS$ also implies that
  \begin{equation*}
    \sum_{1 \leq \ell \leq \pl} \sbr{\ManComp{\ManSub{\ell+1}}{\WomanSub{\ell}}{\MenUtil} + \WomanComp{\WomanSub{\ell}}{\ManSub{\ell+1}}{\WomenUtil}} \geq 0 .
  \end{equation*}
  By subtracting the latter equation from the former, we obtain the following, since $\PLeast{\Conf}{\MarMatch}$ and $\PLeast{\Conf}{\MarMatchOpt}$ imply $\ManComp{\ManSub{\ell}}{\WomanSub{\ell}}{\MenUtil} = \ManComp{\ManSub{\ell}}{\WomanSub{\ell-1}}{\MenUtil}$ for $1 < \ell \leq \pl$:
  \begin{align}
    0
    &\geq \ManComp{\Man}{\WomanSub{1}}{\MenUtil} - \ManComp{\Manp}{\WomanSub{\pl}}{\MenUtil} + \sum_{1 \leq \ell \leq \pl} \sbr{\WomanComp{\WomanSub{\ell}}{\ManSub{\ell}}{\WomenUtil} - \WomanComp{\WomanSub{\ell}}{\ManSub{\ell+1}}{\WomenUtil}} \nonumber \\
    &= \pr{\ManComp{\Man}{\WomanSub{1}}{\MenUtil} - \ManPri{\Man}} - \pr{\ManComp{\Manp}{\WomanSub{\pl}}{\MenUtil} - \ManPri{\Manp}} + \NN \sum_{1 \leq \ell \leq \pl} \pr{\WomanPref{\ManSub{\ell+1}}{\WomanSub{\ell}} - \WomanPref{\ManSub{\ell}}{\WomanSub{\ell}}} . \label{eq:alt-man-1}
  \end{align}
  Observe that
  \begin{equation}
    \ManComp{\Man}{\WomanSub{1}}{\MenUtil} - \ManPri{\Man} \geq \ManComp{\Man}{\WomanSub{1}}{\MenResp} - \ManPri{\Man} \geq \ResC{\Man} - \ManPri{\Man} > -\NN ,\label{eq:obs-man-1}
  \end{equation}
  since the individual rationality of $\MarPayoffS$ implies $\ManUtil{\Man} \geq \ManResp{\Man}$ and since $\PLeast{\Conf}{\MarMatch}$ holds.
  Also observe that
  \begin{equation*}
    \ManComp{\Manp}{\WomanSub{\pl}}{\MenUtil} - \ManPri{\Manp} = \ManComp{\Manp}{\WomanSub{\pl}}{\MenResp} - \ManPri{\Manp} \leq 0 ,
  \end{equation*}
  since the stability of $\MarOutcomeS$ implies $\ManUtil{\Manp} = \ManResp{\Manp}$ and since $\PLeast{\Conf}{\MarMatchOpt}$ holds.
  These two observations imply that the third term in~\eqref{eq:alt-man-1} is nonpositive, for otherwise it would be at least $\NN$ (since all $\WomenPref$ values are integers), violating the inequality.
  The third term in~\eqref{eq:alt-man-1} is nonnegative, for otherwise $\MarMatchOpt$ could be augmented along $\Path$ to yield another matching $\MarMatchp$ such that $\PLeast{\Conf}{\MarMatchp}$ and $\WMarM{\Conf}{\MarMatchp} - \WMarM{\Conf}{\MarMatchOpt} \geq \NN - \ManPri{\Manp} > 0$ (since all $\WomenPref$ values are integers and $\NN > \ManPri{\Manp}$), contradicting the definition of $\MarMatchOpt$.
  Thus, we may rewrite inequality~(\ref{eq:alt-man-1}) as
  \begin{equation}
    \ManComp{\Manp}{\WomanSub{\pl}}{\MenResp} - \ManPri{\Manp} \geq \ManComp{\Man}{\WomanSub{1}}{\MenUtil} - \ManPri{\Man} .\label{eq:alt-man-1-rw}
  \end{equation}
  We consider the following four subcases.

  Case 1.1: $\PRevealed{\Conf}{\Man}$ and $\PRevealed{\Conf}{\Manp}$.
  Since $\PLeast{\Conf}{\MarMatchOpt}$ and $\PRevealed{\Conf}{\Manp}$ hold, we deduce that $\ManResp{\Manp} = \ManPri{\Manp} \lexp{\ManPref{\Manp}{\WomanSub{\pl}}}$, and hence that LHS of~(\ref{eq:alt-man-1-rw}) is $0$.
  Since $\PLeast{\Conf}{\MarMatch}$ and $\PRevealed{\Conf}{\Man}$ hold, we deduce that $\ManResp{\Man} = \ManPri{\Man} \lexp{\ManPref{\Manp}{\WomanSub{1}}}$, and hence that RHS of~(\ref{eq:alt-man-1-rw}) is at least $0$ by the first inequality of~(\ref{eq:obs-man-1}).
  Thus, we conclude that $\ManComp{\Man}{\WomanSub{1}}{\MenUtil} = \ManPri{\Man}$, which implies $\ManUtil{\Man} = \ManResp{\Man}$.

  Case 1.2: $\lnot \PRevealed{\Conf}{\Man}$ and $\lnot \PRevealed{\Conf}{\Manp}$.
  First observe that $\ManPri{\Manp} > \ManPri{\Man}$, for otherwise $\MarMatchOpt$ could be augmented along $\Path$ to yield another matching $\MarMatchp$ such that $\PLeast{\Conf}{\MarMatchp}$ and $\WMarM{\Conf}{\MarMatchp} - \WMarM{\Conf}{\MarMatchOpt} = \ManPri{\Man} - \ManPri{\Manp} > 0$, contradicting the definition of $\MarMatchOpt$.
  Since $\PLeast{\Conf}{\MarMatchOpt}$ holds and $\PRevealed{\Conf}{\Manp}$ does not hold, we deduce that $\ManResp{\Manp} = \ResC{\Manp} \lexp{\ManPref{\Manp}{\WomanSub{\pl}}}$, and hence that LHS of~(\ref{eq:alt-man-1-rw}) is $\ResC{\Manp} - \ManPri{\Manp}$.
  However, the RHS of~(\ref{eq:alt-man-1-rw}) is at least $\ResC{\Man} - \ManPri{\Man}$ by~(\ref{eq:obs-man-1}), contradicting the inequality $\ManPri{\Manp} > \ManPri{\Man}$.

  Case 1.3: $\lnot \PRevealed{\Conf}{\Man}$ and $\PRevealed{\Conf}{\Manp}$.
  This case is not possible, for otherwise $\MarMatchOpt$ could be augmented along $\Path$ to yield another matching $\MarMatchp$ such that $\PLeast{\Conf}{\MarMatchp}$ and $\WMarM{\Conf}{\MarMatchp} - \WMarM{\Conf}{\MarMatchOpt} = \ManPri{\Man} - \ResC{\Man} > 0$ (since $\ManPri{\Man} > \ResC{\Man}$ for each man $\Man$), contradicting the definition of $\MarMatchOpt$.

  Case 1.4: $\PRevealed{\Conf}{\Man}$ and $\lnot \PRevealed{\Conf}{\Manp}$.
  Since $\PLeast{\Conf}{\MarMatchOpt}$ holds and $\PRevealed{\Conf}{\Manp}$ does not hold, we deduce that $\ManResp{\Manp} = \ResC{\Manp} \lexp{\ManPref{\Manp}{\WomanSub{\pl}}}$, and hence that LHS of~(\ref{eq:alt-man-1-rw}) is $\ResC{\Manp} - \ManPri{\Manp} < 0$.
  Since $\PLeast{\Conf}{\MarMatch}$ and $\PRevealed{\Conf}{\Man}$ hold, we deduce that $\ManResp{\Man} = \ManPri{\Man} \lexp{\ManPref{\Manp}{\WomanSub{1}}}$, and hence that RHS of~(\ref{eq:alt-man-1-rw}) is at least $0$ by the first inequality of~(\ref{eq:obs-man-1}), a contradiction.

  Case 2: The other endpoint of $\Path$ is a woman $\Woman$ such that $\IsMatchedS{\Woman}$ and $\IsUnmatched{\Woman}{\MarMatchOpt}$.
  Let $\Path$ be $\Seq{\Man = \ManSub{1}, \WomanSub{1}, \dotsc, \ManSub{\pl}, \WomanSub{\pl} = \Woman}$ for some $\pl \geq 1$.
  Then, since $\MOf{\MarMatch}{\ManSub{\ell}} = \WomanSub{\ell}$ for $1 \leq \ell \leq \pl$, the stability of $\MarOutcomeS$ implies that
  \begin{equation*}
    \sum_{1 \leq \ell \leq \pl} \sbr{\ManComp{\ManSub{\ell}}{\WomanSub{\ell}}{\MenUtil} + \WomanComp{\WomanSub{\ell}}{\ManSub{\ell}}{\WomenUtil}} = 0 .
  \end{equation*}
  The stability of $\MarOutcomeS$ also implies that
  \begin{equation*}
    \sum_{1 < \ell \leq \pl} \sbr{\ManComp{\ManSub{\ell}}{\WomanSub{\ell-1}}{\MenUtil} + \WomanComp{\WomanSub{\ell-1}}{\ManSub{\ell}}{\WomenUtil}} \geq 0 .
  \end{equation*}
  By subtracting the latter equation from the former, we obtain the following, since $\PLeast{\Conf}{\MarMatch}$ and $\PLeast{\Conf}{\MarMatchOpt}$ imply $\ManComp{\ManSub{\ell}}{\WomanSub{\ell}}{\MenUtil} = \ManComp{\ManSub{\ell}}{\WomanSub{\ell-1}}{\MenUtil}$ for $1 < \ell \leq \pl$:
  \begin{align}
    0
    &\geq \ManComp{\Man}{\WomanSub{1}}{\MenUtil} + \sum_{1 \leq \ell \leq \pl} \WomanComp{\WomanSub{\ell}}{\ManSub{\ell}}{\WomenUtil} - \sum_{1 < \ell \leq \pl} \WomanComp{\WomanSub{\ell-1}}{\ManSub{\ell}}{\WomenUtil} \nonumber \\
    &= \pr{\ManComp{\Man}{\WomanSub{1}}{\MenUtil} - \ManPri{\Man}} + \WomanUtil{\Woman} + \NN \sbr{ \sum_{1 < \ell \leq \pl} \WomanPref{\ManSub{\ell}}{\WomanSub{\ell-1}} - \sum_{1 \leq \ell \leq \pl} \WomanPref{\ManSub{\ell}}{\WomanSub{\ell}}} \nonumber \\
    &= \pr{\ManComp{\Man}{\WomanSub{1}}{\MenUtil} - \ManPri{\Man}} + \pr{\WomanUtil{\Woman} - \WomanRes{\Woman}} + \NN \sbr{ \pr{\WomanPrefUnmatched{\Woman} + \sum_{1 < \ell \leq \pl} \WomanPref{\ManSub{\ell}}{\WomanSub{\ell-1}}} - \sum_{1 \leq \ell \leq \pl} \WomanPref{\ManSub{\ell}}{\WomanSub{\ell}}} . \label{eq:alt-man-2}
  \end{align}
  Observe that~(\ref{eq:obs-man-1}) holds for the same reasons pointed out in Case 1, and that the second term in~\eqref{eq:alt-man-2} is nonnegative by the individual rationality of $\MarPayoffS$.
  Moreover, the third term is nonnegative, for otherwise $\MarMatchOpt$ could be augmented along $\Path$ to yield another matching $\MarMatchp$ such that $\PLeast{\Conf}{\MarMatchp}$ and $\WMarM{\Conf}{\MarMatchp} - \WMarM{\Conf}{\MarMatchOpt} \geq \NN$ (since all $\WomenPref$ values are integers), contradicting the definition of $\MarMatchOpt$.
  We consider two subcases.

  Case 2.1: $\PRevealed{\Conf}{\Man}$.
  Then $\PLeast{\Conf}{\MarMatch}$ implies that $\ManResp{\Man} = \ManPri{\Man} \lexp{\ManPref{\Manp}{\WomanSub{1}}}$, and we deduce that the first term in~(\ref{eq:alt-man-2}) is at least $0$ by the first inequality of~(\ref{eq:obs-man-1}).
  Thus, we conclude that $\WomanUtil{\Woman} = \WomanRes{\Woman}$ and that $\ManComp{\Man}{\WomanSub{1}}{\MenUtil} = \ManPri{\Man}$, which implies $\ManUtil{\Man} = \ManResp{\Man}$.

  Case 2.2: $\lnot \PRevealed{\Conf}{\Man}$.
  In this case the third term in~(\ref{eq:alt-man-2}) is positive, and thus is at least $\NN$ (since all $\WomenPref$ values are integers), for otherwise (if it is $0$), $\MarMatchOpt$ could be augmented along $\Path$ to yield another matching $\MarMatchp$ such that $\PLeast{\Conf}{\MarMatchp}$ and $\WMarM{\Conf}{\MarMatchp} - \WMarM{\Conf}{\MarMatchOpt} = \ManPri{\Man} - \ResC{\Man} > 0$, contradicting the definition of $\MarMatchOpt$.
  Since (\ref{eq:obs-man-1}) implies that the first term of~(\ref{eq:alt-man-2}) is greater than $-\NN$, we deduce that the sum of the three terms in~(\ref{eq:alt-man-2}) is positive, a contradiction.

  In order to prove Claim~\ClaimMaxWeightUnmatchedWomen\ of the lemma, consider an arbitrary woman $\Woman$ such that $\IsUnmatched{\Woman}{\MarMatchOpt}$.
  If $\IsUnmatchedS{\Woman}$, then the stability of $\MarPayoffS$ establishes the claim, so assume that $\IsMatchedS{\Woman}$.
  Then, $\Woman$ is an endpoint of a path in $\PathCycleColl$;
  let $\Path$ denote this path.
  The edges of $\Path$ alternate between edges that are matched in $\MarMatch$ and edges that are matched in $\MarMatchOpt$.
  We consider two cases.

  Case 1: The other endpoint of $\Path$ is a woman $\Womanp$ such that $\IsUnmatched{\Womanp}{\MarMatch}$ and $\IsMatched{\Woman}{\MarMatchOpt}$.
  Let $\Path$ be $\Seq{\Woman = \WomanSub{1}, \ManSub{1}, \dotsc, \ManSub{\pl}, \WomanSub{\pl+1} = \Womanp}$ for some $\pl \geq 1$.
  Then, since $\MOf{\MarMatch}{\ManSub{\ell}} = \WomanSub{\ell}$ for $1 \leq \ell \leq \pl$, the stability of $\MarOutcomeS$ implies that
  \begin{equation*}
    \sum_{1 \leq \ell \leq \pl} \sbr{\ManComp{\ManSub{\ell}}{\WomanSub{\ell}}{\MenUtil} + \WomanComp{\WomanSub{\ell}}{\ManSub{\ell}}{\WomenUtil}} = 0 .
  \end{equation*}
  The stability of $\MarOutcomeS$ also implies that
  \begin{equation*}
    \sum_{1 \leq \ell \leq \pl} \sbr{\ManComp{\ManSub{\ell}}{\WomanSub{\ell+1}}{\MenUtil} + \WomanComp{\WomanSub{\ell+1}}{\ManSub{\ell}}{\WomenUtil}} \geq 0 .
  \end{equation*}
  By subtracting the latter equation from the former, we obtain the following, since $\PLeast{\Conf}{\MarMatch}$ and $\PLeast{\Conf}{\MarMatchOpt}$ imply $\ManComp{\ManSub{\ell}}{\WomanSub{\ell}}{\MenUtil} = \ManComp{\ManSub{\ell}}{\WomanSub{\ell+1}}{\MenUtil}$ for $1 \leq \ell \leq \pl$:
  \begin{align}
    0
    &\geq \sum_{1 \leq \ell \leq \pl} \sbr{\WomanComp{\WomanSub{\ell}}{\ManSub{\ell}}{\WomenUtil} - \WomanComp{\WomanSub{\ell+1}}{\ManSub{\ell}}{\WomenUtil}} \nonumber \\
    &= \WomanUtil{\Woman} - \WomanUtil{\Womanp} + \NN \sum_{1 \leq \ell \leq \pl} \pr{\WomanPref{\ManSub{\ell}}{\WomanSub{\ell+1}} - \WomanPref{\ManSub{\ell}}{\WomanSub{\ell}}} \nonumber \\
    &= \pr{\WomanUtil{\Woman} - \WomanRes{\Woman}} + \NN \sbr{ \pr{\WomanPrefUnmatched{\Woman} + \sum_{1 \leq \ell \leq \pl} \WomanPref{\ManSub{\ell}}{\WomanSub{\ell+1}}} - \pr{\WomanPrefUnmatched{\Womanp} + \sum_{1 \leq \ell \leq \pl} \WomanPref{\ManSub{\ell}}{\WomanSub{\ell}}}} \label{eq:alt-woman-1},
  \end{align}
  where the last equality follows since $\IsUnmatched{\Womanp}{\MarMatch}$ implies $\WomanUtil{\Womanp} = \WomanRes{\Womanp} = \WomanPrefUnmatched{\Womanp} \NN$.
  The first term in~\eqref{eq:alt-woman-1} is nonnegative by the individual rationality of $\MarPayoffS$.
  The second term is nonnegative, for otherwise $\MarMatchOpt$ could be augmented along $\Path$ to yield another matching $\MarMatchp$ such that $\PLeast{\Conf}{\MarMatchp}$ and $\WMarM{\Conf}{\MarMatchp} - \WMarM{\Conf}{\MarMatchOpt} \geq \NN$ (since all $\WomenPref$ values are integers), contradicting the definition of $\MarMatchOpt$.
  Thus, we conclude that $\WomanUtil{\Woman} = \WomanRes{\Woman}$.

  Case 2: The other endpoint of $\Path$ is a man $\Man$ such that $\IsMatchedS{\Man}$ and $\IsUnmatched{\Man}{\MarMatchOpt}$.
  This case is identical to case 2 in the proof of Claim~\ClaimMaxWeightUnmatchedMen\ above, and hence we conclude that $\WomanUtil{\Woman} = \WomanRes{\Woman}$.

  We now prove Claim~\ClaimMaxWeightUEqual\ of the lemma.
  Stability of $\MarOutcomeS$ implies $\ManUtil{\Man} = \ManResp{\Man}$ if $\IsUnmatchedS{\Man}$.
  Thus Claim~\ClaimMaxWeightUEqual\ follows using Claim~\ClaimMaxWeightUnmatchedMen\ and Lemma~\ref{lem:u-equal}.
  \QEDWrap
\end{proof}

\begin{proof}[\ConfOrFull{}{Proof of }Lemma~\ref{lem:max-weight-stable}]
  Let $\MarMatchOpt$ be a matching in $\MarMatchSet$ such that $\WMarM{\Conf}{\MarMatchOpt} = \max_{\MarMatchp \in \MarMatchSet} \WMarM{\Conf}{\MarMatchp}$.
  We show that $\WMarM{\Conf}{\MarMatch} = \WMarM{\Conf}{\MarMatchOpt}$ and that $\MarMatchOpt$ is compatible with the stable payoff $\MarPayoffS$.
  We have
  \begin{align}
    \WMarM{\Conf}{\MarMatchOpt} \geq \WMarM{\Conf}{\MarMatch}
    &= \sum_{\IsMatchedS{\Man}} \ManCompMatchS{\MarMatch} + \sum_{\EachWoman} \WomanUtilS + \MenPriUnmatchedS{\MarMatch} \notag\\
    &= \sum_{\IsMatched{\Man}{\MarMatchOpt}} \ManCompMatchS{\MarMatchOpt} + \sum_{\EachWoman} \WomanUtilS + \MenPriUnmatchedS{\MarMatchOpt} \label{eq:mar-match-weight-diff},
  \end{align}
  where the first equality follows from Lemma~\ref{lem:stable-weight}, and the second equality follows from Claim~\ClaimMaxWeightUEqual\ of Lemma~\ref{lem:max-weight-compatible}.
  Then, by using Claim~\ClaimMaxWeightUnmatchedWomen\ of Lemma~\ref{lem:max-weight-compatible} and Lemma~\ref{lem:women-payment}, we may rewrite~\eqref{eq:mar-match-weight-diff} as
  \begin{equation}
    \label{eq:mar-match-opt}
    0 \geq \sum_{\IsMatched{\Man}{\MarMatchOpt}} \ManCompMatchS{\MarMatchOpt} + \sum_{\IsMatched{\Woman}{\MarMatchOpt}} \WomanCompMatchS{\MarMatchOpt} = \sum_{\MOf{\MarMatchOpt}{\Man} = \Woman \land \Woman \not= \Unmatched} \sbr{\ManCompS + \WomanCompS} .
  \end{equation}
  Then, since the stability of $\MarPayoffS$ implies that $\ManCompS + \WomanCompS \geq 0$ for each man-woman pair $(\Man, \Woman)$, we deduce the following:
  the inequality of~\eqref{eq:mar-match-opt} is tight, and hence that of~\eqref{eq:mar-match-weight-diff} is also tight;
  $\ManCompS + \WomanCompS = 0$ for each man-woman pair $(\Man, \Woman)$ matched in $\MarMatchOpt$.
  Thus, Claims~\ClaimMaxWeightUnmatchedMen\ and \ClaimMaxWeightUnmatchedWomen\ of Lemma~\ref{lem:max-weight-compatible} imply that $\MarMatchOpt$ is compatible with the stable payoff $\MarPayoffS$.
  \QEDWrap
\end{proof}

\section{College Admissions with Indifferences} \label{app:college}

Our results for stable marriage markets with indifferences
can be extended to college admissions markets with indifferences
by transforming each student to a man and each slot of a college
to a woman, in a standard fashion.
For the sake of completeness, below
we provide the formal definition of the model
for college admissions markets with indifferences
and summarize our results.

The college admissions market involves a set $I$ of students
and a set $J$ of colleges. We assume that the sets
$I$ and $J$ are disjoint and do not contain the element $\nil$,
which we use to denote being unmatched.
The preference relation of each student $i \in I$
is specified by a binary relation $\succeq_i$ over $J \cup \{\jnil\}$
that satisfies transitivity and totality.
Similarly, the preference relation of each college $j \in J$
over individual students is specified by a binary relation
$\succeq_j$ over $I \cup \{\inil\}$
that satisfies transitivity and totality.
Furthermore, each college $j \in J$ is associated with a capacity $c_j$. 
We denote this \emph{college admissions market} as $(I, J,
(\succeq_i)_{i \in I}, (\succeq_j)_{j \in J}, c)$.

The colleges' preference relations over individual students can be
extended to group preference relations using
the notion of responsiveness
introduced by Roth~\cite{Rot85}. We say that
a transitive and reflexive relation $\succeq_j'$ over the power
set $2^I$ is \emph{responsive to the preference relation $\succeq_j$}
if the following conditions hold.
\begin{enumerate}
\item For any $I_0 \subseteq I$ and $i_1 \in I \setminus I_0$,
we have $i_1 \succeq_j \inil$ if and only if
$I_0 \cup \{i_1\} \succeq_j' I_0$.
\item For any $I_0 \subseteq I$ and $i_1, i_2 \in I \setminus I_0$,
we have $i_1 \succeq_j i_2$ if and only if
$I_0 \cup \{i_1\} \succeq_j' I_0 \cup \{i_2\}$.
\end{enumerate}
Furthermore, we say that a relation $\succeq_j'$ is
\emph{minimally responsive\footnote{
Our notion of minimal responsiveness is the same as that
studied by Erdil and Ergin~\cite{EE08,EE15},
Chen~\cite{Che12}, Chen and Ghosh~\cite{CG10},
and Kamiyama~\cite{Kam14}, though they did not use this terminology.}
to the preference relation $\succeq_j$}
if it is responsive to the preference relation $\succeq_j$
and does not strictly contain
another relation that
is responsive to the preference relation $\succeq_j$.

A \emph{(capacitated) matching} is a function
$\mu \colon I \to J \cup \{\jnil\}$ such that
for any college $j \in J$, we have $\mu(i) = j$ for at most $c_j$
students $i \in I$. Given a matching $\mu$ and a college $j \in J$,
we denote $\mu(j) = \{ i \in I \colon \mu(i) = j \}$.

A matching $\mu$ is \emph{individually rational} if
$j \succeq_i \jnil$ and $i \succeq_j \inil$
for every student $i \in I$ and college $j \in J$ such that $\mu(i) = j$.
An individually rational matching $\mu$ is \emph{weakly stable} if
for any student $i \in I$ and college $j \in J$, either
$\mu(i) \succeq_i j$ or both of the following conditions hold.
\begin{enumerate}
\item For every student $i' \in I$ such that $\mu(i') = j$,
  we have $i' \succeq_j i$.
\item Either $\abs{\mu(j)} = c_j$ or $\inil \succeq_j i$.
\end{enumerate}
(Otherwise, such a student $i$ and college $j$ form a \emph{strongly blocking pair}.)

Let $(\succeq_j')_{j \in J}$ be the group preferences associated with
the colleges. For any matchings $\mu$ and $\mu'$, we say that the
binary relation $\mu \succeq \mu'$ holds if $\mu(i) \succeq_i \mu'(i)$
and $\mu(j) \succeq_j' \mu'(j)$ for every student $i \in I$
and college $j \in J$. A weakly stable matching $\mu$
is \emph{Pareto-stable} if for every matching $\mu'$ such that
$\mu' \succeq \mu$, we have $\mu \succeq \mu'$.
(Otherwise, the matching $\mu$ is
not \emph{Pareto-optimal} because it is \emph{Pareto-dominated}
by the matching $\mu'$.)

A \emph{mechanism} is an algorithm that, given a college admissions
market $(I, J, \allowbreak {(\succeq_i)_{i \in I}}, \allowbreak
{(\succeq_j)_{j \in J}} , c)$, produces a matching $\mu$.
A mechanism is said to be 
\emph{group strategyproof (for the students)}
if for any two different preference profiles
$(\succeq_i)_{i \in I}$ and $(\succeq'_i)_{i \in I}$,
there exists a student $i_0 \in I$ with preference relation
$\succeq_{i_0}$ different from $\succeq_{i_0}'$ such that
$\mu(i_0) \succeq_{i_0} \mu'(i_0)$, where $\mu$ and $\mu'$
are the matchings produced by the mechanism given $(I, J,
(\succeq_i)_{i \in I}, (\succeq_j)_{j \in J}, c)$ and 
$(I, J, (\succeq'_i)_{i \in I}, (\succeq_j)_{j \in J}, c)$,
respectively. (Such a student $i_0$ belongs to the coalition but
is not matched to a strictly preferred college
by expressing preference relation $\succeq_{i_0}'$
instead of his true preference relation $\succeq_{i_0}$.)

Given a college admissions market
$(I, J, (\succeq_i)_{i \in I}, (\succeq_j)_{j \in J}, c)$,
we can construct a tiered-slope market
$\calM = (I, J^*, \pi, N, \lambda, a, b)$ with
the set of men being the set $I$ of students
and the set of women being the set
\begin{equation*}
J^* = \{ (j, k) \in J \times \bbZ \colon 1 \leq k \leq c_j \}
\end{equation*}
of available slots in the colleges in a similar way
as described in Sect.~\ref{sec:marriage}.
If $j^*_1 = (j, k_1)$ and $j^*_2 = (j, k_2)$ are slots
in the same college $j$, then we simply have
$b_{i, j^*_1} = b_{i, j^*_2}$
for every $i \in I \cup \{\inil\}$ and
$a_{i, j^*_1} = a_{i, j^*_2}$ for every $i \in I$.
So, if college $j$ weakly prefers a subset $I_1$ of students
to another subset $I_2$ of students under its minimally
responsive preferences, then the total utility of
college $j$ for the students in $I_1$ plus the utilities
for its unmatched slots is greater than or equal
to that of $I_2$.

Using this tiered-slope market $\calM$, we can compute
a man-optimal outcome $(\mu^*, u, v)$. The matching $\mu^*$
from students in $I$ to slots in $J^* \cup \{\jnil\}$
induces a matching $\mu$ from students in $I$ to
colleges in $J \cup \{\jnil\}$.
Pareto-stability, group strategyproofness, and
polynomial-time computability all generalize
to the college admissions setting
in a straightforward manner.
The theorem below summarizes these results.

\begin{theorem}
There exists a polynomial-time algorithm that corresponds
to a group strategyproof and Pareto-stable
mechanism for the college admissions market with indifferences,
assuming that the group preferences of the colleges
are minimally responsive.
\end{theorem}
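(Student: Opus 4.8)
The plan is to reduce the college admissions market to a stable marriage market by the classical device of splitting each college $j$ into $c_j$ identical \emph{slots}, and then to invoke the machinery of Sect.~\ref{sec:marriage}. Concretely, I take the men to be the students $I$ and the women to be the slots $J^* = \{(j,k) : j \in J,\ 1 \le k \le c_j\}$, and I build the associated tiered-slope market $\calM = (I, J^*, \pi, N, \lambda, a, b)$ exactly as in Sect.~\ref{sec:marriage}, but with $a_{i,(j,k)}$ and $b_{i,(j,k)}$ depending only on $j$, so that a student is indifferent among all slots of a college and a college ranks a student identically in each of its slots. I then compute a man-optimal outcome $(\mu^*, u, v)$ of $\calM$---which exists by \cite[Property 2]{DG85} and is computable in polynomial time by Theorem~\ref{thm:implementation}---and output the induced matching $\mu$ from students to colleges, defined by $\mu(i) = j$ iff $\mu^*(i) = (j,k)$ for some $k$. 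Three properties must then be verified: weak stability, Pareto-optimality, and group strategyproofness, together with polynomial running time.

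Weak stability transfers slot-by-slot: a college-market strongly blocking pair $(i,j)$ (where $i$ strictly prefers $j$ to $\mu(i)$, and either some $i' \in \mu(j)$ has $i \succ_j i'$ or $j$ has a free slot with $i \succ_j \inil$) corresponds exactly to a slot $(j,k)$ that strictly prefers $i$ to its occupant in $\mu^*$, so weak stability of $\mu^*$ (Lemma~\ref{lem:stability}) yields weak stability of $\mu$. The real content is the connection between \emph{minimal responsiveness} and the additive utilities $b$. The key lemma I would prove is: writing $U_j(S) = \sum_{i \in S} b_{i,j} + (c_j - \abs{S}) b_{\inil, j}$ for the total utility of an admissible set $S$ together with its empty slots, if $I_1 \succeq_j' I_2$ under the minimally responsive group preference then $U_j(I_1) \ge U_j(I_2)$, with the refinement that $U_j(I_1) = U_j(I_2)$ forces $I_2 \succeq_j' I_1$ as well. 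After padding each set to size $c_j$ with copies of $\inil$, $\succeq_j'$ is the transitive-reflexive closure of the elementary comparisons licensed by responsiveness, and each elementary move replaces an element $x$ by an element $y$ with $y \succeq_j x$, i.e.\ with $b_{y,j} \ge b_{x,j}$; thus $U_j$ is nondecreasing along any witnessing chain, and strictly increasing unless every move swaps two $\succeq_j$-indifferent elements, in which case the chain reverses.

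With this lemma, Pareto-optimality of $\mu$ follows by rerunning the duality argument of Theorem~\ref{thm:optimality} at the slot level. Suppose $\mu' \succeq \mu$ in the college market, and let $\hat\mu$ be any slot-assignment realizing $\mu'$. The student condition $\mu'(i) \succeq_i \mu(i)$ gives $a_{i,\hat\mu(i)} \ge a_{i,\mu^*(i)}$, hence $\sum_i (u_i \lexp{-a_{i,\hat\mu(i)}} - \pi_i) \le \sum_i (u_i \lexp{-a_{i,\mu^*(i)}} - \pi_i)$; the college condition $\mu'(j) \succeq_j' \mu(j)$ together with the key lemma gives $\sum_{s \in J^*} b_{\hat\mu(s),s} \ge \sum_{s \in J^*} b_{\mu^*(s),s}$. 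Applying Lemma~\ref{lem:duality} to $\hat\mu$ and to the stable $\mu^*$ (for which it is tight) squeezes every inequality to equality exactly as in Theorem~\ref{thm:optimality}: equality on the student side yields $a_{i,\hat\mu(i)} = a_{i,\mu^*(i)}$, so $\mu(i) \succeq_i \mu'(i)$, while equality of $\sum_s b_{\cdot,s}$ combined with the per-college inequalities forces $U_j(\mu'(j)) = U_j(\mu(j))$, whence the refinement gives $\mu(j) \succeq_j' \mu'(j)$; thus $\mu \succeq \mu'$. Group strategyproofness transfers directly, since a student controls only his own preference over colleges, which determines only his own $a$-rows in $\calM$ (the slot set $J^*$ depends on $c$ and all $b$-values depend only on the colleges' preferences); two student profiles therefore give tiered-slope markets differing only in the deviating students' $a$-rows, so Theorem~\ref{thm:strategyproofness} supplies a coalition member $i_0$ with $\mu^*(i_0) \succeq_{i_0} \mu'^*(i_0)$, and since a slot of $j$ is $\succeq_{i_0}$-equivalent to $j$ this gives $\mu(i_0) \succeq_{i_0} \mu'(i_0)$. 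Finally, capping each $c_j$ at $\abs{I}$ without loss of generality yields $\abs{J^*} \le \abs{I}\,\abs{J}$, so $\calM$ has polynomially many agents and Theorem~\ref{thm:implementation} runs in polynomial time.

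\textbf{Main obstacle.} The crux is the key lemma on minimal responsiveness, and in particular its refinement direction: I must argue that a minimally responsive weak preference between two sets of equal total utility is necessarily an indifference, so that it is reversible. This reduces to showing that the minimally responsive relation is \emph{precisely} the transitive-reflexive closure of the elementary responsive comparisons, introducing no spurious strict comparison between equal-utility sets. Pinning down this characterization cleanly---and handling the bookkeeping of empty slots when $\mu(j)$ and $\mu'(j)$ have different cardinalities---is where the genuine work lies; everything else is a faithful transcription of the marriage-market arguments through the slot construction.
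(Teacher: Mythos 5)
Your proposal follows essentially the same route as the paper, which proves this theorem only by a sketch: split each college $j$ into $c_j$ slots to form $J^*$, build the associated tiered-slope market with $a$- and $b$-values constant across slots of a college, compute a man-optimal outcome, and transfer Pareto-stability, group strategyproofness, and the polynomial time bound through the reduction. Your key lemma on minimal responsiveness (monotonicity of $U_j$ along responsive chains, plus reversibility at equal total utility) is a correct identification of exactly the point the paper compresses into its one-directional remark and the phrase ``in a straightforward manner,'' so you have, if anything, supplied more of the argument than the paper does.
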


\end{document}